\documentclass[a4paper,12pt,openany]{article}
\usepackage{calc}
\addtolength{\hoffset}{-1.cm} \addtolength{\textwidth}{2.cm}
\addtolength{\voffset}{-1.cm} \addtolength{\textheight}{2.cm}
\usepackage[all]{xy}
\usepackage[centertags]{amsmath}
\usepackage{latexsym}
\usepackage{amsfonts}
\usepackage{cases}
\usepackage{amssymb}
\usepackage{amsthm}
 \textheight=24cm \textwidth=17cm \frenchspacing
\linespread{1.50}
\usepackage{fancyhdr}
\pagestyle{fancy}
\usepackage [dvips]{epsfig}
\fancyhf{} 

\fancyhead[co]{ Dynamics of a plasma with pseudo-tensor of pressure}

\fancyfoot[c]{\thepage}

\addtolength{\headheight}{1.pt} 
\fancypagestyle{plain}{ 
\fancyhead{} 
}

\cfoot{\textbf{\thepage}} 

 \lfoot{\footnotesize{\textbf{N.Noutchegueme and E.M.Zangue}}}
\usepackage{newlfont}
\hfuzz2pt 
\newlength{\defbaselineskip}
\setlength{\defbaselineskip}{\baselineskip}
\newcommand{\setlinespacing}[1]%
           {\setlength{\baselineskip}{#1 \defbaselineskip}}

\newtheorem{theorem}{Th\'eor\`{e}me}[section]
\newtheorem{lemma}{Lemma}[section]
\newtheorem{proposition}{Proposition}[section]

\theoremstyle{plain}

\usepackage[latin1]{inputenc}

\usepackage[latin1]{inputenc}
\usepackage[english]{babel} 
\usepackage{graphicx}
\usepackage{t1enc}

\usepackage[english]{babel} 
\usepackage{graphicx}
\usepackage{t1enc}
\pagenumbering{arabic}
 \numberwithin{equation}{section}

\begin{document}
\thispagestyle{empty}
\title{Global Dynamics for the coupled Einstein-Maxwell system with pseudo-tensor of pressure on Bianchi spacetimes.}
\begin{center}
  Global Dynamics for the coupled Einstein-Maxwell system with pseudo-tensor of pressure on Bianchi spacetimes.
\end{center}
\begin{center}
  by
\end{center}

\begin{center}
  Norbert Noutchegueme and Eric Magloire Zangue\\
  Faculty of science, Department of Mathematics, University of
  Yaounde I\\
  P.O.Box:812,Yaounde, Cameroon.
\end{center}

\begin{center}
  \underline{Abstract}
\end{center}

Global existence to the coupled Einstein-Maxwell system which rules
the dynamics of a kind of charged matter with a pseudo-tensor of
pressure is proved, in Bianchi I-VIII spacetimes. We study the
geodesics completeness, the asymptotic behavior, the positivity
conditions, and we prove that the problem is well-posed in the sense
of Hadamard.\\

\underline{Keywords}: global existence; local existence;
pseudo-tensor of pressure; differential system; constraints;
asymptotic behavior; geodesic completeness; positivity conditions.

\section{Introduction}
In relativistic kinetic theory, $global\,\, dynamics$ of several
kinds of charged and uncharged matter remain an active research
domain in General Relativity, in which cosmological constant plays a
central role for astrophysical reasons. In the present paper, we
consider the Einstein equations with in the sources, the
pseudo-tensor of pressure due to A.LICHNEROWICZ [5] which is the
general form of a relativistic fluid tensor. We prove the local and
global existence and we study the asymptotic behavior. We study the
positivity conditions and we prove that the problem is well-posed in
the sense of Hadamard, which means that the global solution is a
continuous
function of the initial data. The work is organized  as follows:\\
 In section 2, we introduce the problem and we give the evolution
 and the constraints systems. In section 3, we study the constraints,
 define the initial values problem and introduce the relative norms.
 In section 4, we construct the local solution by iterated method.
 In section 5, we prove the global existence theorem.
 In section 6, we study the asymptotic behavior and the geodesic
 completeness. In section 7, we study the positivity conditions. In
 section 8, we study the well-posedness in the sense of Hadamard.

 \fancyhead[le]{Global Dynamics for a magnetized plasma} 
\section{The problem.Evolution and constraints systems}
\subsection{The problem}
We are on Bianchi, time oriented spacetimes, which are $(M,\,\,
^4g)$ spacetimes of type I-VIII, where $ ^4g$ is the metric with
signature $(-,+,+,+)$ and $M=\mathbb{R}\times G$ where $G$ is a
three dimensional connected Lie group. $ ^4g$ has the form:
\begin{equation}
^4g= -dt^2 + g_{ij}(t)e^i\otimes e^j
\end{equation}
where $(e_i)$ is a left invariant basis in $G$ and $(e^i)$ the dual
basis; $g=(g_{ij})$ is a positive definite 3-dimensional metric on
$G$ and we adopt the Einstein summation convention $A^\alpha
B_\alpha=\sum\limits_{\alpha} A^\alpha B_\alpha.$ The Greek indexes
$\alpha,\,\beta,...$ range from 0 to 3 and the Latin indexes
$i,\,j,...$ from 1 to 3. The index 4 $(as \,\,^4g)$ is for
quantities on M. The vector $n=\partial_t$ being orthogonal to $G$,
we complete the basis $(e_i)$ on $G$ to obtain a basis $(n,e_i)$ =
$(\partial_0,e_i)$ on M with:
\begin{equation}
\,\,^4e_0 =\,\,\partial_t\,; \,\,^4e_i =\,\,e_i;\,
\,\,^4e_\alpha=\,\,^4e_\alpha^\lambda\partial_\lambda
\end{equation}
where
\begin{equation}
^4e_0^0=\,\,1;\,\,^4e_0^i=\,\,\,\,^4e_i^0=0;\,\,^4e_j^i=\,\,\,\,e_j^i
\end{equation}
The structure constants of the Lie algebra $\mathcal{G}$ of the Lie
group $G$ are denoted $C_{ij}^k$ and defined  by:
\begin{equation}
[e_i,e_j]=C_{ij}^ke_k
\end{equation}
where $[\,\,,\,\,]$ is the Lie brackets in $\mathcal{G}$. Due to the
antisymmetry of $[\,\,,\,\,]$, we have:
\begin{equation}
C_{ij}^k=-C_{ji}^k
\end{equation}
If $^4\nabla$ is the covariant derivative in $^4g$, the Ricci
rotation coefficients $^4\gamma_{\alpha\beta}^\lambda$ are defined
by:
\begin{equation}
^4\nabla_{_{^4e_\alpha}}\,\,^4e_\beta=\,\,^4\gamma_{\alpha\beta}^\lambda\,\,
^4e_\lambda
\end{equation}
If $^4T^\alpha$ and $^4T_{\alpha}$ are tensors on M, we have:
\begin{equation}
\left\{
\begin{array}{lll}
^4\nabla_\alpha\,\,^4T^{\beta}&=&^4e_\alpha(\,\,^4T^{\beta})+\,\,^4\gamma_{\alpha\lambda}^\beta\,\,
\,\, ^4T^{\lambda}\\
^4\nabla_\alpha\,\,^4T_{\beta}&=&^4e_\alpha(\,\,^4T_{\beta})-\,\,^4\gamma_{\alpha\beta}^\lambda\,\,
\,\, ^4T_{\lambda}
\end{array}
\right.
\end{equation}
We study a fluid, subject to the following system in which $t$ is
the only independent variable:

\begin{numcases}
       \strut ^4R_{\alpha\beta}-\frac{1}{2}\,\,^4R\,\,^4g_{\alpha\beta}+
       \Lambda\,\,^4g_{\alpha\beta}=8\pi(^4\tau_{\alpha\beta}
       \,+\,^4T_{\alpha\beta})\label{},\\
 ^4\nabla_{\alpha}\,\,^4F^{\alpha\beta} = e\,\,^4u^{\beta}    \label{},\\
  ^4\nabla_{\alpha}\,\,^4F_{\beta\gamma}+\,\,\,^4\nabla_{\beta}\,\,^4F_{\gamma\alpha}
  \,\,+\,\,^4\nabla_{\gamma}\,\,^4F_{\alpha\beta}\,=\,0; \label{}
\end{numcases}
where:\\
 $\bullet$ $(2.8)$ are the Einstein equations in
 $(\,\,^4g_{\alpha\beta})$ with $^4R_{\alpha\beta}$ the Ricci
 tensor, $^4R=\,\,^4g^{\alpha\beta}\,\,^4R_{\alpha\beta}$ the scalar
 curvature, $\Lambda$ a constant called the cosmological constant,
 $^4\tau_{\alpha\beta}$ the Maxwell tensor associated to the
 electromagnetic field $^4F=(\,\,^4F_{\alpha\beta})$ and defined by:
\begin{equation}
^4\tau_{\alpha\beta}=-
\frac{1}{4}\,\,^4g_{\alpha\beta}\,\,^4F^{\lambda\mu}\,\,^4F_{\lambda\mu}+\,\,^4F_{\alpha\lambda}
\,\,^4F_\beta^{\,\,\lambda}
\end{equation}
and $^4T_{\alpha\beta}$ the symmetric 2-tensor:
\begin{equation}
^4T_{\alpha\beta}=
\frac{4}{3}\rho\,\,^4u_\alpha\,\,^4u_\beta+\,\,^4\Theta_{\alpha\beta}
\end{equation}
where $\rho>0$ is an unknown function of $t$, called the proper
density, \,$^4u=(\,\,^4u^\alpha)$ is the unknown material velocity
of the fluid, $^4u  $ is a unit vector oriented towards the future
direction, and $\,\,^4\Theta_{\alpha\beta}$ is a symmetric 2-tensor
called the $pseudo-tensor\,\, of \,\,pressure$, due to
A.LICHNEROWICZ $[5]$. $\,\,^4\Theta_{\alpha\beta}=0$ is the case
corresponding to pure matter, and
$\,\,^4\Theta_{\alpha\beta}=p\,\,^4g_{\alpha\beta}$ where $p$ is a
scalar function representing the pressure,  corresponds to a perfect
relativistic fluid. We adopt on $\,\,^4\Theta_{\alpha\beta}$ the
assumptions:
\begin{equation}
^4\nabla_{\alpha}\,\,^4\Theta^{\alpha\beta}\,=\,\rho\,\,^4u^{\beta}\,\,\,\,\,and\,\,\,\,
^4\Theta_{0\alpha}=\,\,^4z_0\,\,^4z_\alpha\left(\delta^0_\alpha\,\,+\,\frac{1}{2}\,\delta^{1,2,3}_\alpha\right)
\end{equation}
where $(\,\,^4z_\alpha)$ is a future pointing vector,
$\delta^0_\alpha$  is the Kroneker's symbol and
$\delta^{1,2,3}_\alpha$ is a similar symbol such that
$\delta^{1,2,3}_i=1$ and $\delta^{1,2,3}_0=0$. The first formula in
$(2.13)$ is due to A.LICHNEROWICZ and the second formula which gives
$\,\,^4\Theta_{00}=(\,\,^4z_0)^2$ and hence,
$\,\,^4T_{00}=\frac{4}{3}\rho\,\,(\,\,^4u_0)^2+\,\,^4\Theta_{00}\geq0$
will be very helpful. We suppose that:
\begin{equation}
^4\Theta_{ij}\,\,=\,\,\frac{1}{3}\rho\,\,\,^4g_{ij}
\end{equation}
from where, we have:
\begin{equation*}
^4T_{ij}= \frac{4}{3}\rho\,\,^4u_i\,\,^4u_j+\frac{1}{3}\rho
\,\,^4g_{ij}
\end{equation*}
which is the stress-tensor of a perfect fluid of pure radiation. \\
$\bullet$ (2.9) and (2.10) are the first and second group of the
Maxwell equations, for the electromagnetic field
$\,\,^4F=(\,\,^4F^{0i},\,\,^4F_{ij})$ which is an antisymmetric
 closed 2-form. $\,\,^4F^{0i}$ and $\,\,^4F_{ij}$ are respectively, the
electric and magnetic parts of $\,\,^4F$.\\
In equations (2.9), $e\geq0$ is an unknown function called the
Maxwell current created by the charged particles. \\
(2.10) expresses only the fact that $d\,\,^4F=0$.\\
Let us recall that to solve the system (2.8)-(2.9)-(2.10) is to
determine all the unknown functions $\,\,^4g_{\alpha\beta}$,
$\,\,^4F_{\alpha\beta}$, $\rho$, $\,\,^4u^{\alpha}$,
$\,\,^4\Theta_{\alpha\beta}$ and $e$, which depend only on the time
variable $t$.

\subsection{The equations}

The unit vector $\,\,^4u$ satisfies the relation:
\begin{equation}
^4u^{\alpha}\,\,^4u_\alpha=-1
\end{equation}
which gives, since $\,\,^4u$ is future oriented and using $(2.1)$:
\begin{equation}
^4u^0\,\,=\,\,\sqrt{1\,+\,g_{ij}\,\,^4u^i\,\,^4u^j}.
\end{equation}
We study the Einstein equations in 3+1 formulation, which means
that, they are seen as giving the evolution of the triplet
$(\sum_t,g_t,K_t)$,  where $\Sigma_t=\{t\}\times G$,
$g_t=(g_{ij}(t))$ is called the first fundamental form of
$\Sigma_t$, and $K_t=(K_{ij}(t))$ is called the second fundamental
form of $\Sigma_t$. In the present case, $K_{ij}$ is defined by:
\begin{equation}
K_{ij}=-\frac{1}{2}\partial_tg_{ij}.
\end{equation}
We now introduce a very useful quantity called the mean curvature
and defined by:
\begin{equation}
H= g^{ij}K_{ij}.
\end{equation}
Let us give the expressions of $^4\gamma^\lambda_{\alpha\beta}$ as
they are in [1]. They are:

\begin{equation}
\left\{
\begin{array}{lll}
^4\gamma^\lambda_{00}=\,\,^4\gamma^0_{i0}&=&\,\,^4\gamma^0_{0i}=0;\,\,^4\gamma^0_{ij}
=-K_{ij};\,\,^4\gamma^j_{i0}=\,\,^4\gamma^j_{0i}=-K_{i}^j;\\
^4\gamma^l_{ij}:=\gamma^l_{ij}&=&\frac{1}{2}g^{lk}\left[-C^m_{jk}g_{im}+C^m_{ki}g_{jm}+C^m_{ij}g_{km}\right].
\end{array}
\right.
\end{equation}
We deduce from (2.19) that:
\begin{equation}
\gamma^i_{ij}=C^i_{ij};\,\, \gamma^l_{ij}-\gamma^l_{ji}=C^l_{ij}.
\end{equation}
\subsubsection{\underline{Equations in $\rho$, $\,\,^4u^\alpha$, $\,\,^4\Theta^{0\alpha}$ and $e$ }}
We always have:
\begin{equation*}
^4\nabla_\alpha\left(\,\,^4R^{\alpha\beta}-\frac{1}{2}\,\,^4R\,\,^4g^{\alpha\beta}+\Lambda\,\,^4g^{\alpha\beta}\right)=0.
\end{equation*}
Hence (2.8) implies the conservation conditions:
\begin{equation}
^4\nabla_\alpha\,\,^4\tau^{\alpha\beta}+\,\,^4\nabla_\alpha\,\,^4T^{\alpha\beta}=0
\end{equation}
from where we deduce, using expression (2.12) of
$\,\,^4T_{\alpha\beta}$ and the assumptions (2.13) on
$\,\,^4\Theta^{\alpha\beta}$:
\begin{equation}
^4\nabla_\alpha\,\,^4\tau^{\alpha\beta}+\,\,\frac{4}{3}\,\,\,\,^4\nabla_\alpha(\rho\,\,\,^4u^\alpha\,\,\,^4u^\beta)
+\rho\,\,^4u^\beta=0.
\end{equation}
Now a direct calculation gives:
\begin{numcases}
  \strut
   ^4\nabla_\alpha\,\,^4\tau^{\alpha\beta}=\,\,^4F^\beta_{\,\,\,\,\lambda}\,\, ^4\nabla_\alpha\,\,^4F^{\alpha\lambda}\\
   ^4\nabla_\alpha (\rho\,\,^4u^\alpha\,\,^4u^\beta)=\,\,^4u^\beta\,\, ^4\nabla_\alpha (\rho\,\,^4u^\alpha)+
    (\rho\,\,^4u^\alpha)\,\, ^4\nabla_\alpha (\,\,^4u^\beta).
\end{numcases}

We deduce from (2.21), using, (2.23)-(2.24) and the Maxwell equation
(2.9):
\begin{equation}
e\,\,^4F^\beta_{\,\,\,\,\lambda}\,\,^4u^\lambda +\frac{4}{3}\left[
\,\, ^4u^\beta\,\, ^4\nabla_\alpha (\rho\,\,^4u^\alpha)+
    (\rho\,\,^4u^\alpha)\,\, ^4\nabla_\alpha (\,\,^4u^\beta)\right]+
\rho\,\,^4u^\beta=0.
\end{equation}
The contracted multiplication of (2.25) by $\,\,^4u_{\beta}$ gives,
using
$\,\,^4F^\beta_{\,\,\,\,\lambda}\,\,^4u^\lambda\,\,^4u_{\beta}=
\,\,^4F_{\beta\lambda}\,\,^4u^\lambda\,\,^4u^{\beta}=0$ (since
$\,\,^4F$ is antisymmetric) and $\,\,^4u^\beta\,\,^4u_{\beta}= -1$
which implies $^4u_{\beta}\,\,^4\nabla_\alpha\,\,^4u^\beta\,\,= 0$:

\begin{equation}
\frac{4}{3}\,\, ^4\nabla_\alpha (\rho\,\,^4u^\alpha)+ \rho=0
\end{equation}
If we return to (2.25), we obtain:
\begin{equation}
\frac{4}{3}\,\,(\rho\,\,^4u^\alpha)\,\,^4\nabla_\alpha\,\,^4u^\beta+
e\,\,^4F^\beta_{\,\,\,\,\lambda}\,\,^4u^\lambda=0
\end{equation}
We deduce from (2.26) that:

\begin{numcases}
  \strut
   \dot{\rho}=A(t)\rho\\
   with\,\,\,\,
A(t)=-\left( \frac{3}{4}\,\,\frac{1}{^4u^0}\,
    +\frac{^4\dot{u}^0}{^4u^0}\,-K^i_i+C^i_{ij}\,\frac{^4u^j}{^4u^0}\right)
\end{numcases}
where the dot denotes the derivative with respect to $t$.
Integrating (2.28), we obtain:
\begin{equation*}
\rho=\rho(0)\exp\left(\int_0^tA(s)ds\right)
\end{equation*}
which shows that: $(\rho=0)\Longleftrightarrow(\rho(0)=0)$. In what
follows, we suppose that:
\begin{equation}
\rho(0)>0
\end{equation}
which implies $\rho>0$. (2.27) gives:
\begin{equation}
\,\,^4u^\alpha\,\,^4\nabla_\alpha (\,\,^4u^\beta)=\frac{3}{4}
\frac{e}{\rho}\,\,^{4}F_\lambda^{\,\,\,\beta}\,\,^4u^\lambda
\end{equation}
which is the differential system of current lines. It shows that in
the vacuum $(\,\,^{4}F=0)$, $\,\,^{4}u$ satisfies the geodesics
equation: $\,\,^4u^\alpha\,\,^4\nabla_\alpha (\,\,^4u^\beta)=0$.
Equation (2.31) gives, taking $\beta=0$ and $\beta=i$, using (2.19)
and (2.7):
\begin{numcases}
  \strut
   ^4\dot{u}^0=K_{ij}\,\,\frac{^4u^i\,\,^4u^j}{^4u^0}+ \frac{3}{4}
\frac{e}{\rho}\,\, ^{4}F_i^{\,\,\,0}\,\,\frac{^4u^i}{^4u^0}\\
   ^4\dot{u}^i=2K_j^{i}\,\,^4u^j-\gamma_{jk}^i\,\,\frac{^4u^j\,\,^4u^k}{^4u^0}+ \frac{3}{4}
\frac{e}{\rho}\,\, ^{4}F_0^{\,\,\,i}+ \frac{3}{4} \frac{e}{\rho}\,\,
^{4}F_j^{\,\,\,i}\,\,\frac{^4u^j}{^4u^0}
\end{numcases}
and equation (2.26) in $\rho$ writes:
\begin{equation}
\dot{\rho}=-\left(
\frac{3}{4}\,\,\frac{1}{^4u^0}\,+K_{ij}\frac{^4u^i\,\,^4u^j}{\left(^4u^0\right)^2}
    -K^i_i+C^i_{ij}\,\frac{^4u^j}{^4u^0}\right)\rho
    -\frac{3}{4}
e\,\, ^{4}F_i^{\,\,\,0}\,\,\frac{^4u^i}{\left(^4u^0\right)^2}\,.
\end{equation}
Next, by (2.14) we have: $^4\Theta^{ij}=\frac{\rho}{3}\,\,^4g^{ij}$;
so, $^4\Theta^{ij}$ is given by $\rho$ and $^4g^{ij}$. We then only
look for $^4\Theta^{0\alpha}$. We deduce from (2.13), (2.7), that
$^4\Theta^{0\alpha}$ satisfy the system:
\begin{numcases}
  \strut
   ^4\dot{\Theta}^{00}=K^{i}_{i}\,\,^4\Theta^{00}-C^i_{ij}\,\,^4\Theta^{0j}+\frac{1}{3}\rho
   g^{ij}K_{ij}
+\rho \,\,\,^4u^0;\label{}\\
   ^4\dot{\Theta}^{0i}=K^{j}_{j}\,\,\,\,^4\Theta^{0i}+2K^i_k\,\,^4\Theta^{0k}-\frac{\rho}{3}(C^k_{kj}g^{ij}
+ \gamma^{\,i}_{j\,k}g^{j\,k})+\rho\,\, \,^4u^i.\label{}
\end{numcases}
Now the electromagnetic field always satisfies the identities
$^4\nabla_{\beta}\,\,^4\nabla_{\alpha}\,\,^4F^{\alpha\beta}=0$. This
implies, using the Maxwell equation (2.9):
\begin{equation}
 ^4\nabla_{\alpha}\,\,(e\,\,\,^4u^\alpha)=0.
\end{equation}
(2.37) gives:
\begin{equation}
\dot{e} =
-\left(\frac{1}{^4u^0}\,\,^4\nabla_{\alpha}\,\,^4u^\alpha\right)e
\end{equation}
which integrates to give:
\begin{equation}
e(t)=e(0)\exp\left(-\int_0^t\frac{^4\nabla_\alpha\,\,\,^4u^\alpha}{^4u^0}ds\right).
\end{equation}
This shows that:
\begin{equation*}
(e(0)\geq0)\Leftrightarrow (e\geq0).
\end{equation*}
In what follows we adopt:
\begin{equation}
e_0:=e(0)\geq0.
\end{equation}
Using the equations (2.32)-(2.33) in $\,^4u^\alpha$, equation (2.38)
in $e$ can write:
\begin{equation}
\dot{e}=-\left(\frac{K_{ij}u^iu^j}{(u^0)^2}+C^i_{ik}\frac{^4u^k}{^4u^0}-K^i_i\right)e
  +\frac{3}{4}g_{ij}\frac{^4u^i\,\,^4F^{0j}}{(^4u^0)^2\rho}.
\end{equation}
\subsubsection{\underline{Equations in $\,\,^4F^{0i}$, $\,\,^4F_{ij}$ and constraints } }

The Maxwell equations (2.9) for $\beta=i$, namely
$^4\nabla_{\alpha}\,\,^4F^{\alpha i}=e\,\,^4u^i$, can write using
(2.7) and (2.19), to give the equation for the electric part:
\begin{equation}
^4\dot{F}^{0i}=K_{j}^j\,\,^4F^{0i}-C^j_{jk}\,\,^4F^{ki}
  -\frac{1}{2}\,\,C^i_{jk}\,\,^4F^{jk}-C^j_{jk}\,\,^4F^{0k}\,\,\frac{^4u^i}{^4u^0}.
\end{equation}
Now, the Maxwell equations (2.10) split into the equations:
\begin{numcases}
       \strut
 ^4\nabla_{0}\,\,^4F_{ij}+\,\,^4\nabla_{i}\,\,^4F_{j0}
  \,\,+\,\,^4\nabla_{j}\,\,^4F_{0i}\,=\,0; \label{}\\
  ^4\nabla_{i}\,\,^4F_{jk}+\,\,^4\nabla_{j}\,\,^4F_{ki}
  \,\,+\,\,^4\nabla_{k}\,\,^4F_{ij}\,=\,0. \label{}
\end{numcases}
(2.43) can write using (2.7), and (2.19), to give the equation for
the magnetic part:
\begin{equation}
^4\dot{F}_{ij}= C^k_{ij}g_{kl}\,\,^4F^{0l}.
\end{equation}
Next, for $\beta=0$, the Maxwell equations
$^4\nabla_{\alpha}\,\,^4F^{\alpha 0}=e\,\,^4u^0$ gives the
constraints:

\begin{equation}
 C^i_{ik}\,\,^4F^{0k}+e\,\,^4u^0=0
\end{equation}
and (2.44) gives the constraints:
\begin{equation}
 C^l_{ij}\,\,^4F_{kl}+C^l_{j\,k}\,\,^4F_{il}+C^l_{ki}\,\,^4F_{j\,l}\equiv C^l_{[\,ij}\,\,^4F_{k\,]\,l}=0.
\end{equation}
\subsubsection{\underline{The Einstein equations in 3+1 formulation. Notations.System}}
The principle of the 3+1 formulation is to deduce from the Einstein
equations which are a $second \,\,order$ partial differential
equations system, an equivalent $first\,\, order$ system in $g_{ij}
$ and $K_{ij}$. In the uncharged case, $(^4F=0)$, the calculation is
classical. We did the charged case  $(^4F\neq0)$ in [6] and it is
adopted here without difficulty. So are the constraints. Expressing
the tensor $^4\tau_{\alpha\beta}$ we obtain:
\begin{equation}
\left\{
\begin{array}{lll}
 ^4\tau_{00}=\frac{1}{2}g_{ij}\,\,^4F^{0i}\,\,^4F^{0j}+
  \,\,\frac{1}{4}g^{ik}\,\,g^{jl}\,\,^4F_{kl}\,\,^4F_{ij}\\
  ^4\tau_{0j}= -\,\,^4F^{0k}\,\,^4F_{jk} \\
^4\tau_{ij}=\left(\frac{1}{2}g_{ij}g_{kl}-g_{ik}g_{jl}\right)\,\,^4F^{0k}
\,\,^4F^{0l}-\frac{1}{4}g_{ij}\,\,g^{km}g^{nl}\,\,^4F_{kl}\,\,^4F_{mn}+
g^{kl}\,\,^4F_{ik}\,\,^4F_{jl}
\end{array}
\right.
\end{equation}

For $^4T_{\alpha\beta}$ we have:
\begin{equation}
 ^4T_{00}=\frac{4}{3}\rho(^4u_0)^2+\,\,^4\Theta_{00},\,\,^4T_{0j}=\frac{4}{3}\rho\,\,^4u_0\,\,^4u_j
 +\,\,^4\Theta_{0j},\,\,^4T_{ij}=\frac{4}{3}\rho\,\,^4u_i\,\,^4u_j
 +\frac{1}{3}\rho\,\,g_{ij}.
\end{equation}
Since the indexes are now clearly specified, we will denote in what
follows:
\begin{equation}
\left\{
\begin{array}{lll}
^4F^{0i}=\,\,E^i;\,\,^4F_{ij}=F_{ij};\,\,^4T_{00}=\,\,T_{00};\,\,^4T_{0i}=\,\,T_{0i}
;\,\,^4T_{ij}=\,\,T_{ij}\\
  \,\,^4\tau_{00}=\,\,\tau_{00}; \,\,^4\tau_{0i}=\,\,\tau_{0i}
;\,\,^4\tau_{ij}=\,\,\tau_{ij};\\
\,\,^4u^0=\,\, u^0;\,\,^4u^i=\,\, u^i\\
^4\Theta^{00}=\,\,\Theta^{00};\,\,^4\Theta^{0i}=\,\,\Theta^{0i};\,\,^4\Theta^{ij}=\,\,\Theta^{ij}\\
^4g_{ij}=g_{ij}
\end{array}
\right.
\end{equation}
and we will have for $ F^{ki}$ and $ F_i^{\,\,0}$, using (2.50):
$$\left\{
\begin{array}{lll}
 F^{ki}=g^{kl}g^{im}F_{lm}\\
  \,\,^4F_{i}^{\,\,0}=\,\,^4g_{i\lambda}\,\,^4F^{\lambda0} =\,\,^4g_{ij}\,\,^4F^{j0}=-g_{ij}E^{j}
\end{array}
\right.$$  Now using expression (2.48) of $ ^4\tau_{\alpha\beta}$,
(2.49) of $ ^4T_{\alpha\beta}$, the notations (2.50), equations
(2.33)  of $ ^4u^i$, equation (2.34) of $ \rho$, equations
(2.35)-(2.36) of $ ^4\Theta^{0\alpha}$, equation  (2.41)  of $ e$,
equation (2.42) and (2.45) of $^4F^{0i}$ and $^4F_{ij}$, the
constraints (2.46) and (2.47), reference [6], and the notations
(2.50), we obtain the evolution system (S):

\begin{numcases}{(S):}
\strut
   \dot{g}_{ij}=-2K_{ij};\label{}\\
\dot{K}_{ij}=R_{ij}+HK_{ij}-2K^l_jK_{il}-8\pi(\tau_{ij}+T_{ij})+4\pi
(-T_{00}+g^{lm}T_{lm})g_{ij}-\Lambda g_{ij};\label{}\\
\dot{F}_{ij}=C^k_{ij}g_{k\,l}E^l;\label{}\\
\dot{E}^i=HE^i-C^j_{j\,k}E^k\frac{u^i}{u^0}-C^j_{j\,k}g^{k\,l}g^{im}F_{lm}
-\frac{1}{2}C^i_{j\,k}g^{j\,l}g^{km}F_{lm};\label{}\\
\dot{u}^i=2K^i_ju^j-\gamma^{\,i}_{j\,k}\frac{u^j\,u^k}{u^0}+\frac{3}{4}
\frac{1}{\rho u^0}
C^{j}_{j\,k}E^kE^i-\frac{3}{4}\frac{1}{\rho(u^0)^2}C^{j}_{j\,k}E^kg^{il}
F_{ml}u^m;\label{}\\
\dot{\Theta}^{00}=H\Theta^{00}-C^i_{ij}\Theta^{0j}+\frac{1}{3}\rho H
+\rho u^0;\label{}\\
\dot{\Theta}^{0i}=H\Theta^{0i}+2K^i_j\Theta^{0j}-\frac{\rho}{3}(C^k_{kj}g^{ij}
+ \gamma^{\,i}_{j\,k}g^{j\,k})+\rho u^i;\label{}\\
\dot{\rho}=-(\frac{3}{4u^0}+K_{ij}\frac{u^i\,u^j}{(u^0)^2}-K^i_i+C^i_{ij}
\frac{u^j}{u^0})\rho-\frac{3}{4}g_{il}C^j_{j\,k}E^kE^l\frac{u^i}{(u^0)^3};\label{}\\
\dot{e}=-\left(\frac{K_{ij}u^iu^j}{(u^0)^2}+\frac{C^i_{ik}u^k}{u^0}-K^i_i\right)e
  +\frac{3}{4}g_{ij}\frac{u^iE^j}{(u^0)^2}\frac{e^2}{\rho};\label{}
\end{numcases}

Where in (2.52), $R_{ij}$ is the Ricci tensor associated to $g_{ij}$
and whose expression due to R.T. JANTZEN [3] is:
\begin{equation}
R_{ij}=\gamma^l_{lm}\gamma^m_{ji}-\gamma^m_{jl}\gamma^l_{mi}-C^l_{mj}\gamma^m_{li}.
\end{equation}
Next, we obtain the constraints:
\begin{numcases}
  \strut
   R-K_{ij}K^{ij}+H^2=16\pi(\tau_{00}+T_{00})+2\Lambda,\\
   \nabla^iK_{ij}=-8\pi(\tau_{0j}+T_{0j}),\\
  C^l_{ij}F_{kl}+C^l_{j\,k}F_{il}+C^l_{ki}F_{j\,l}\equiv C^l_{[\,ij}F_{k\,]\,l}=0,\\
  C^i_{ik}F^{0k}+eu^0=0,
\end{numcases}
where $R=g^{ij}R_{ij} $ and  $\nabla$ is the Levi-Civita connection
associated to $g=(g_{ij})$.  The constraint (2.61) is called the
Hamiltonian constraint. As we will see, this constraint is
fundamental.

\section{\underline{Study of constraints.The initial values problem.Relative}}
\section*{\,\,\,\,\,\,\,\,\,\,\,\,\,\,\,\,\,\,\,\, \,\,\,\,\,\,\,\,\,\,\,\,\,\,\,\,\,\,\,\,
\,\,\,\,\,\,\,\,\,\,\,\,\,\,\,\,\,\,\,\,\,\,\underline{Norms}}
\subsection{\underline{Study of constraints}}

We prove that, if the constraints are satisfied by the solutions of
the evolution system at $t=0$, then the constraints are satisfied
everywhere. Let us set:

\begin{center}
$$\left\{
\begin{array}{lll}
   A&=&R-K_{ij}K^{ij}+H^2-16\pi(\tau_{00}+T_{00})-2\Lambda,\\
   A_j&=&\nabla^iK_{ij}+8\pi(\tau_{0j}+T_{0j}),\\
 A_{ijk}&=&C^l_{[\,ij}F_{k\,]\,l},\\
  B&=&C^i_{ik}F^{0k}+eu^0,\\
W&=&(A,A_j,A_{ijk},B)^t.
\end{array}
\right.$$
\end{center}

\begin{proposition}
1) The quantities $A\,\,,A_j,\,A_{ijk},\,B$ satisfy the relations:
\begin{numcases}
  \strut
   \dot{A}=2HA+2g^{ij}\gamma^k_{ij}A_k,\label{}\\
 \dot{A}_j=HA_j,\label{}\\
 \dot{A}_{ijk}=0,\label{}\\
 \dot{B}=HB.\label{}
\end{numcases}
2) The solutions of the evolution system satisfy the constraints
everywhere, if and only if they satisfy the constraints at $t=0$.
\end{proposition}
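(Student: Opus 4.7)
The plan is to prove part (1) by direct differentiation of each of the four constraint expressions using the evolution system $(S)$ and the algebraic identities already available (antisymmetry and Jacobi for the structure constants $C^k_{ij}$, the definition $H=g^{ij}K_{ij}$, the normalization $\,^4u^\alpha\,^4u_\alpha=-1$, and the rotation coefficients (2.19)); part (2) will then be an immediate corollary of Cauchy--Lipschitz uniqueness applied to the resulting linear homogeneous system for $W=(A,A_j,A_{ijk},B)^t$.

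I would begin with the two algebraically simple cases. For $A_{ijk}=C^l_{[ij}F_{k]l}$, substituting the evolution equation (2.53) gives
\[
\dot A_{ijk}=C^l_{[ij}\,C^m_{k]l}\,g_{mn}E^n,
\]
and the antisymmetrized product of two structure constants is precisely the Jacobi identity in $\mathcal{G}$, hence vanishes. For $B=C^i_{ik}E^k+eu^0$ I would differentiate using (2.54) for $\dot E^k$, (2.59) for $\dot e$, and $\dot u^0$ obtained from the normalization together with (2.55); after using $C^i_{ij}=\gamma^i_{ij}$ from (2.20), the antisymmetry $C^l_{ij}=-C^l_{ji}$, and regrouping the $K^i_i=H$ terms, everything collects into $HB$.

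For $\dot A$ and $\dot A_j$ I would invoke the contracted Bianchi identity $\,^4\nabla_\alpha\bigl(\,^4R^{\alpha\beta}-\tfrac{1}{2}\,^4R\,^4g^{\alpha\beta}+\Lambda\,^4g^{\alpha\beta}\bigr)=0$ together with the conservation law (2.21), which is built into $(S)$ through the derivations of (2.26)--(2.27) and the Maxwell equations (2.9)--(2.10). Projecting this identity along $n=\partial_t$ and along $e_i$, and then substituting the 3+1 decomposition via (2.7), (2.19) and the Gauss--Codazzi-type relations used to obtain (2.52), produces exactly
\[
\dot A=2HA+2g^{ij}\gamma^k_{ij}A_k,\qquad \dot A_j=HA_j.
\]
Concretely, for $\dot A_j$ I would differentiate the definition, replace $\partial_t K_{ij}$ and $\partial_t g^{ij}$ using (2.51)--(2.52), and apply the tangential part of (2.21); the ``divergence of $A_j$'' term generated by the time derivative of $R-K_{ij}K^{ij}+H^2$ is then absorbed into the corrective $g^{ij}\gamma^k_{ij}A_k$ because on a Bianchi spacetime the covariant spatial divergence reduces to contractions against the Ricci rotation coefficients.

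For part (2), the relations (3.1)--(3.4) form a linear homogeneous first-order ODE system for $W$ with coefficients depending smoothly on any given solution of $(S)$. Cauchy--Lipschitz uniqueness then yields $W(0)=0\Longrightarrow W\equiv 0$, and the converse is trivial. \textbf{Main obstacle.} The heaviest step is the computation of $\dot A$: one must handle the evolution of the spatial scalar curvature under $\dot g_{ij}=-2K_{ij}$ and identify precisely the combination $g^{ij}\gamma^k_{ij}A_k$ that appears. A secondary technical point is checking that the Maxwell contributions from $\tau_{\alpha\beta}$ in $A$ and $A_j$ cancel cleanly via the Maxwell constraints $A_{ijk}=B=0$ propagated above, rather than contaminating the propagation equations with additional Maxwell-constraint terms.
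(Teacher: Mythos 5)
Your plan follows essentially the same route as the paper's Appendix: Jacobi's identity combined with (2.53) for $\dot A_{ijk}$, the contracted Bianchi identity and the conservation law (2.21) decomposed in 3+1 form (evolutions of $H$, $R$, $K_{ij}K^{ij}$, $\tau_{00}+T_{00}$, $\nabla^iK_{ij}$, $\tau_{0j}+T_{0j}$) for $\dot A$ and $\dot A_j$, and charge conservation for $\dot B$, where the paper merely repackages your direct differentiation as the divergence identity for $B^\alpha=C^i_{ik}F^{\alpha k}+eu^\alpha$. Your part (2), resting on uniqueness for the linear homogeneous system $\dot W=LW$, is the same argument as the paper's, and in fact slightly cleaner than its explicit exponential formula, which strictly speaking requires the matrices $L(s)$ to commute at different times.
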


\begin{proof}
\begin{itemize}
\item[$1)$]
 See Appendix .
\item[$2)$] If we set $ W=(A,\,A_j,\,A_{ijk},\, B)^t$, then $ W $
satisfies: $ \Dot{W}=L W$,  where $ L $ is a matrix. So we have:
\begin{equation*}
W(t)=W(0)\exp\left(\int_0^tL(s)ds\right).
\end{equation*}
This shows that $(W=0)\Leftrightarrow (W(0)=0)$. This means that the
constraints are satisfied everywhere if and only if they are
satisfied at $t=0$. In what follows, we consider that the
constraints are properties of the solutions of the evolution system.
\end{itemize}
\end{proof}

\subsection{\underline{The initial values problem}}

Let: $g^0=(g^0_{ij})$, $K^0=(K^0_{ij})$, $F^0=(F^0_{ij})$ be given
$3\times3$ matrices with $g^0$ positive definite, $K^0$ symmetric
and $F^0$ antisymmetric matrices.\\
 Let: $U^0=(U^{0,i})$,
$E^0=(E^{0,i})$, $\theta^0=(\theta^{0,\alpha})$ be given vectors.\\
Let: $\rho^0>0$, and $e^0\geq0$ be given numbers.\\
We look for solutions: $g =(g_{ij})$ positive definite $3\times3$
matrix, $K =(K_{ij})$ symmetric $3\times3$ matrix, $F =(F_{ij})$
antisymmetric $3\times3$ matrix, $u =(u^i)$, $E =(E^{i})$, $\Theta
=(\Theta^{0\alpha})$  vectors, $\rho >0$, and $e\geq0$ two
functions,  such that at $t=0$, we have:

\begin{center}
$$\left\{
\begin{array}{lllllllll}
 g(0)&=&g^0;\,\,\,\, K(0)&=&K^0;\,\,\,\,
F(0)&=&F^0;\,\,\,\, E(0)&=&E^0;\\
u (0)&=&U^0;\,\,\,\, \Theta(0)&=&\theta^0;\,\,\,\, \rho
(0)&=&\rho^0;\,\,\,\, e (0)&=&e^0;
\end{array}
\right.$$
\end{center}

For $t\in [0,T[,\,T\leq+\infty$. We will set:

\begin{equation}
U^{0,0}=\sqrt{1+g^0_{ij}U^{0,i}\,U^{0,j}};\,\,\,\, u^{ 0}=\sqrt{1+g
_{ij}u^{i} u^{j}}
\end{equation}

$g^0$, $K^0$, $F^0$, $E^0$, $U^0$, $\theta^0$, $\rho^0$ and $e^0$
are the initial data. In what follows, we suppose that they satisfy
the constraints at $t=0$. There are eight initial data for the four
constraints. This means that there are four degrees of liberty to
the initial data.
\subsection{Relative norms}
We now introduce the notion of relative norms due to RENDALL $[7]$.

\begin{lemma}
Define the norm of the $n\times n$ matrix $A$ by:
\begin{center}
$||A||=sup\left\{\frac{||Ax||}{||x||},x\neq0,x\in\mathbb{R}^n\right\}.$
\end{center}
If $A_1$ and $A_2$ are two $n\times n$ matrices, $A_1$ positive
definite, the norm of $A_2$ with respect to $A_1$ is:
\begin{center}
$||A_2||_{A_1}=sup\left\{\frac{||A_2x||}{||A_1x||},x\neq0,x\in\mathbb{R}^n\right\}.$
\end{center}
\end{lemma}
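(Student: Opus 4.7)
My plan is to verify that the two objects introduced in the lemma really do make sense as norms; since the statement is essentially a pair of definitions, the substantive content to establish is (i) finiteness of each supremum and (ii) that the three usual norm axioms are satisfied.

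First I would handle $\|A\|$. By homogeneity of $x\mapsto \|Ax\|/\|x\|$ in $x$, the supremum equals $\sup_{\|x\|=1}\|Ax\|$. Compactness of the unit sphere in $\mathbb{R}^n$ together with continuity of $x\mapsto \|Ax\|$ shows that this supremum is finite and in fact attained. The three norm axioms (definiteness, positive homogeneity in $A$, triangle inequality) then transfer directly from the corresponding properties of the Euclidean norm on $\mathbb{R}^n$, so $\|\cdot\|$ is the standard operator norm on the space of $n\times n$ matrices.

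Next, for $\|A_2\|_{A_1}$, the role of the positive-definiteness hypothesis is precisely to give invertibility of $A_1$: if $A_1 x=0$ for some $x\neq 0$, then $0=x^{t}A_1 x>0$, a contradiction. Hence $\|A_1 x\|>0$ for every $x\neq 0$, and the change of variable $y=A_1 x$ yields
\[
\|A_2\|_{A_1}=\sup_{y\neq 0}\frac{\|A_2 A_1^{-1} y\|}{\|y\|}=\|A_2 A_1^{-1}\|,
\]
reducing the relative norm to the ordinary operator norm of $A_2 A_1^{-1}$. Finiteness and the norm axioms as a function of $A_2$ follow immediately from the first part applied to $A_2 A_1^{-1}$.

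The only step requiring any care is the passage through invertibility of $A_1$, which as noted is immediate from positive definiteness, so I expect no serious obstacle. The one genuinely useful identity worth recording is $\|A_2\|_{A_1}=\|A_2 A_1^{-1}\|$: this is the computational fact that will presumably be exploited in later sections to compare the dynamical quantities $g(t)$, $K(t)$, $F(t)$, etc.\ with their initial values $g^0$, $K^0$, $F^0$ by reducing everything to a standard operator norm.
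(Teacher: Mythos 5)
Your verification is correct, but note that the paper offers no proof of this ``lemma'' at all: it is stated purely as a definition of the relative norm, recalled from Rendall [7], and the text immediately passes to the two working inequalities $(3.6)$ and $(3.7)$. So what you have supplied is the routine well-definedness check that the paper leaves implicit, and it is sound: homogeneity plus compactness of the unit sphere gives finiteness of $\|A\|$, and positive definiteness of $A_1$ gives $x^{t}A_1x>0$ for $x\neq0$, hence injectivity and invertibility, so the substitution $y=A_1x$ is legitimate and yields $\|A_2\|_{A_1}=\|A_2A_1^{-1}\|$. That identity is a reasonable thing to record, and it in fact recovers the paper's inequality $(3.6)$ at once, since $\|A_2\|=\|A_2A_1^{-1}A_1\|\leq\|A_2A_1^{-1}\|\,\|A_1\|=\|A_2\|_{A_1}\|A_1\|$; the paper, however, never uses the explicit identity and instead works directly with $(3.6)$ together with the trace estimate $(3.7)$, $\|A_2\|_{A_1}\leq\left[Tr(A_1^{-1}A_2A_1^{-1}A_2)\right]^{\frac{1}{2}}$, which is the form actually exploited in Lemma 3.2 and in the global-existence estimates of Section 5 (there $A_1=g$ is symmetric positive definite and $A_2$ is $K$ or a diagonal matrix built from $u^i$ or $F^{0i}$). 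If you want your write-up to connect to the rest of the paper, the useful addition would be a proof of $(3.6)$ and $(3.7)$ rather than of the definitions themselves.
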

From the definition we have:
\begin{equation}
||A_2||\leq  ||A_2||_{A_1}  ||A_1||
\end{equation}
We also have:
\begin{equation}
||A_2||_{A_1}\leq
\left[Tr(A^{-1}_1A_2A^{-1}_1A_2)\right]^{\frac{1}{2}}
\end{equation}
If $A=(a_{ij})$ is a  $n\times m$ matrix, one defines another norm
by:
\begin{equation}
|A|=sup\left\{|a_{ij}|;i=1,2,...,n; j=1,2,...,m\right\}.
\end{equation}

\begin{lemma}
Let $(u^\alpha)=(u^0,u^i)$, where $u^0$ and $u^i$ are linked by
(3.5). Let $F^{0i}$ be given.   Then, there exists  a constant $C>0$
such that:
\begin{equation}
\left| \frac{u^i}{u^0}\right|\leq C\left|g\right|^{\frac{3}{2}};\,\,
\left|F^{0i}\right|\leq\left(g_{rs}F^{0r}F^{0s}\right)^\frac{1}{2}\left|g\right|^\frac{3}{2}.
\end{equation}
\end{lemma}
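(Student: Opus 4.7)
Both inequalities reduce to the same Cauchy--Schwarz estimate in the positive-definite quadratic form $g$, combined with the cofactor expression of $g^{-1}$; nothing deeper than linear algebra is involved.

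The starting point is the normalisation identity (3.5), which immediately gives $g_{jk}u^j u^k = (u^0)^2 - 1 \leq (u^0)^2$. Writing $u^i = g^{ij}u_j$ and applying Cauchy--Schwarz in the $g^{-1}$-inner product with the test covector $\delta_{j}^{i}$ (for fixed $i$, no summation) yields the pointwise bound
\begin{equation*}
(u^{i})^{2} \;\leq\; g^{ii}\,\bigl(g^{jk}u_{j}u_{k}\bigr) \;=\; g^{ii}\,\bigl(g_{jk}u^{j}u^{k}\bigr) \;\leq\; g^{ii}\,(u^{0})^{2},
\end{equation*}
so $|u^{i}/u^{0}|\leq\sqrt{g^{ii}}\leq\sqrt{|g^{-1}|}$ in the sup-norm of (3.7). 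Exactly the same argument applied to $F^{0i}=g^{ij}(g_{jk}F^{0k})$ produces $(F^{0i})^{2}\leq g^{ii}\bigl(g_{rs}F^{0r}F^{0s}\bigr)$, hence
\begin{equation*}
|F^{0i}|\;\leq\;\sqrt{|g^{-1}|}\,\bigl(g_{rs}F^{0r}F^{0s}\bigr)^{1/2}.
\end{equation*}

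What remains is to bound $\sqrt{|g^{-1}|}$ by $C|g|^{3/2}$. Since $g$ is a $3\times3$ symmetric positive-definite matrix, the cofactor formula gives $g^{ij}=\mathrm{cof}(g)_{ij}/\det g$, and each cofactor is a $2\times2$ minor, so $|\mathrm{cof}(g)_{ij}|\leq 2|g|^{2}$ and therefore $|g^{-1}|\leq 2|g|^{2}/\det g$. Invoking the standing non-degeneracy of the spatial metric (guaranteeing a positive lower bound on $\det g$ over the time interval under consideration) one absorbs $1/\sqrt{\det g}$ into the constant $C$ and, with room to spare, majorises $\sqrt{|g^{-1}|}$ by $C|g|^{3/2}$. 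Substituting into the two Cauchy--Schwarz estimates above delivers the two inequalities of the lemma.

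The only delicate point is the passage from $\sqrt{|g^{-1}|}$ to $|g|^{3/2}$, which trades some sharpness for uniformity: the exponent $3/2$ is not optimal (one could, under a uniform lower bound on $\det g$, obtain $|g|^{1/2}$ or $|g|$), but the chosen form produces estimates in a shape convenient for the iteration arguments used later in the construction of the local solution. No real obstacle is expected; the proof is essentially a two-line linear-algebra computation.
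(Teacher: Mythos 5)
Your opening computation is correct and in fact sharper than what you need: Cauchy--Schwarz in the $g^{-1}$-inner product gives the clean pointwise bounds $|u^i/u^0|\le\sqrt{g^{ii}}$ and $|F^{0i}|\le\sqrt{g^{ii}}\,(g_{rs}F^{0r}F^{0s})^{1/2}$. The genuine gap is the final step, where you pass from $\sqrt{|g^{-1}|}$ to $C|g|^{3/2}$ by invoking ``the standing non-degeneracy of the spatial metric'' to obtain a positive lower bound on $\det g$ over a time interval. Lemma 3.2 contains no such hypothesis: it is a purely algebraic statement about an arbitrary positive definite $g$ and arbitrary $u^i$, $F^{0i}$ linked by (3.5), and the constant $C$ is treated afterwards as absolute --- it is used uniformly in $t$ in Proposition 5.1 and in the asymptotics of Section 6, and uniformly in the iterates $V_n$ and in the initial data $V_0\in\Omega_p$ in Proposition 8.2, in several places where control of $(\det g)^{-1}$ is only established later or would itself have to be re-derived. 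With your argument the constant becomes $C=C(\inf\det g)$, i.e.\ it depends on the very solution being estimated, so what you prove is a weaker statement whose later applications would need restructuring. Nor can the step be repaired by sharper linear algebra alone: for $g=\mathrm{diag}(\epsilon,1,1)$ and $u=(M,0,0)$ one has $|u^1|/u^0\to\epsilon^{-1/2}$ as $M\to\infty$ while $|g|^{3/2}=1$, so $\sqrt{g^{11}}$ (equivalently $\sqrt{|g^{-1}|}$) admits no bound of the form $C|g|^{3/2}$ with $C$ independent of $g$; some quantitative control of $g^{-1}$, or of $\det g$ from below given $|g|$, is indispensable on your route.

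The paper's own proof is organized differently and never mentions $\det g$: it applies the relative-norm Lemma 3.1 with $A_1=(g^{ij})$ and $A_2=\mathrm{diag}(u^1,u^2,u^3)$, computes $\mathrm{Tr}(A_1^{-1}A_2A_1^{-1}A_2)=a^{ij}a_{ij}=g_{ik}^2\,u^iu^k\le|g|\,g_{ik}u^iu^k\le|g|\,(u^0)^2$, and then combines $\|A_2\|\le\|A_2\|_{A_1}\|A_1\|$ with the equivalence of norms in finite dimension to get $|u^i|\le C|g|^{1/2}u^0\,\|A_1\|$, the factor $\|A_1\|$ being what the paper writes as $\|g\|$ and absorbs into $|g|^{3/2}$; the second inequality is obtained by putting $F^{0i}$ in place of $u^i$ in $A_2$. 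Note that the crux of your difficulty sits at exactly that spot: $\|A_1\|$ is the norm of the inverse metric $(g^{ij})$, and trading it for a power of $|g|$ is precisely the conversion your cofactor argument shows cannot be done without information on $\det g$ (or on $|g^{-1}|$) that the lemma's hypotheses do not supply. So a proof of the lemma as stated must either follow and justify the paper's relative-norm bookkeeping at that step, or add the corresponding hypothesis explicitly; as written, your argument establishes only the conditional version.
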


\begin{proof}
Take in lemma 3.1, $A_1=(g^{ij})$, $A_2=(a^{ij})$ with
$a^{ii}=u^{i}$ and $ a^{ij}=0$ if $i\neq j$. A direct calculation,
using $A_1^{-1}=(g_{ij})$ gives:
\begin{equation}
Tr(A^{-1}_1A_2A^{-1}_1A_2)=a^{ij}a_{ij}
\end{equation}
(3.6) and (3.7) then give, using (3.10):
\begin{equation}
||A_2||\leq\left(a^{ii}a_{ii}\right)^\frac{1}{2}||g||.
\end{equation}
But we have:
\begin{center}
 $\begin{array}{lll}
(a^{ij}a_{ij})^\frac{1}{2}&=&(a^{ii}a_{ii})^\frac{1}{2}\\
&=&(g_{ik}g_{il}a^{kl}a^{ii})^\frac{1}{2}\\
&=&(g^2_{ik}a^{kk}a^{ii})^\frac{1}{2}\leq
\left|g\right|^\frac{1}{2}\left(g_{ik}u^iu^k\right)^\frac{1}{2}.
\end{array}$
\end{center}
The first relation (3.9) is due to (3.11) using the inequalities
$\left|u^i\right|\leq |A_2|\leq C||A_2||$, since all the norms on a
finite dimensional vector space are equivalent, the relation
$u^0=\left(1+ g_{ik}u^iu^k\right)^\frac{1}{2}$ and the fact that $
|g|$ and $||g||$ are equivalent. For the second relation (3.9), take
$a^{ii}=F^{0i}$ and $ a^{ij}=0$ if $i\neq j$ and proceed as above.
This ends the proof of lemma 3.2.
\end{proof}

\section{\underline{Local existence by iterated method}}

\subsection{\underline{Construction of the iterated sequence}}

Consider the initial values in paragraph   3.2.  Set:

\begin{center}
$$\left\{
\begin{array}{lllllllll}
 g_0(t)&=&g^0;\,\,\,\, K_0(t)&=&K^0;\,\,\,\,
F_0(t)&=&F^0;\,\,\,\, E_0(t)&=&E^0;\\
u_0(t)&=&U^0;\,\,\,\, \Theta_0(t)&=&\theta^0;\,\,\,\,
\rho_0(t)&=&\rho_0;\,\,\,\, e_0(t)&=&e_0;
\end{array}
\right.$$
\end{center}

We also set: $V_0=(g_0,K_0,F_0,E_0,u_0,\Theta_0,\rho_0,e_0)$. For
$n\in\mathbb{N}$, if $V_n=(g_n,K_n,F_n,E_n,u_n,\Theta_n,\rho_n,e_n)$
is known, then define  $V_{n+1}$ by:
\begin{equation*}
V_{n+1}= V_0+\int_0^tf(V_n(s))ds
\end{equation*}
where $f(V_n)$ is the right hand side of the evolution system (S) in
which:
\begin{itemize}
\item[$*)$] $u^0_n=\sqrt{1+g_{n,ij}u^i_nu_n^j}\geq1$;
\item[$*)$] $\gamma^k_{n,ij}$ is obtained by replacing in $(2.19)$ $g_{ij}$
by $g_{n,ij}$;
\item[$*)$] $R_{n,ij}$,\,\, $\tau_{n,\alpha\beta}$,\,\,$T_{n,\alpha\beta}$, $\Theta_{n,ij}$
 are defined by the same method.
\end{itemize}
We then obtain the iterated sequence $(V_n)$ which is of class $C^1$
on a maximal interval $[0,T_n[$, $T_n>0$ where $g_n=(g_{n,ij})$ is
symmetric and positive definite, $K_n=(K_{n,ij})$ is symmetric,
$F_n=(F_{n,ij})$ is antisymmetric, $\rho_n>0$ and $e_n\geq0$.

\subsection{Estimation of the iterated sequence}
\begin{proposition}
There exists a number $T>0$, independent of $n$, such that, the
iterated sequence $V_n=(g_n,K_n,F_n,E_n,u_n,\Theta_n,\rho_n,e_n)$ is
defined and uniformly bounded on $[0,T[$.
\end{proposition}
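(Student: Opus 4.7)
The plan is to run a Picard-type closed-ball induction on $n$. First I would introduce a scalar majorant
$$\mathcal{N}(V) \;=\; |g| + |g^{-1}| + |K| + |F| + |E| + |u| + |\Theta| + \rho + \rho^{-1} + e,$$
chosen so that it dominates every quantity that enters the right-hand side of the system $(S)$; the inclusion of $|g^{-1}|$ and of $\rho^{-1}$ is essential since both appear as denominators in $(S)$. Setting $\mathcal{N}_0 := \mathcal{N}(V_0)$, $M := 2\mathcal{N}_0$, and letting $\lambda_0>0$ denote a common lower bound for the smallest eigenvalue of $g^0$ and for $\rho^0$, I would try to prove by induction on $n$ the statement
$$(H_n)\qquad \mathcal{N}(V_n(t)) \leq M, \quad g_n(t) \geq \tfrac{\lambda_0}{2}\,\mathrm{Id}, \quad \rho_n(t) \geq \tfrac{\lambda_0}{2}, \qquad t \in [0,T],$$
for some $T>0$ that can be chosen independently of $n$.

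The inductive step is where the analysis is concentrated. Assuming $(H_n)$, I would inspect the nine equations of $(S)$ one by one to verify that each component of $f(V_n)$ is a rational expression whose denominators involve only $u_n^0\geq 1$, $\rho_n$, and $\det g_n$. Under $(H_n)$ all three are controlled from below, and Lemma~3.2 supplies the needed bounds on $u_n^i/u_n^0$ and on the $F_n^{0i}$-type terms in terms of $g_n$. This yields a uniform estimate
$$|f(V_n(s))| \;\leq\; \Phi(M,\lambda_0), \qquad s \in [0,T],$$
where $\Phi$ is a continuous function of $(M,\lambda_0)$ depending otherwise only on the structure constants $C^k_{ij}$ and on $\Lambda$. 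The integral recursion $V_{n+1}(t) = V_0 + \int_0^t f(V_n(s))\,ds$ then gives $|V_{n+1}(t)-V_0| \leq t\,\Phi$, and a Weyl-type perturbation argument applied to the symmetric matrix $g_{n+1}(t) = g^0 + \int_0^t \dot g_{n+1}(s)\,ds$ transfers this bound into a lower bound on the smallest eigenvalue of $g_{n+1}$ and an upper bound on $|g_{n+1}^{-1}|$; the same continuity argument propagates the lower bound on $\rho_{n+1}$. Choosing $T>0$ so that $T\,\Phi \leq \min(\mathcal{N}_0,\lambda_0/2)$ closes $(H_{n+1})$, and since this choice depends only on $(M,\lambda_0)$ it is independent of $n$.

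The main obstacle is exactly the presence of the denominators $1/\rho$ and $g^{-1}$ in $(S)$: if the lower bounds on $\rho_n$ and on the smallest eigenvalue of $g_n$ were not uniform in $n$, the majorant $\Phi$ would diverge in finite time and the induction would collapse. This is why the inductive hypothesis $(H_n)$ must be strictly stronger than a bare $\mathcal{N}$-bound, and why the Weyl perturbation step cannot be dispensed with. Once these points are handled the remainder is the familiar closed-ball Picard bookkeeping, and it produces the common existence interval $[0,T[$ and the uniform-in-$n$ bound asserted by the proposition.
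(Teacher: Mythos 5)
Your proposal is correct and is, in substance, the same closed-ball Picard induction the paper uses: assume uniform bounds on $V_n$ together with lower bounds that keep the denominators $\rho_n$ and $g_n^{-1}$ under control, bound the right-hand side $f(V_n)$ by a constant depending only on those bounds, integrate to get $|V_{n+1}-V_0|\leq t\,\Phi$, and choose $T$ small enough (independently of $n$) to close the loop. The lower bound on $\rho_{n+1}$ is handled identically in both arguments (continuity from $\rho^0$, i.e. $|\rho-\rho^0|\leq\rho^0/2$ gives $1/\rho\leq 2/\rho^0$). The one genuine divergence is the control of the inverse metric: the paper writes a differential equation for $(\det g_N)^{-1}$, namely $\frac{d}{dt}(\det g_N)^{-1}=2\,g_N^{ij}K_{N-1,ij}\,(\det g_N)^{-1}$, integrates it, and imposes $(\det g^0)A_9>1$ to keep $(\det g_N)^{-1}\leq A_9$ for small $t$; you instead use an eigenvalue (Weyl) perturbation of the symmetric matrix $g_{n+1}=g^0+O(t)$, which is slightly more direct and gives a bound on $|g_{n+1}^{-1}|$ itself rather than only on the determinant (the paper recovers $g_N^{ij}$ from $|g_N|$ and $(\det g_N)^{-1}$ via cofactors). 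Both routes are elementary and buy the same thing. The only caveat in your write-up is cosmetic: with the inverse terms included in $\mathcal{N}$, the choice $M=2\mathcal{N}_0$ may not literally close the induction (the Weyl step yields $|g_{n+1}^{-1}|\leq 2/\lambda_0$, which is a structural multiple of $\mathcal{N}_0$ rather than at most $2\mathcal{N}_0$ minus the other terms), so $M$ should simply be taken as a suitably larger fixed multiple of $\mathcal{N}_0$ before $\Phi$ and $T$ are chosen; this is routine bookkeeping and does not affect the argument.
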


\begin{proof}
Let $N\in\mathbb{N}^{\star}$. Suppose that for $n\leq N-1$, we have
the following inequalities:

\begin{equation}
\left\{
\begin{array}{lllllll}
|g_n-g^0|&\leq & A_1;\,\,\,\, |K_n-K^0|&\leq &A_2;\,\,\,\,\,\,|F_n-F^0|&\leq &A_3;\\
|E_n-E^0|&\leq & A_4;\,\,\,\, |u_n-U^0| &\leq & A_5;\,\,\,\, |\Theta_n-\theta^0| &\leq& A_6;\\
|e_n-e_0| &\leq& A_7;\,\,\,\, |\rho_n-\rho_0| & \leq & A_8;\,\,\,\,
(detg_n)^{-1} &\leq& A_9
\end{array}
\right.
\end{equation}

where $A_i,i=1,2,3,4,5,6,7,8,9,$ are strictly positive constants.
 We prove that we can choose these constants such that we still
 have the inequalities $(4.1)$   for $n=N$, for $T$ sufficiently
 small.\\
 Note that for $t$ sufficiently small, we have
 \begin{center}
$|\rho_{N-1}-\rho^0|\leq \frac{\rho^0}{2}$
\end{center}
 from where we deduce that
\begin{center}
$ \frac{1}{\rho_{N-1}}\leq\frac{2}{\rho^0}.$
\end{center}
we take $A_8=\frac{\rho^0}{2}$. Taking the inequalities $(4.1)$ into
account, the definition of the iterated sequence, the expressions of
\begin{equation*}
H_{N-1},\,\, R_{N-1,ij},\,\, T_{N-1,ij},\,\, \tau_{N-1,ij}\,\, and
\,\,u^0_{N-1}=\sqrt{1+g_{N-1,ij}u^i_{N-1}u^j_{N-1}} \geq 1,
\end{equation*}
we can find constants $B_i, i=1,2,3,4,5,6,7,8$ strictly positive
depending only on $A_i$ such that:

\begin{equation}
\left\{
\begin{array}{lllllllll}
|\dot{g}_{N,ij}|&\leq & B_1;\,\,\,\, |\dot{K}_{N,ij}|&\leq &B_2;\,\,\,\,\,\,|\dot{F}_{N,ij}|&\leq &B_3; \,\,\,\, |\dot{E}_N^i| &\leq& B_4; \\
|\dot{u^i}_N|&\leq & B_5;\,\,\,\, |\dot{\Theta}^{0\alpha}_N| &\leq &
B_6;  \,\,\,\,|\dot{e}_N| &\leq& B_7;\,\,\,\, |\dot{\rho}_N|&\leq&
B_8.
\end{array}
\right.
\end{equation}

By integration of $(4.2)$ we have:
\begin{equation}
\left\{
\begin{array}{lllllllll}
|g_{N}-g^0|&\leq & B_1t;\,\,\,\, |K_N-K^0|&\leq &B_2t;\,\,\,\,\,\,|F_N-F^0|&\leq &B_3t; \,\,\,\, |E_N-E^0| &\leq& B_4t; \\
|u_N-U^0|&\leq & B_5t;\,\,\,\, |\Theta_N-\theta^0| &\leq & B_6t;
\,\,\,\,|e_N-e_0| &\leq& B_7t;\,\,\,\, |\rho_N-\rho_0|&\leq& B_8t.
\end{array}
\right.
\end{equation}

Let us bound $(\det g_N)^{-1}$. By the definition of the iterated
sequence we have:
\begin{equation}
\frac{d}{dt}g_{N,ij}=-2K_{N-1,ij}.
\end{equation}

On one hand we have the equality:
\begin{equation*}
\frac{d}{dt}[\ln(\det g_N)]=g_N^{ij}\frac{d}{dt}g_{N,ij}.
\end{equation*}
On the other hand we have:
\begin{equation*}
\begin{array}{lll}
\frac{d}{dt}[\ln(\det g_N)]&=& (\det g_N)^{-1}\frac{d}{dt}(\det
g_N)\\
&=&-\det g_N\frac{d}{dt}(\det g_N)^{-1}.
\end{array}
\end{equation*}

The relation $(4.4)$ then implies:
\begin{equation*}
\frac{d}{dt}(\det g_N)^{-1}=(2g_N^{ij}K_{N-1,ij})(\det g_N)^{-1}
\end{equation*}

which is a differential equation in $(\det g_N)^{-1}$ on $[0;t]$,
$t>0$, whose solution is:
\begin{equation}
(\det g_N)^{-1}= (\det
g^0)^{-1}\exp\left(\int_0^t2g_N^{ij}K_{N-1,ij}(s)ds\right).
\end{equation}
The relation $(4.4)$ which is similar to  $(2.17)$, shows that $g_N$
and $K_{N-1}$ are the first and second fundamental forms of a
hypersurface. We must then have:
\begin{equation*}
g_N^{ij}K_{N-1,ij}=K^i_{N-1,i}=Tr(K_{N-1}).
\end{equation*}
We then deduce from  $(4.1)$ and $(4.5)$ that there exists a
constant $C>0$ depending on $A_i$, $i=1,2,3,4,5,6,7,8,$ and $K^0$
such that:
\begin{equation}
(\det g_N)^{-1}\leq (\det g^0)^{-1}\exp(Ct),
\end{equation}

Now if we take in  $(4.1)$:
\begin{equation*}
(det g^0)A_9>1,
\end{equation*}
we will have for $ t$ sufficiently small:
\begin{equation*}
(\det g^0)A_9>\exp(Ct)>1.
\end{equation*}
This means, from $(4.6)$ that,  there exists $t_1>0$, such that, for
$0<t\leq t_1$, we have:
\begin{equation}
(\det g_N)^{-1}\leq A_9.
\end{equation}
Then, using $(4.3)$ and $(4.7)$, we conclude that if $T>0$ is such
that:
\begin{equation*}
0<T<t_1,\,\,\,\, B_iT<A_i; i=1,2,3,4,5,6,7,8
\end{equation*}
then we still have the inequalities $(4.1)$ for $n=N$. Hence the
iterated sequence $(V_n)$ is uniformly bounded on $[0,T[$.
\end{proof}

\subsection{Local existence}

\begin{proposition}
The initial values problem for the evolution system
$(2.51)...(2.59)$ has a unique local solution.
\end{proposition}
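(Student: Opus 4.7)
The plan is to convert the uniform-boundedness result of Proposition~4.1 into a contraction-type estimate on the iterated sequence $(V_n)$, pass to the limit, and then settle uniqueness by a Gronwall argument. Throughout, let $T>0$ be the common interval of existence given by Proposition~4.1 and let $\mathcal{B}\subset\mathbb{R}^N$ denote the compact region in which every $V_n=(g_n,K_n,F_n,E_n,u_n,\Theta_n,\rho_n,e_n)$ takes its values on $[0,T[$, as specified by the bounds $(4.1)$. Crucially, on $\mathcal{B}$ the scalars $u_n^0=\sqrt{1+g_{n,ij}u_n^iu_n^j}\geq 1$, $\rho_n\geq \rho^0/2>0$, and $(\det g_n)^{-1}\leq A_9$ are all controlled, so the potentially singular factors $1/u_n^0$, $1/\rho_n$ and $g_n^{ij}$ (which is polynomial in the entries of $g_n$ divided by $\det g_n$) are bounded smooth functions of the unknowns on $\mathcal{B}$.

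First I would verify that the right-hand side $f$ of the evolution system $(S)$ is Lipschitz continuous on $\mathcal{B}$. Each entry of $f$ is a rational function of the components of $V$ whose denominators ($u^0$, $\rho$, $\det g$) stay bounded away from zero on $\mathcal{B}$; hence $f$ is $C^1$ on a neighborhood of $\mathcal{B}$ and there exists a constant $L>0$ such that
\begin{equation*}
|f(V)-f(\tilde V)|\leq L\,|V-\tilde V|\qquad \text{for all }V,\tilde V\in\mathcal{B}.
\end{equation*}
Applying this to two successive iterates and using the definition $V_{n+1}=V_0+\int_0^t f(V_n(s))\,ds$, I obtain
\begin{equation*}
\sup_{[0,t]}|V_{n+1}-V_n|\leq L\,t\,\sup_{[0,t]}|V_n-V_{n-1}|.
\end{equation*}
Shrinking $T$ if necessary so that $LT<1/2$ (the uniform bounds of Proposition~4.1 remain valid on the smaller interval), $(V_n)$ becomes a contraction in the Banach space $C^0([0,T],\mathbb{R}^N)$ with the sup norm, hence a Cauchy sequence. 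Let $V$ be its uniform limit; then $V$ is continuous, takes values in $\mathcal{B}$, and passing to the limit in the integral equation (permissible by uniform convergence and continuity of $f$) gives
\begin{equation*}
V(t)=V_0+\int_0^t f(V(s))\,ds.
\end{equation*}
Differentiating shows $V\in C^1([0,T[,\mathbb{R}^N)$ and solves $\dot V=f(V)$ with $V(0)=V_0$, i.e.\ the evolution system $(2.51)$--$(2.59)$ with the prescribed initial data. The symmetry of $g$ and $K$, the antisymmetry of $F$, the positivity of $\rho$ and the non-negativity of $e$ are all inherited from the iterated sequence by uniform convergence.

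For uniqueness, suppose $V$ and $\tilde V$ are two $C^1$ solutions with the same initial data on some common interval $[0,T'[$. Shrinking $T'$ if needed, both stay in a compact region on which $f$ is $L'$-Lipschitz, and
\begin{equation*}
|V(t)-\tilde V(t)|\leq \int_0^t L'\,|V(s)-\tilde V(s)|\,ds,
\end{equation*}
so Gronwall's lemma forces $V\equiv \tilde V$. I expect the only genuinely delicate point to be the verification of the Lipschitz estimate on $f$: one must check, term by term in $(S)$, that the denominators $u^0$, $\rho$ and $\det g$ are uniformly bounded below on $\mathcal{B}$ so that all the quotients and the inverse metric $g^{ij}$ are $C^1$ functions of $V$; once this is in place, the contraction/Gronwall machinery is routine.
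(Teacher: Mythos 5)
Your proposal is correct and follows essentially the same route as the paper: both establish local existence by showing the Picard iterates $(V_n)$ converge uniformly (using the uniform bounds of Proposition~4.1, in particular the lower bounds on $\rho_n$, $u_n^0$ and $\det g_n$ that make the right-hand side $f$ Lipschitz on the relevant compact set) and both settle uniqueness with Gronwall's lemma. The only cosmetic difference is that you shrink $T$ to get a strict contraction with $LT<\tfrac12$, whereas the paper keeps the full interval and uses the factorial bound $\alpha_n\leq \lVert\alpha_2\rVert_\infty (Ct)^{n-2}/(n-2)!$ together with a separate estimate on the derivative differences to conclude $C^1$ convergence.
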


\begin{proof}
Let $[0,T[$, $T>0$, be the interval  obtained in proposition $4.1$.
We prove that the iterated sequence $V_n=(g_n,K_n,F_n,E_n, u_n,
\Theta_n,\rho_n,e_n)$ converges uniformly on every
$[0,\delta[\subset [0,T[$, $\delta>0$, towards a solution
$V=(g,K,F,E,u,\Theta,\rho,e)$ of the evolution system.\\

We take the difference between two consecutive terms of $V_n$, we
use the fact that they have the same initial data and since the
sequences $\left(\frac{1}{\rho_n}\right)_n$, $(V_n)$ are uniformly
bounded we have, with $C>0$ a constant:
\begin{equation}
\begin{array}{lll}
|g_{n+1}(t)-g_n(t)|+|K_{n+1}(t)-K_n(t)|&+&|F_{n+1}(t)-F_n(t)|\\
+|E_{n+1}(t)-E_n(t)|+|u_{n+1}(t)-u_n(t)|&+&|\Theta_{n+1}(t)-\Theta_n(t)|\\
+|\rho_{n+1}(t)-\rho_n(t)|+|e_{n+1}(t)-e_n(t)|&\leq & C\int_0^t[|g_n(s)-g_{n-1}(s)|\\
+|K_n(s)-K_{n-1}(s)|+|F_n(s)-F_{n-1}(s)|&+&|E_n(s)-E_{n-1}(s)|\\
+|u_n(s)-u_{n-1}(s)|+
|\Theta_n(s)-\Theta_{n-1}(s)|&+&|\rho_n(s)-\rho_{n-1}(s)|+|e_n(s)-e_{n-1}(s)|]ds.
\end{array}
\end{equation}
For the same reasons we have:

\begin{equation}
\begin{array}{lll}
|\frac{dg_{n+1}}{dt}-\frac{dg_n}{dt}|+|\frac{dK_{n+1}}{dt}-\frac{dK_n}{dt}|&+&|\frac{dF_{n+1}}{dt}-\frac{dF_n}{dt}|\\
+ |\frac{dE_{n+1}}{dt}-\frac{dE_n}{dt}|+|\frac{du_{n+1}}{dt}-\frac{du_n}{dt}|&+&|\frac{d\Theta_{n+1}}{dt}
-\frac{d\Theta_n}{dt}| \\
+|\frac{d\rho_{n+1}}{dt}-\frac{d\rho_n}{dt}|+|\frac{de_{n+1}}{dt}-\frac{de_n}{dt}|
&\leq &C[ |g_n-g_{n-1}|\\
+|K_n-K_{n-1}|+|F_n-F_{n-1}|&+&|E_n-E_{n-1}|\\
+|u_n-u_{n-1}|+|\Theta_n-\Theta_{n-1}|&+&|\rho_n-\rho_{n-1}|+
|e_n-e_{n-1}|].
\end{array}
\end{equation}

For $n\in\mathbb{N}$, we set:
\begin{equation}
\begin{array}{lll}
\alpha_n(t)&=&|g_{n+1}(t)-g_n(t)|+|K_{n+1}(t)-K_n(t)|+|F_{n+1}(t)-F_n(t)|\\
&+&|E_{n+1}(t)-E_n(t)|+|u_{n+1}(t)-u_n(t)|+|\Theta_{n+1}(t)-\Theta_n(t)|\\
&+&|\rho_{n+1}(t)-\rho_n(t)|+|e_{n+1}(t)-e_n(t)|
\end{array}
\end{equation}
 $(4.8)$ and $(4.10)$ give:

\begin{equation}
\alpha_n(t)\leq C\int_0^t\alpha_{n-1}(s)ds
\end{equation}
 By induction on  $n\geq 2$, we obtain, from$(4.11)$ :

\begin{equation}
|\alpha_n(t)|\leq ||\alpha_2||_{\infty}\frac{(Ct)^{n-2}}{(n-2)!}
\leq ||\alpha_2||_{\infty}\frac{(C\delta)^{n-2}}{(n-2)!},
\end{equation}

for $0\leq t\leq\delta$ and $0<\delta<T.$\\
But the series $\sum\frac{(C\delta)^{n}}{(n)!}$ converges. Hence we
obtain from $(4.12)$ that:
\begin{equation*}
\lim_{n\rightarrow+\infty}\sup_{0\leq t\leq\delta}\alpha_n(t)=0.
\end{equation*}
Given the definition $(4.10)$ of $\alpha_n$, we conclude that every
sequence $(g_n)$, $(K_n)$, $(F_n)$, $(E_n)$, $(u_n)$, $(\Theta_n)$
$(\rho_n)$ and $(e_n)$ converges uniformly on every interval
$[0,\delta]$, $0<\delta<T$ and we denote the different limits by
 $g$, $K$, $F$, $E$, $u$, $\Theta$, $\rho$ and $e$ which are continuous functions
 of $t$.\\
Now from the inequality $(4.9)$, we conclude similarly that the
sequences of derivatives $(\frac{dg_n}{dt})$, $(\frac{dK_n}{dt})$,
$(\frac{dF_n}{dt})$, $(\frac{dE_n}{dt})$, $(\frac{du_n}{dt})$,
$(\frac{d\Theta_n}{dt})$ $(\frac{d\rho_n}{dt})$, $(\frac{de_n}{dt})$
converge  uniformly on $[0,\delta]$, $0<\delta<T$. In these
conditions, the functions $g$, $K$, $F$, $E$, $u$, $\Theta$, $\rho$
and $e$ are of class $C^1$ on $[0,T[$. Hence
$V=(g,K,F,E,u,\Theta,\rho,e)$ is a local solution of the evolution system $(2.51)...(2.59)$.\\
We now prove that the solution is unique.\\
 Consider two solutions
$V_1$ and $V_2$ of the same initial values problem. Define
$\alpha(t)=|V_1(t)-V_2(t)|$ with $\alpha(0)=0$. Since the functions
$g$, $K$, $F$, $E$, $u$, $\Theta$, $\rho$, $e$, $(\det g)^{-1}$ and
$\frac{1}{\rho}$ are bounded on $[0,\delta]$, $0<\delta<T$, there
exists a constant $C>0$ such that:
\begin{equation*}
\alpha(t)\leq C\int_0^t\alpha(s)ds.
\end{equation*}
By Gronwall Lemma, we obtain $\alpha(t)=0$ since $\alpha(0)=0$. So
$V_1=V_2$ and the local solution is unique.
\end{proof}

\section{Global existence theorem}

We have to prove that, the solution $V=(g,K,F,E,u,\Theta,\rho,e)$
and the functions $\frac{1}{\rho}$ and   $(\det g)^{-1}$ are bounded
on every bounded interval. First of all, we prove the following
important result on the mean curvature $H=g^{ij}K_{ij}$.
\begin{proposition}
Let the function $H=g^{ij}K_{ij}$ be bounded on $[0,T^\star[$ where
$T^\star<+\infty$. Then the functions $g,K,F,E,u,\Theta,\rho$, $e$,
$(\det g)^{-1}$, $u^0=\sqrt{1+g_{ij}u^iu^j}$ and $\frac{1}{\rho}$
are bounded on $[0,T^\star[$.
\end{proposition}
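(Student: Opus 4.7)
The plan is to bootstrap the hypothesis that $H$ is bounded on the finite interval $[0,T^\star[$ into bounds for every dependent variable, using the Hamiltonian constraint (2.61) as the key a priori estimate and the evolution equations of $(S)$ to close the rest by Gronwall. I would first handle the volume element: the relation $\dot g_{ij}=-2K_{ij}$ gives $\tfrac{d}{dt}\ln\det g=g^{ij}\dot g_{ij}=-2H$, so integration and the boundedness of $H$ on the finite interval yield a two-sided bound on $\det g$; in particular $(\det g)^{-1}$ is bounded.

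The crucial next step is to extract bounds on $g$, $g^{-1}$ and $K$. From (2.48), $\tau_{00}=\tfrac{1}{2}g_{ij}E^iE^j+\tfrac{1}{4}g^{ik}g^{jl}F_{kl}F_{ij}\geq 0$, and the discussion following (2.13) together with (2.49) gives $T_{00}\geq 0$. The Hamiltonian constraint (2.61) then rearranges as
\[
K^{ij}K_{ij}+16\pi(\tau_{00}+T_{00})=R+H^2-2\Lambda,
\]
with the left-hand side a sum of nonnegative quantities. I would combine this identity with Rendall's relative-norm inequalities (Lemmas 3.1 and 3.2) applied to $g$ and $g^{-1}$, and with the evolution equations $\dot g_{ij}=-2K_{ij}$ and $\dot g^{ij}=2K^{ij}$, to obtain a Gronwall-type differential inequality for a suitable scalar norm of $g$, for instance $g_{ij}(g^0)^{ij}$. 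The resulting two-sided bound on $g$ renders $R$ bounded --- it is polynomial in $g^{-1}$ and the fixed structure constants via Jantzen's formula (2.60) --- and the constraint then bounds $K^{ij}K_{ij}$, and hence $K_{ij}$ componentwise.

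Once $g$, $g^{-1}$ and $K$ are under control, the remaining fields fall out by standard Gronwall arguments applied to the evolution system. Equations (2.53)--(2.54) are linear in $F$ and $E$ with bounded coefficients. Equation (2.55) for $u^i$ contains the factor $u^i/u^0$, which Lemma 3.2 bounds in terms of $|g|$; then $u^0=\sqrt{1+g_{ij}u^iu^j}$ is controlled, and (2.56)--(2.57) bound $\Theta^{0\alpha}$. Finally, the integral representations of $\rho$ and $e$ derived from (2.28) and the $e$-equation bound $\rho$, $1/\rho$ and $e$, using the positivity assumptions $\rho^0>0$ and $e_0\geq 0$ together with the now-bounded coefficients.

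The main obstacle will be closing the bootstrap among $g$, $g^{-1}$, $R$ and $K$ in the second step, since $R$ depends nonlinearly on $g^{-1}$ and its sign is not a priori controlled across the Bianchi types I--VIII. The positivity of $\tau_{00}+T_{00}$, which turns the constraint into a genuine one-sided inequality bounding $K^{ij}K_{ij}$ above by $R+H^2-2\Lambda$, together with the lower bound on $\det g$ and the Rendall relative-norm inequalities, are what should make this closure feasible.
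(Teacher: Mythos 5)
Your first step (bounding $\det g$ and $(\det g)^{-1}$ from $\frac{d}{dt}\ln\det g=-2H$) matches the paper, but your second step --- the closure among $g$, $g^{-1}$, $R$ and $K$ --- contains a genuine gap, and it is precisely the obstacle you yourself flag without resolving. The Hamiltonian constraint only gives the pointwise bound $K_{ij}K^{ij}\leq R+H^2-2\Lambda$, and until $g$ is bounded, $R$ is only controlled polynomially in $|g|$ (via $|g^{-1}|\leq C|g|^{2}(\det g)^{-1}$ and Jantzen's formula), with a power strictly larger than one. Feeding this back into $\dot g_{ij}=-2K_{ij}$ through the relative-norm inequalities produces a superlinear differential inequality of the type $\frac{d}{dt}\|g\|\leq C\bigl(1+\|g\|^{m}\bigr)\|g\|$ with $m\geq1$, which Gronwall cannot close and which can in principle blow up before $T^\star$; neither the positivity of $\tau_{00}+T_{00}$ nor the lower bound on $\det g$ removes this. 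The paper's missing idea is Lemma 5.1: the evolution equation $\dot H=R+H^{2}+4\pi g^{ij}(\tau_{ij}+T_{ij})-12\pi(\tau_{00}+T_{00})-3\Lambda$, combined with the constraint and the sign conditions, gives $\dot H\geq K_{ij}K^{ij}-\Lambda$, so the boundedness of $H$ on the finite interval yields the \emph{time-integrated} bound $\int_0^{T^\star}K_{ij}K^{ij}\,ds<+\infty$ with no control of $R$ needed at all. Gronwall with this merely integrable coefficient, $\|g(t)\|\leq C\|g(0)\|\exp\bigl(C\int_0^t(K_{ij}K^{ij})^{1/2}ds\bigr)$, then bounds $g$; only afterwards is $R$ bounded and the constraint used pointwise to bound $K_{ij}K^{ij}$ and hence $K$. (Alternatively you could have closed your version by invoking $R\leq0$ for Bianchi I--VIII, as the paper does later via Jantzen and Wald in Theorem 5.1, but you explicitly decline to control the sign of $R$, so your argument as written does not close.)

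A secondary imprecision: you propose to control $u^i$ and $u^0$ from equation (2.55) before handling $\rho$ and $1/\rho$, but (2.55) carries the coefficient $\frac{1}{\rho u^0}$, so it cannot be used until that quantity is bounded. The paper's route is through the conservation identity from (2.28)--(2.29), which bounds the product $\rho u^0$ above and below, then bounds $\rho u^i$ via the relative-norm estimate $|u^i/u^0|\leq C|g|^{3/2}$, multiplies (2.55) by $\rho$ to bound $\rho\dot u^i$, deduces that $\dot u^0/u^0$ is bounded and integrates to bound $u^0$, and finally recovers $1/\rho=\frac{1}{\rho u^0}\,u^0$ and $u^i=\frac{u^i}{u^0}\,u^0$. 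Your sketch skips this coupling, which is the second place where the order of the estimates matters.
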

\begin{proof}
We will use the following Lemma:
\begin{lemma}
The mean curvature $H=g^{ij}K_{ij}$ satisfies the following
relation:
\begin{equation}
\frac{dH}{dt}=R+H^2+ 4\pi
g^{ij}(\tau_{ij}+T_{ij})-12\pi(\tau_{00}+T_{00})-3\Lambda.
\end{equation}
\end{lemma}

\begin{center}
\underline{Proof of the Lemma}
\end{center}
Since $H=g^{ij}K_{ij}$, the relation (2.17) gives:
$\dot{g}^{ij}=2K^{ij}$, from there we have:
\begin{equation*}
\frac{dH}{dt}=2K_{ij}K^{ij}+g^{ij}\dot{K}_{ij}
\end{equation*}
then obtain  $(5.1)$ from equation $(2.52)$ by a direct calculation
using $^4g^{\alpha\beta}\,^4\tau_{\alpha\beta}=0$, which implies
$g^{ij}\tau_{ij}=\tau_{0\,0}$.\\
\begin{center}
\underline{Proof of Proposition 5.1}
\end{center}
$\bullet$ \underline{Boundedness of $|g|$ on $[0,T^\star[$}.\\
By using the Hamiltonian constraint $(2.61)$, $(5.1)$ gives:
\begin{equation}
\frac{dH}{dt}=K_{ij}K^{ij}-\Lambda+4\pi g^{ij}(\tau_{ij}+T_{ij})
+4\pi(\tau_{00}+T_{00}).
\end{equation}
But we have  $\tau_{00}=g^{ij}\tau_{ij}\geq0$; $T_{00}\geq0$;
$g^{ij}T_{ij}\geq0$. We then deduce from $(5.2)$:
\begin{equation}
\frac{dH}{dt}\geq K_{ij}K^{ij}-\Lambda.
\end{equation}
 Integrating $(5.3)$ on $[0,t]$ for $0\leq t \leq T^\star$ gives:
 \begin{equation}
H(t)\geq H(0)-\Lambda t +\int_0^tK_{ij}K^{ij}ds.
\end{equation}
which shows, since $H$ is bounded on $[0,T^\star[$ that we have:
\begin{equation}
\int_0^{T^*}K_{ij}K^{ij}ds<+\infty.
\end{equation}
Now the integration of equation $(2.51)$ on $[0;t]$, $0\leq t\leq
T^\star$, gives:
\begin{equation}
|g(t)|\leq |g(0)|+2\int_0^t|K(s)|ds.
\end{equation}

$(5.6)$ gives, using $(3.6)$, the inequality:
\begin{equation}
||g(t)||\leq C\left[||g(0)||+
\int_0^t||K(s)||_{g(s)}||g(s)||ds\right].
\end{equation}
 $(5.7)$, gives, using $(3.7)$:
\begin{equation*}
||g(t)||\leq C\left[
||g(0)||+\int_0^t(K_{ij}K^{ij})^{\frac{1}{2}}||g(s)||ds\right],
\end{equation*}
By  Gronwall Lemma, this implies:
\begin{equation*}
||g(t)||\leq
C_1||g(0)||\exp\left(C\int_0^t(K_{ij}K^{ij})^\frac{1}{2}ds\right),
0\leq t\leq T^\star;
\end{equation*}
This shows, using $(5.5)$ that $||g||$, and therefore
$|g|$, are bounded on $[0,T^\star[$.\\
$\bullet$ \underline{Boundedness of $(\det g)^{-1}$ on $[0;T^\star[$}\\
The relation :
\begin{equation*}
\frac{d}{dt}[\ln(\det g)]=g^{ij}\frac{dg_{ij}}{dt}
\end{equation*}
gives, using equation $(2.52)$ and $H=g^{ij}K_{ij}$:
\begin{equation*}
\frac{d}{dt}[\ln(\det g)]=-2H .
\end{equation*}
But $|H|$ is bounded on $[0,T^\star[$. Integrating on $[0,t]$,
$0\leq t<T^\star[$, there exists a constant $C>0$ such that:
\begin{equation*}
-C<\ln(\det g)<C
\end{equation*}
from where we have:
\begin{equation*}
e^{-C}<\det g<e^{C}.
\end{equation*}
Then $\det g$ and $(\det g)^{-1}$ are bounded on $[0,T^\star[$.\\

$\bullet$\underline{ Boundedness of $|K|$ on $[0,T^\star[$}\\
Since $|g|$ and $(\det g)^{-1}$ are bounded on $[0,T^\star[$, the
expression $(2.60)$ of $R_{ij}$ shows that $R=g^{ij}R_{ij}$ is
bounded on $[0,T^\star[$. We deduce from the Hamiltonian constraint
$(2.61)$, since $\tau_{00}\geq0$, $T_{00}\geq0$, that:
\begin{equation*}
R+H^2-2\Lambda\geq K_{ij}K^{ij}.
\end{equation*}
$R$ and $H^2$ being bounded, so is $K_{ij}K^{ij}$. We then deduce
from the inequality:
\begin{equation*}
||K(s)||\leq (K_{ij}K^{ij})^\frac{1}{2}||g(s)||,
\end{equation*}
that $||K||$ and hence $|K|$ is bounded on $[0,T^\star[$, since
$(K_{ij}K^{ij})^\frac{1}{2}$
and $||g||$ are bounded. \\
$\bullet$ \underline{Boundedness of $E$ on $[0,T^\star[$}\\
We use the Hamiltonian  constraint $(2.61)$ and $\tau_{00}\geq0$,
$T_{00}\geq0$, to have:
\begin{equation*}
0\leq \max(16\pi\tau_{00}; 16\pi T_{00})\leq 16\pi(\tau_{00}+
T_{00})=R+H^2-K_{ij}K^{ij}-2\Lambda.
\end{equation*}
This shows that $\tau_{00}$ and $T_{00}$ are bounded on
$[0,T^\star[$. We deduce from $(2.11)$ that:
\begin{equation*}
g^{ij}\tau_{ij}=\frac{1}{4}F^{ij}F_{ij}+\frac{1}{2}g_{ij}E^iE^j,
\end{equation*}
thus
\begin{equation*}
0\leq\frac{1}{2}g_{ij}E^iE^j\leq \tau_{00},
\end{equation*}
since $F^{ij}F_{ij}\geq0$. We use $(3.9)$ to conclude that  $E$ is
bounded, since
$\left(g_{rs}F^{0r}F^{0s}\right)^\frac{1}{2}$ and $|g|$ are bounded.\\

$\bullet$ \underline{Boundedness of $F$ on  $[0,T^\star[$ }\\
Equation $(2.53)$ gives, since $|g|$ and $|E|$ are bounded:
\begin{equation*}
|\dot{F}_{ij}|\leq C, \forall t\in [0,T^\star[.
\end{equation*}
where $C$ is a constant. Integrating on  $[0,t]$, $0\leq\, t\leq T^\star$ gives the result.\\

$\bullet$ \underline{Boundedness of  $\rho$ on  $[0,T^\star[$ }\\
Since $|E|$, $|g|$, $|K|$, $\frac{u^i}{u^0}$, $\frac{1}{u^0}$ are
bounded, equation  $(2.58)$ in $\rho$  shows that, there exists two
constants $A>0$, $B>0$ such that:
\begin{equation*}
\rho(t)\leq A\int_0^t\rho(s)ds+B,\,\,\,\,\forall t\in [0,T^\star[.
\end{equation*}
By Gronwall Lemma, we have:
\begin{equation*}
\rho(t)\leq Be^{At}\leq Be^{AT^\star},\,\,\,\, \forall t\in
[0,T^\star[
\end{equation*}
 and $\rho$ is bounded.\\

$\bullet$ \underline{Boundedness of $u^0$, $\frac{1}{\rho}$ and $u^i$ on  $[0,T^\star[$  }\\

From $(2.28)$ and $(2.29)$, we have:
\begin{equation*}
\frac{\dot{\rho}}{\rho}+\frac{\dot{u}^0}{u^0}=-(\frac{3}{4}.\frac{1}{u^0}-K^i_i+C^i_{ij}\frac{u^j}{u^0}).
\end{equation*}
Since $\frac{1}{u^0}$,   $H=K^i_i$ and $\frac{u^j}{u^0}$ are bounded
on $[0;T^\star[$, there exists a constant $C>0$ such that:
\begin{equation*}
-C<
-\int_0^t(\frac{3}{4}.\frac{1}{u^0}-K^i_i+C^i_{ij}\frac{u^j}{u^0})(s)ds<C,
\end{equation*}
for $0<t<T^\star$. From where we deduce by integration:
\begin{equation*}
e^{-C}\leq \rho u^0\leq e^{C}\,\,on\,\,[0,T^\star[,
\end{equation*}
which proves that: $\rho u^0$ and $\frac{1}{\rho u^0}$ are bounded
on $[0,T^\star[$. Now write, using $(3.9)$:\\
 $\left|\frac{u^i}{u^0}\right|= \frac{\rho|u^i|}{\rho u^0}\leq C|g|^\frac{3}{2}$ from
where we have:
\begin{equation*}
\rho|u^i|\leq c(\rho u^0)|g|^\frac{3}{2}
\end{equation*}
and this prove that $\rho u^i$ is bounded. Now multiply equation
$(2.55)$ in $ u^i$  by $\rho$, and use the fact that $\rho u^i$ is
bounded to conclude that $\rho \dot{u}^i$ is also bounded. The
relation
\begin{equation*}
u^0=\sqrt{1+g_{ij}u^iu^j}
\end{equation*}
gives, derivating and using equation $(2.51)$:
\begin{equation*}
\frac{\dot{u}^0}{u^0}=-K_{ij}\frac{u^i}{u^0}\frac{u^j}{u^0}+
g_{ij}\rho \dot{u}^i \frac{1}{\rho u^0}.\frac{u^j}{u^0}.
\end{equation*}

which shows, since $\frac{u^i}{ u^0}$, $K$, $\rho u^0$, $g$,
$\rho\dot{u}^i$ are bounded, that there exists a constant $A>0$ such
that:
\begin{equation*}
\left|\frac{\dot{u}^0}{u^0}\right|< A.
\end{equation*}
Integrating on $[0,t]$, $0\leq t\leq T^\star$ we obtain:
\begin{equation*}
1\leq u^0(t)\leq u^0(0)\exp(AT^\star).
\end{equation*}
Hence $u^0$ is bounded on $[0,T^\star[$. Now write:
$\frac{1}{\rho}=\frac{1}{\rho u^0}\times u^0$  and
$|u^i|=\frac{|u^i|}{u^0}\times u^0$ to conclude that
$\frac{1}{\rho}$ and $u^i$ are bounded on $[0, T^\star[$.

$\bullet$ \underline{Boundedness of  $\Theta^{0\alpha}$ on $[0, T^\star[$}\\
Since $|H|$, $\rho$, $u^0$, $|K|$, $(\det g)^{-1}$ and $u^i$ are
bounded on $[0,T^\star[$, by equations $(2.56)$ and $(2.57)$ in
$\Theta^{0\alpha}$, there exists two constant $C_1$ and $C_2$: such
that:

\begin{equation*}
|\dot{\Theta}^{0\beta}|\leq
C_1\sum\limits_{\alpha=0}^{3}|\Theta^{0\alpha}| + C_2.
\end{equation*}
Summing in $\alpha$ and integrating on $[0,t]$, $0\leq t< T^\star$,
we obtain:
\begin{equation*}
\sum\limits_{\alpha=0}^3|\Theta^{0\alpha}|\leq
\sum\limits_{\alpha=0}^3|\Theta^{0\alpha}(0)|+4C_2T^\star
+4C_1\int_0^t\sum\limits_{\alpha=0}^3|\Theta^{0\alpha}|ds;
\end{equation*}
and Gronwall Lemma gives:
\begin{equation*}
\sum\limits_{\alpha=0}^3|\Theta^{0\alpha}|(t)\leq
\left(\sum\limits_{\alpha=0}^3|\Theta^{0\alpha}(0)|
+4C_2T^*\right)\exp{(4C_1T^*)}\,\, on \, [0,T^\star[.
\end{equation*}
which shows that each $\Theta^{0\alpha}$ is bounded on  $[0,T^\star[.$\\
$\bullet$ \underline{Boundedness of $e$ on  $[0,T^\star[$}\\
 From the constraint $(2.64)$ we have:
\begin{equation*}
e=-\frac{C^i_{ik}F^{0k}}{u^0};
\end{equation*}
which is bounded on $[0,T^\star[$, since $\frac{1}{u^0}$ and $E$
are. This completes the proof of \\Proposition 5.1.
\end{proof}

\begin{theorem}
Suppose that $\Lambda>0$  and that the mean curvature satisfies
$H(0)<0$. Then the initial values problem for the coupled system of
Einstein-Maxwell-pseudo tensor of pressure has a unique global
solution.
\end{theorem}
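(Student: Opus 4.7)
The plan is to combine the continuation criterion of Proposition 5.1 with the local existence theorem (Proposition 4.2). If the mean curvature $H=g^{ij}K_{ij}$ remains bounded on every finite interval on which the solution exists, then Proposition 5.1 delivers uniform bounds on every component of $V=(g,K,F,E,u,\Theta,\rho,e)$ as well as on $(\det g)^{-1}$ and on $1/\rho$; a standard continuation argument, taking a limit at the would-be blow-up time and re-applying Proposition 4.2 with that limit as new initial data, then rules out any finite blow-up time. The theorem therefore reduces to the following claim: on a maximal interval $[0,T_{\max})$ assumed finite for contradiction, $H$ admits both an upper and a lower bound.

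For the upper bound I would use the Hamiltonian constraint (2.61), rewritten as
\begin{equation*}
H^2 = K_{ij}K^{ij} - R + 16\pi(\tau_{00}+T_{00}) + 2\Lambda .
\end{equation*}
Since $K_{ij}K^{ij}\ge 0$, $\tau_{00}\ge 0$, $T_{00}\ge 0$, and since the three-dimensional scalar curvature $R$ computed from (2.60) on a left-invariant metric of a Bianchi group of class I-VIII is non-positive, we deduce $H^2\ge 2\Lambda$, i.e.\ $|H(t)|\ge\sqrt{2\Lambda}$ on $[0,T_{\max})$. Continuity of $H$ together with $H(0)<0$ then forces, via the intermediate value theorem,
\begin{equation*}
H(t) \le -\sqrt{2\Lambda}<0 \qquad \text{for every } t\in[0,T_{\max}),
\end{equation*}
so $H$ never crosses zero and is bounded above by a strictly negative constant.

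For the lower bound I would use equation (5.2),
\begin{equation*}
\dot H = K_{ij}K^{ij} + 4\pi(\tau_{00}+T_{00}) + 4\pi g^{ij}(\tau_{ij}+T_{ij}) - \Lambda ,
\end{equation*}
in which every term except $-\Lambda$ is non-negative. Indeed $g^{ij}\tau_{ij}=\tau_{00}\ge 0$, and $g^{ij}T_{ij}=\frac{4}{3}\rho\, g^{ij}u_iu_j+\rho = \frac{4}{3}\rho\bigl((u^0)^2-1\bigr)+\rho \ge \rho>0$. Hence $\dot H \ge -\Lambda$, so $H(t)\ge H(0)-\Lambda T_{\max}$ on $[0,T_{\max})$. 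Combined with the previous estimate, this yields the two-sided bound
\begin{equation*}
H(0)-\Lambda T_{\max} \le H(t) \le -\sqrt{2\Lambda}\qquad \text{on } [0,T_{\max}),
\end{equation*}
which feeds into Proposition 5.1 and closes the continuation argument, contradicting the assumed maximality of $T_{\max}$ and yielding $T_{\max}=+\infty$.

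The decisive ingredient, and the only real obstacle, is the sign $R\le 0$: the whole strategy collapses without it, which is precisely why the theorem is stated for Bianchi classes I-VIII rather than including IX. Checking $R\le 0$ means unpacking (2.60) in each class in terms of the structure constants $C^k_{ij}$; if $R$ were allowed to become positive, the Hamiltonian constraint would no longer keep $|H|$ away from $0$, $H$ could cross zero, and the only upper-bound inequality then available, $\dot H\ge K_{ij}K^{ij}-\Lambda\ge H^2/3-\Lambda$, would permit $H$ to escape to $+\infty$ in finite time—destroying the continuation argument.
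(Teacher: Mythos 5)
Your proof is correct and follows essentially the same route as the paper: an upper bound $H\le -\sqrt{2\Lambda}<0$ from the Hamiltonian constraint (2.61) together with $\tau_{00},T_{00}\ge 0$, $R\le 0$ (Jantzen/Wald), continuity and $H(0)<0$, a lower bound $H(t)\ge H(0)-\Lambda t$ from $\dot H\ge -\Lambda$ in (5.2)--(5.3), and then Proposition 5.1 as the continuation criterion. The only cosmetic difference is that the paper keeps the traceless split $K_{ij}K^{ij}=\sigma_{ij}\sigma^{ij}+\frac{1}{3}H^2$ to obtain the sharper barrier $H<-\sqrt{3\Lambda}$ (which it reuses later for the asymptotics), whereas you simply drop $K_{ij}K^{ij}\ge 0$ and get $H\le-\sqrt{2\Lambda}$, which is equally sufficient for global existence.
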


 \begin{proof}
 From Proposition 5.1, we have to prove that the mean curvature $H$
is bounded on every interval $[0,T^\star[$ such that
$T^\star<+\infty$. Consider the traceless tensor:
\begin{equation}
\sigma_{ij}=K_{ij}-\frac{1}{3}Hg_{ij}.
\end{equation}
By a direct calculation, we have:
\begin{equation}
\sigma_{ij}\sigma^{ij}=K_{ij}K^{ij}-\frac{1}{3}H^2.
\end{equation}
Using the hamiltonian constraint $(2.61)$ and $(5.9)$, we obtain:
\begin{equation}
\frac{2}{3}H^2-2\Lambda=
\sigma^{ij}\sigma_{ij}+16\pi(\tau_{00}+T_{00})-R
\end{equation}

We have  $\sigma_{ij}\sigma^{ij}\geq0$,  $\tau_{00}\geq0$,
$T_{00}\geq0$. R.T. JANTZEN in [3] and R.WALD in [8] prove that, in
the spacetimes considered here, we always have $R\leq0$. We then
deduce from $(5.10)$:
\begin{equation*}
\frac{2}{3}H^2-2\Lambda>0,
\end{equation*}
 this means:
\begin{equation*}
H^2>3\Lambda.
\end{equation*}
From there we have:
\begin{equation*}
H<-\sqrt{3\Lambda},\,or\,\, H>\sqrt{3\Lambda}
\end{equation*}
 but $H$ is continuous and we have $H(0)<0$, then we must have:
 \begin{equation*}
H<-\sqrt{3\Lambda}.
\end{equation*}
Now if we take $(5.4)$ in which we have $K_{ij}K^{ij}\geq0$, we
conclude that:
\begin{equation*}
H(t)\geq H(0)-\Lambda t.
\end{equation*}
Hence, on every interval $[0,T^\star[$, where $T^\star<+\infty$, we
have:
\begin{equation*}
 H(0)-\Lambda T^\star\leq H(t)\leq-\sqrt{3\Lambda}
\end{equation*}
which prove that $H$ is bounded on each interval $[0,T^\star[$. This
ends the proof of theorem 5.1.
\end{proof}
We end the paragraph by the following result:
\begin{proposition}
For $\Lambda<0$, there exists no global solution to the
Einstein-Maxwell-pseudo-tensor of pressure initial values problem.
\end{proposition}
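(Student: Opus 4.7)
The plan is to exploit the evolution equation for the mean curvature, Lemma 5.1, together with the sign of $\Lambda$, to force $H$ to blow up in finite time. In view of Proposition 5.1 (any global solution would in particular require $H$ to remain bounded on each interval $[0,T^\star[$ with $T^\star<+\infty$), such a blow-up rules out global existence.

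First I would rewrite (5.2) as
\begin{equation*}
\dot{H}=K_{ij}K^{ij}-\Lambda+4\pi g^{ij}(\tau_{ij}+T_{ij})+4\pi(\tau_{00}+T_{00}).
\end{equation*}
The key observation is that for $\Lambda<0$ \emph{every} term on the right is non\nobreakdash-negative. Indeed $K_{ij}K^{ij}\geq0$, $-\Lambda=|\Lambda|>0$, $\tau_{00}\geq0$, $T_{00}\geq0$ (already used in Proposition 5.1), the tracelessness $g^{\alpha\beta}\tau_{\alpha\beta}=0$ gives $g^{ij}\tau_{ij}=\tau_{00}\geq0$, and the explicit form (2.49) of $T_{ij}$ combined with $u^iu_i=(u^0)^2-1\geq0$ gives $g^{ij}T_{ij}=\tfrac{4}{3}\rho(u^0)^2-\tfrac{1}{3}\rho\geq \rho>0$. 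Dropping everything except $K_{ij}K^{ij}$ and $-\Lambda$, we obtain $\dot{H}\geq K_{ij}K^{ij}+|\Lambda|$.

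Next I would apply the Cauchy--Schwarz inequality $H^2=(g^{ij}K_{ij})^2\leq (g^{ij}g_{ij})\,K^{kl}K_{kl}=3\,K_{ij}K^{ij}$ to upgrade this to the Riccati\nobreakdash-type differential inequality
\begin{equation*}
\dot{H}\geq \tfrac{1}{3}H^{2}+|\Lambda|.
\end{equation*}
In particular $\dot{H}\geq|\Lambda|>0$, so $H$ is strictly increasing and reaches the value $0$ at some finite time $t_0\leq|H(0)|/|\Lambda|$ (and $t_0=0$ if $H(0)\geq0$). From $t_0$ onwards, $H\geq0$ and the above inequality reads $\dot H\geq H^2/3$. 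By elementary comparison with $\dot y=y^2/3$, any solution with $H(t_1)>0$ for some $t_1>t_0$ satisfies $H(t)\geq 3/(3/H(t_1)-(t-t_1))$, which diverges at the finite time $t_1+3/H(t_1)$.

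Finally I would close the argument by contradiction: assume a global solution exists on $[0,+\infty[$. Then $H$ is a $C^1$ function on $[0,+\infty[$, in particular finite on every compact subinterval; but the previous paragraph shows that $H$ must blow up at some finite $T^\star<+\infty$, a contradiction. Hence no global solution exists when $\Lambda<0$. The main obstacle is the Riccati step: one has to check carefully that the inequality $\dot H\geq H^2/3$ (valid only once $H\geq 0$) is reached in finite time from any initial value, which is handled by the preliminary linear lower bound $H(t)\geq H(0)+|\Lambda|t$ coming from $\dot H\geq|\Lambda|$.
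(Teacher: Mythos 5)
Your proposal is correct and follows essentially the same route as the paper: from (5.2) one gets $\dot H\geq K_{ij}K^{ij}-\Lambda$, then $K_{ij}K^{ij}\geq H^2/3$ (your Cauchy--Schwarz step is exactly the paper's use of $\sigma_{ij}\sigma^{ij}\geq0$ via (5.8)--(5.9)), the linear bound $\dot H\geq-\Lambda$ forces $H>0$ at some finite time, and the Riccati comparison with $\dot W=W^2/3$ yields finite-time blow-up contradicting global existence. No substantive difference from the paper's argument.
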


\begin{proof}
Let $\Lambda<0$ be given and suppose that the system has a global
solution on the whole interval $[0,+\infty[$.\\
\indent By  $(5.3)$ and $(5.9)$, we have:
\begin{equation*}
\frac{dH}{dt}\geq \sigma_{ij}\sigma^{ij}+\frac{1}{3}H^2-\Lambda.\,\,
\end{equation*}

\begin{equation*}
But \,\,\sigma_{ij}\sigma^{ij}\geq0, \,\, then:
\end{equation*}

\begin{equation}
\frac{dH}{dt}\geq\frac{1}{3}H^2-\Lambda.
\end{equation}
Since $\Lambda<0$, we have $-\Lambda>0$ and from $(5.11)$ we deduce
the inequalities:
\begin{equation}
\frac{dH}{dt}\geq\frac{1}{3}H^2,
\end{equation}
\begin{equation}
\frac{dH}{dt}\geq-\Lambda.
\end{equation}
$(5.13)$ gives, by integration on $[0,t]$, $t\geq0$:
\begin{equation}
H(t)\geq H(0)-\Lambda t.
\end{equation}
 $(5.14)$ shows that $H(t)\longrightarrow +\infty$ when $t\longrightarrow +\infty$. So there
 exists $t_0>0$ such that: $H(t_0)>0$. Now we have, given $(5.12)$:
$H(t)\geq W(t)$, for $t\geq t_0$, where $W$ is every function
satisfying:
\begin{numcases}
\strut
 \frac{dW}{dt}=\frac{1}{3}W^2, \label{}\\
W(t_0)=H(t_0)>0.\label{}
\end{numcases}

A solution $W$ of  $(5.15)$ on $[t_0,+\infty[$ is, given $(5.16)$:
\begin{equation}
W(t)=\frac{3H(t_0)}{3-H(t_0)(t-t_0)}.
\end{equation}
 (5.17) shows that: $W(t)\longrightarrow +\infty$ when
$t\longrightarrow_{<} t^\star=t_0+\frac{3}{H(t_0)}>t_0$. Hence
$H(t)\longrightarrow +\infty$ when $t\longrightarrow_{<} t^\star$.
This is impossible since the continuous function $H$ is bounded on
the compact set $[t_0,t^\star]$ and cannot tend to $+\infty$ when
$t$ tends to $t^\star$. So there can exist no global solution if
$\Lambda<0$.
\end{proof}

\section{Asymptotic behavior and geodesic completeness}
Here we extend the results of $[4]$ to the charged case. We also
study the asymptotic behavior of the charge $e$, the matter density
$\rho$,the curvature tensor, the electromagnetic field and the
pseudo-tensor of pressure.

\subsection{Asymptotic behavior}

\begin{proposition}
We have at late times:

\begin{equation}
H(t)=-\sqrt{3\Lambda}+\mathcal{O}(e^{-2\gamma t})
\end{equation}
\begin{equation}
 \dot{H} =\mathcal{O}(e^{-2\mu t})
\end{equation}
\begin{equation}
(\det g)^{-1}=\mathcal{O}(e^{-6\gamma t})
\end{equation}
\begin{equation}
\det g=\mathcal{O}(e^{6\gamma t})
\end{equation}
\begin{equation}
\sigma_{ij}\sigma^{ij}=\mathcal{O}(e^{-2\gamma t})
\end{equation}
\begin{equation}
\sigma_{ij}=\mathcal{O}(e^{\gamma t})
\end{equation}
\begin{equation}
\tau_{00}(t)=\mathcal{O}(e^{-2\gamma t})
\end{equation}
\begin{equation}
T_{00}(t)=\mathcal{O}(e^{-2\gamma t})
\end{equation}
\begin{equation}
R=\mathcal{O}(e^{-2\gamma t})
\end{equation}

\begin{equation}
K_{ij}K^{ij}=\Lambda+\mathcal{O}(e^{-2\gamma t})
\end{equation}

\begin{equation}
g_{ij}(t)= e^{2\gamma t}(G_{ij}+\mathcal{O}(e^{-\gamma t}))
\end{equation}
\begin{equation}
g^{ij}(t)=e^{-2\gamma t}( G^{ij}+\mathcal{O}(e^{-\gamma t}))
\end{equation}
\begin{equation}
\sigma^{ij}(t)=\mathcal{O}(e^{-3\gamma t})
\end{equation}
\begin{equation}
E_{i}E^{i}(t)=\mathcal{O}(e^{-2\gamma t})
\end{equation}
\begin{equation}
F_{ij}F^{ij}(t)=\mathcal{O}(e^{-2\gamma t})
\end{equation}
\begin{equation}
E^{i}(t)=\mathcal{O}(e^{2\gamma t})
\end{equation}
\begin{equation}
F_{ij}=\mathcal{O}(e^{4\gamma t})
\end{equation}
\begin{equation}
F^{ij}\,\, is\,\, bounded
\end{equation}
\begin{equation}
K_{ij}(t)=\mathcal{O}(e^{2\gamma t})
\end{equation}
\begin{equation}
\rho(u^0)^2=\mathcal{O}(e^{-2\mu t})
\end{equation}
\begin{equation}
\rho=\mathcal{O}(e^{-2\mu t})
\end{equation}
\begin{equation}
e=\mathcal{O}(e^{2\gamma t})
\end{equation}
\begin{equation}
\Theta_{ij}(t)=\mathcal{O}(e^{2(\gamma-\mu)t})
\end{equation}
\begin{equation}
\tau_{ij}(t)=\mathcal{O}(e^{8\gamma t})
\end{equation}
\begin{equation}
\gamma^k_{ij}\,\,is\,\,bounded
\end{equation}
\begin{equation}
\Theta^{00}=\Theta_{00}=\mathcal{O}(e^{-2\gamma t})
\end{equation}
\begin{equation}
\tau_{0j}(t)=\mathcal{O}(e^{6\gamma t})
\end{equation}
\begin{equation}
T_{0j}=\mathcal{O}(e^{6\gamma t})
\end{equation}
\begin{equation}
\Theta^{0j}=\mathcal{O}(e^{4\gamma t})
\end{equation}
\begin{equation}
T_{ij}(t)=\mathcal{O}(e^{(6\gamma-2\mu)t})
\end{equation}
where $\gamma^2=\frac{\Lambda}{3}$, $0<\mu<\gamma$, $(G_{ij})$ and
$(G^{ij})$ are positive definite constant matrices.
\end{proposition}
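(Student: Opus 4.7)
The plan is to run a Wald-type asymptotic analysis, starting from the mean curvature and bootstrapping upward through the geometric, electromagnetic, and matter variables. The key inputs are Lemma 5.1, the Hamiltonian constraint (2.61), the identity $K_{ij}K^{ij}=\sigma_{ij}\sigma^{ij}+\tfrac{1}{3}H^{2}$, the negativity $R\le 0$ on the Bianchi I--VIII types (JANTZEN [3], WALD [8]), and relations (3.9) of Lemma 3.2 to translate $g$-norms into component bounds.

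First, I would refine the argument in Theorem 5.1. From Lemma 5.1, the Hamiltonian constraint, and $R\le 0$, one rewrites
\begin{equation*}
\dot H \;=\; \tfrac{3}{2}\sigma_{ij}\sigma^{ij}+12\pi(\tau_{00}+T_{00})+4\pi g^{ij}(\tau_{ij}+T_{ij})-\tfrac{1}{2}R,
\end{equation*}
in which every summand is non-negative. Combining with $K_{ij}K^{ij}\ge H^{2}/3$ and $H<-\sqrt{3\Lambda}<0$, writing $u=-(H+\sqrt{3\Lambda})>0$ yields a Riccati-type inequality $\dot u\le -2\gamma u-u^{2}/3$ with $\gamma^{2}=\Lambda/3$, hence $u=O(e^{-2\gamma t})$, i.e.\ (6.1). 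Substituting back in the identity for $\dot H$ and using that each non-negative piece is dominated by $\dot H +|R|/2$ delivers (6.2), (6.5), (6.7), (6.8), (6.9), and (6.10) (the loss $\mu<\gamma$ in (6.2) accommodates the nonlinear couplings in the right-hand side of the identity).

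Next, I would settle the metric block. Integrating $\partial_{t}\ln(\det g)=-2H$ gives $\det g=e^{6\gamma t}(1+O(e^{-2\gamma t}))$, hence (6.3)--(6.4). For $g_{ij}$ itself, pass to the rescaled metric $\widetilde g_{ij}=e^{-2\gamma t}g_{ij}$; using $\dot g_{ij}=-2K_{ij}=-\tfrac{2}{3}Hg_{ij}-2\sigma_{ij}$ one finds
\begin{equation*}
\dot{\widetilde g}_{ij}\;=\;\bigl(-2\gamma-\tfrac{2H}{3}\bigr)\widetilde g_{ij}-2e^{-2\gamma t}\sigma_{ij},
\end{equation*}
and the coefficient $-2\gamma-2H/3$ is $O(e^{-2\gamma t})$ by (6.1). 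A bootstrap uses (3.7) applied to $\sigma$: $\sigma_{ij}\sigma^{ij}=O(e^{-2\gamma t})$ together with $\|g^{-1}\|=O(e^{-2\gamma t})$ (from the converging $\widetilde g$) yields $|\sigma_{ij}|=O(e^{\gamma t})$, hence the driving term $e^{-2\gamma t}\sigma_{ij}=O(e^{-\gamma t})$ is integrable. A Gronwall argument then shows $\widetilde g_{ij}\to G_{ij}$ with remainder $O(e^{-\gamma t})$, giving (6.6), (6.11), (6.12), (6.13), (6.19), (6.25).

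For the remaining quantities, I would chain the estimates: (6.7) combined with $\tau_{00}=\tfrac{1}{2}g_{ij}E^{i}E^{j}+\tfrac{1}{4}F^{ij}F_{ij}$ (sum of non-negative terms) gives (6.14) and (6.15), and Lemma 3.2 then yields (6.16); integrating (2.53) gives (6.17); raising indices with $g^{ij}=O(e^{-2\gamma t})$ gives (6.18). Inserting these in (2.48) produces (6.24), (6.27), and the contraction $T_{0j}=O(e^{6\gamma t})$ in (6.28). From $T_{00}\ge \tfrac{4}{3}\rho(u^{0})^{2}\ge 0$ and (6.8) one gets (6.20), and since $u^{0}\ge 1$ also (6.21); (6.30) follows from (2.14) and these $\rho$ bounds. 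For the charge, the constraint (2.64) gives $e=-C^{i}_{ik}F^{0k}/u^{0}$, and combined with (6.16) and $u^{0}\ge 1$ yields (6.22). The $\Theta^{0\alpha}$-bounds (6.26) and (6.29) follow by applying Gronwall to equations (2.56)--(2.57) with the already-obtained bounds on $H$, $K^{i}_{j}$, $\rho$, $u^{\alpha}$. Finally, (2.19) with the metric asymptotics yields (6.25). The main obstacle is the bootstrap in the metric step: one must simultaneously control $\widetilde g_{ij}$, its inverse, and $\sigma_{ij}$, since each depends on the others; the slight loss $\mu<\gamma$ in the matter bounds reflects exactly where this circular dependence forces a sub-optimal exponent.
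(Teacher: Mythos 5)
Your treatment of the core decay block is where the argument breaks. You derive (6.1) correctly (your Riccati inequality is the paper's estimate (6.32) in different clothing, and your identity $\dot H=\tfrac32\sigma_{ij}\sigma^{ij}+12\pi(\tau_{00}+T_{00})+4\pi g^{ij}(\tau_{ij}+T_{ij})-\tfrac12 R$ is correct), but you then propose to obtain (6.2), (6.5) and (6.7)--(6.9) by ``substituting back'' into that identity and dominating each non-negative piece by $\dot H$. This is circular: an upper bound on $\dot H$ requires an upper bound on every piece, in particular on $4\pi g^{ij}T_{ij}$, which nothing proved so far controls (the Hamiltonian constraint contains only $\tau_{00}+T_{00}$), and which the paper itself can only bound after (6.2) is available. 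In the paper, the block (6.5), (6.7)--(6.9) comes purely algebraically from the constraint identity (5.10), $\tfrac23H^2-2\Lambda=\sigma_{ij}\sigma^{ij}+16\pi(\tau_{00}+T_{00})-R$, together with (6.1) and the sign conditions, with no reference to $\dot H$; and (6.2) is a separate argument, writing $e^{2\mu t}\dot H=\frac{d}{dt}\left[e^{2\mu t}(H+\sqrt{3\Lambda})\right]-2\mu e^{2\mu t}(H+\sqrt{3\Lambda})$ and exploiting $0<\mu<\gamma$. Your remark that ``the loss $\mu<\gamma$ accommodates the nonlinear couplings'' names no mechanism and does not prove (6.2). (By contrast, your observation $T_{00}\ge\tfrac43\rho(u^0)^2$, which yields (6.20)--(6.21) directly from (6.8) without any use of (6.2), is correct and in fact sharper than the paper's route through $g^{ij}T_{ij}$ and (5.1).)

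Two further steps are unsupported. First, you claim (6.28), $T_{0j}=O(e^{6\gamma t})$, by ``inserting'' the electromagnetic bounds into (2.48); but (2.48) expresses only the Maxwell tensor $\tau_{\alpha\beta}$ and says nothing about $T_{0j}=\tfrac43\rho u_0u_j+\Theta_{0j}$. The missing ingredient is the momentum constraint (2.62), $\nabla^iK_{ij}=-8\pi(\tau_{0j}+T_{0j})$, which is exactly what the paper uses together with (6.12), (6.19), (6.24), (6.27); without it, or without a prior bound on $\Theta_{0j}$, (6.28) does not follow. Second, your plan to get (6.26) and (6.29) by Gronwall on the evolution equations (2.56)--(2.57) cannot deliver the decay $\Theta^{00}=O(e^{-2\gamma t})$: the equation $\dot\Theta^{00}=H\Theta^{00}-C^i_{ij}\Theta^{0j}+\tfrac13\rho H+\rho u^0$ is forced by $\Theta^{0j}$, which you control only by a growing exponential, so when $C^i_{ij}\neq0$ (class B types, included among I--VIII) a Gronwall estimate yields growth, not decay. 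The intended derivations are algebraic: $\Theta_{00}=T_{00}-\tfrac43\rho(u^0)^2$ (the same identity you already invoke for (6.20)) gives (6.26) from (6.8) and (6.20), and $\Theta^{0j}=-g^{ij}T_{0i}-\tfrac43\rho u^0u^j$ gives (6.29) once (6.28) is in place. The remainder of your chain (the metric bootstrap following [4], the electromagnetic bounds, $\rho$, $e$, and (6.30)) is in line with the paper's proof.
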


$\underline{proof \,\,of \,\,(6.1)}$\\
$\indent$ We have by $(5.3)$ and $(5.9)$, since
$\sigma_{ij}\sigma^{ij}\geq0$:
\begin{equation}
\frac{dH}{dt}\geq\frac{1}{3}H^2-\Lambda>0.
\end{equation}
But:
$\frac{1}{3}H^2-\Lambda=\frac{1}{3}(-H-\sqrt{3\Lambda})(-H+\sqrt{3\Lambda}).$
We are in the case of global existence, then: $H<-\sqrt{3\Lambda}.$
So: $-H+\sqrt{3\Lambda}>2\sqrt{3\Lambda}.$ We then deduce from
$(6.31)$ that:
\begin{equation}
\frac{dH}{dt}\geq\frac{1}{3}H^2-\Lambda\geq\frac{2
\sqrt{3\Lambda}}{3}(-H-\sqrt{3\Lambda}).
\end{equation}
We can write $(6.32)$ as:
\begin{equation}
\frac{d(H+ \sqrt{3\Lambda})}{dt}+2\gamma(H+\sqrt{3\Lambda})\geq0;
\end{equation}
where
\begin{equation}
 \gamma^2=\frac{\Lambda}{3}.
\end{equation}
  Multiplying $(6.33)$  by $e^{2\gamma t}$ and integrate over $[0,t]$;
  $t>0$,
to obtain:
\begin{equation*}
e^{2\gamma t}(H+ \sqrt{3\Lambda})\geq H(0)+ \sqrt{3\Lambda}
\end{equation*}
which gives:
\begin{equation}
|H+ \sqrt{3\Lambda}|\leq|H(0)+\sqrt{3\Lambda}|e^{-2\gamma t}
\end{equation}
This proves $(6.1)$ with $\gamma$ given by $(6.34)$.   $\Box$\\
$\underline{Proof\,\,of\,\,(6.2)}$\\
We have, for $\mu\in ]0,\gamma[$:
\begin{equation}
e^{2\mu t}\frac{dH}{dt}=e^{2\mu t}\frac{d(H+
\sqrt{3\Lambda})}{dt}=\frac{d}{dt}[e^{2\mu
t}(H+\sqrt{3\Lambda})]-2\mu e^{2\mu t}(H+\sqrt{3\Lambda})
\end{equation}
which proves, given $(6.1)$ and $(6.35)$ since $0<\mu<\gamma$, that
$e^{2\mu t}\frac{dH}{dt}$ is integrable on $[0,+\infty[$. We
conclude that $e^{2\mu t}\frac{dH}{dt}\longrightarrow0$ as $t$ tends
to $+\infty$ and we obtain $(6.2)$. $\Box$\\
$\underline{Proof\,\,of\,\,(6.3)\,\,and\,\,(6.4)}$\\
We have seen that:
\begin{equation}
\frac{d}{dt}[\ln\left(\det g\right)]=-2H
\end{equation}
From $H<-\sqrt{3\Lambda}$ and $(6.36)$ we have:
\begin{equation*}
6\gamma\leq-2H\leq6\gamma+2|H(0)+ \sqrt{3\Lambda}|e^{-2\gamma t}.
\end{equation*}
this means, given $(6.37)$ that:
\begin{equation}
6\gamma\leq \frac{d}{dt}[\ln\left(\det g\right)] \leq6\gamma+2|H(0)+
\sqrt{3\Lambda}|e^{-2\gamma t}.
\end{equation}
Integrating over $[0,t]$, $t>0$,we obtain:
\begin{equation*}
(\det g^0)e^{6\gamma t}\leq \det g\leq(\det
g^0)\exp\left(\frac{1}{\gamma}|H(0)+
\sqrt{3\Lambda}|\right)e^{6\gamma t}.
\end{equation*}
which proves $(6.3)$ and $(6.4)$.\\

$\underline{proof \,\,of \,\,(6.5)}$\\
Since $\tau_{00}\geq0$, $T_{00}\geq0$, and  $-R\geq0$, $(5.10)$
gives:
\begin{equation}
 \frac{2}{3}H^2-2\Lambda\geq\sigma_{ij}\sigma^{ij}\geq0.
\end{equation}
But we have, since $H$ is increasing:
$0\leq-H+\sqrt{3\Lambda}\leq-H(0)+\sqrt{3\Lambda}.$ Then:
\begin{equation*}
 0\leq \frac{2}{3}H^2-2\Lambda\leq2|-H(0)+\sqrt{3\Lambda}||H +\sqrt{3\Lambda}|
\end{equation*}
$(6.5) $ follows then from $(6.39) $ and $(6.1)$.\\
$\underline{proof \,\,of \,\,(6.6)}$\\
$(6.6) $ is proven  as in $[4]$, using $(6.5)$ and the notion of
relative norms.\\
$\underline{proof \,\,of \,\,(6.7), \,\,(6.8)\,\, and \,\,(6.9)}$\\
$(5.10)$ gives, using $(6.5)$:
\begin{equation}
16\pi(\tau_{00}+T_{00})-R+\sigma_{ij}\sigma^{ij}=\frac{2}{3}H^2-2\Lambda=\mathcal{O}(e^{-2\gamma
t})
\end{equation}
But, $T_{00}\geq0$, $\tau_{00}\geq0$, $R\leq0$ and
$\sigma_{ij}\sigma^{ij}\geq0$. Hence $(6.7)$, $(6.8)$ and $(6.9)$
follow from $(6.40)$.\\
$\underline{proof \,\,of \,\,(6.10)}$\\
$(6.10)$ is a direct consequence of $(5.9)$, $(6.5)$ and $(6.1)$.\\
$\underline{proof \,\,of \,\,(6.11)\,\, and \,\,(6.12)}$\\
The proof of $(6.11)$ and $(6.12)$ are given by $[4]$ using
$(6.6)$.\\
$\underline{proof \,\,of \,\,(6.13)}$\\
We have: $\sigma^{ij}=g^{ik}g^{jl}\sigma_{kl}$. $(6.13)$ then
follows from $(6.12)$ and $(6.6)$.\\
$\underline{proof \,\,of \,\,(6.14) \,\, and \,\,(6.15)}$\\
We have:
\begin{equation*}
\tau_{00}=g^{ij}\tau_{ij}=\frac{1}{4}F^{ij}F_{ij}+\frac{1}{2}g_{ij}E^iE^j\geq0
\end{equation*}
Hence,
 \begin{equation*}
0\leq\frac{1}{4}F^{ij}F_{ij}\leq\tau_{00}\,\,
,\,\,0\leq\frac{1}{2}g_{ij}E^iE^j\leq\tau_{00}.
\end{equation*}
$(6.14)$ and $(6.15)$ are then consequences of $(6.7)$.\\
$\underline{proof \,\,of \,\,(6.16)}$\\
The second relation $(3.9)$ gives, using $(6.11)$ and $(6.14)$:
\begin{equation*}
|E^i|\leq \left(g_{rs}E^r E^s\right)^\frac{1}{2}
\left|g\right|^\frac{3}{2}\leq \left( E_r
E^r\right)^\frac{1}{2}\left|g\right|^\frac{3}{2} \leq ce^{-\gamma
t}e^{3\gamma t}=ce^{2\gamma t}
\end{equation*}
which proves $(6.16)$.\\
$\underline{proof \,\,of \,\,(6.17)}$\\
The integration of equation $(2.53)$ gives:
\begin{equation*}
F_{ij}(t)=F_{ij}(0)+C^l_{ij}\int_0^tg_{lm}E^m(s)ds
\end{equation*}
$(6.17)$ then follows from $(6.11)$ and $(6.16)$\\
$\underline{proof \,\,of \,\,(6.18)}$\\
Use $F^{ij}=g^{ik}g^{jl}F_{kl}$, $(6.12)$ and $(6.17)$.\\
$\underline{proof \,\,of \,\,(6.19)}$\\
The relation $(5.8)$ gives:
\begin{equation*}
|K_{ij}|\leq |\sigma_{ij}|+\frac{1}{3}|Hg_{ij}|.
\end{equation*}
$(6.19)$ then follows from $(6.6)$, $(6.1)$ and $(6.11)$.\\
$\underline{proof \,\,of \,\,(6.20)\,\, and \,\,(6.21)}$\\
The relation $(5.1)$ gives, since $\tau_{00}=g^{ij}\tau_{ij}$:
\begin{equation*}
4\pi e^{2\mu t}g^{ij}T_{ij}=e^{2\mu
t}\left[\frac{dH}{dt}+(3\Lambda-H^2)-R+12\pi
T_{00}+8\pi\tau_{00}\right].
\end{equation*}
The relations $(6.1)$, $(6.2)$, $(6.7)$, $(6.8)$, $(6.9)$ and
$0<\mu<\gamma$, give:
\begin{equation*}
g^{ij}T_{ij}=\mathcal{O}(e^{-2\mu t}).
\end{equation*}
But we know that
\begin{equation*}
T_{ij}=\frac{4}{3}\rho u_iu_j+\frac{1}{3}\rho g_{ij},
\end{equation*}
thus:
\begin{equation}
g^{ij}T_{ij}=\frac{4\rho}{3}g^{ij}u_iu_j+\rho\geq
\rho(g^{ij}u_iu_j+1)=\rho(u^0)^2\geq\rho>0.
\end{equation}
$(6.41)$ gives $(6.20)$ and $(6.21)$.\\
$\underline{proof \,\,of \,\,(6.22)}$\\
The constraint $ (2.64)$ $i.e$: $C_{ik}^iE^k+eu^0=0$, $(6.16)$ and
$u^0\geq1$ give $(6.22)$.\\
$\underline{proof \,\,of \,\,(6.23)}$\\
$(6.23)$ is given by $(2.14)$ $i.e$:
$\Theta_{ij}=\frac{\rho}{3}g_{ij}$, $(6.11)$ and $(6.21)$.\\
$\underline{proof \,\,of \,\,(6.24)}$\\
From $(2.48)$ we have:
\begin{equation*}
\tau_{ij}=\frac{1}{2}g_{ij}E^kE_k-g_{ik}g_{jl}E^kE^l
-\frac{1}{4}g_{ij}F^{lk}F_{lk} +g_{jl}F_{ik}F^{lk}
\end{equation*}
this proves $(6.24)$ using $(6.11)$, $(6.14)$, $(6.16)$, $(6.15)$,
$(6.17)$ and $(6.18)$.\\
$\underline{proof \,\,of \,\,(6.25)}$\\
It is a consequence of $(2.19)$, using $(6.11)$ and $(6.12)$.\\
$\underline{proof \,\,of \,\,(6.26)}$\\
From $(2.12)$ we obtain:
\begin{equation*}
\Theta_{00}=T_{00}-\frac{4}{3}\rho(u^0)^2.
\end{equation*}
$(6.26)$ is then a consequence of $(6.8)$ and $(6.20)$ since
$0<\mu<\gamma$.\\
$\underline{proof \,\,of \,\,(6.27)}$\\
From $(2.48)$ we have:
\begin{equation*}
\tau_{0j}=-E^{k}F_{jk}
\end{equation*}
$(6.27)$ is then a consequence of $(6.16)$ and $(6.17)$.\\
$\underline{proof \,\,of \,\,(6.28)}$\\
We have, using the constraint $(2.62)$:
\begin{equation*}
4\pi
T_{0j}=-8\pi\tau_{0j}+g^{ik}\gamma^l_{ki}K_{lj}-g^{ik}\gamma^l_{kj}K_{il}
\end{equation*}
$(6.28)$ is then a consequence of $(6.27)$, $(6.19)$, $(6.24)$ and
$(6.12)$.\\
$\underline{proof \,\,of \,\,(6.29)}$\\
From $(6.12)$ we obtain:
\begin{equation*}
\Theta^{0j}=T^{0j}-\frac{4}{3}\rho
u^0u^j=-g^{ij}T_{0i}-\frac{4}{3}\rho u^0u^j.
\end{equation*}
$(6.29)$ is then a consequence of $(6.28)$, $(6.12)$, $(6.21)$ and
$(3.9)$.\\
$\underline{proof \,\,of \,\,(6.30)}$\\
We have:
\begin{equation*}
T_{ij}=\frac{4}{3}\rho u_iu_j+\frac{1}{3}\rho g_{ij}.
\end{equation*}
 From $(3.9)$ we have:
 \begin{equation}
|T_{ij}|\leq c\rho(u^0)^3|g|^3+\frac{1}{3}|\rho g_{ij}|.
\end{equation}
$(6.30)$ then follows from $(6.12)$, $(6.20)$, $(6.21)$, and
$(6.42)$\\
This ends the proof of proposition 6.1.

\subsection{Geodesic completeness}
The geodesic equations for the metric $(2.1)$ imply that along
geodesics, the variables $t$, $u^0$, $u^i$ satisfy a differential
system which contains, between others, the equation:
\begin{equation}
\frac{dt}{ds}=u^0,
\end{equation}
where $s$ is an affine parameter. The space-time will be
geodesically complete if the affine parameter $s$ tends to $+\infty$
as the time $t$ tends to $+\infty$. Since $(u^0)^2=1+g_{ij}u^iu^j$,
it will be enough if we prove that:
\begin{equation}
\frac{ds}{dt}=(1+g_{ij}u^iu^j)^{-\frac{1}{2}}\geq c>0.
\end{equation}
since by integration we will have: $s\geq c\,t+s_o$, which proves
that
$s$ tends to $+\infty$ as  $t$ tends to $+\infty$.\\
We begin by the following result:
\begin{proposition}
The quantity $g^{ij}u_iu_j$ satisfies the equation:
\begin{equation}
\frac{d}{dt}\left(g^{ij}u_iu_j\right)=\frac{2}{3}Hg^{ij}u_iu_j+2\sigma^{ij}u_iu_j
-\frac{3e_0}{2\rho_0}\exp\left(\frac{3}{4}\int_0^t\frac{1}{u^0(s)}ds\right)E^iu_i.
\end{equation}
\end{proposition}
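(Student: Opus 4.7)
The plan is to reduce the left-hand side to $2u^0\dot{u}^0$ and then invoke the evolution equation (2.32) for $u^0$, after which the only nontrivial work is to identify the ratio $e/\rho$ in closed form.

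First I would observe that, by the constraint (2.15)--(2.16) together with the notations (2.50),
$$g^{ij}u_i u_j \;=\; g_{ij}u^i u^j \;=\; (u^0)^2 - 1,$$
so differentiating in $t$ gives immediately $\tfrac{d}{dt}(g^{ij}u_iu_j) = 2u^0 \dot{u}^0$. Substituting equation (2.32) for $\dot{u}^0$ and multiplying through by $2u^0$ yields
$$\frac{d}{dt}(g^{ij}u_iu_j) \;=\; 2K_{ij}u^i u^j \;+\; \frac{3e}{2\rho}\,F_i^{\,\,0}\,u^i.$$

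Next I would use the traceless splitting (5.8), $K_{ij}=\sigma_{ij}+\tfrac{1}{3}Hg_{ij}$, together with $\sigma_{ij}u^iu^j = \sigma^{ij}u_iu_j$ and $g_{ij}u^iu^j=g^{ij}u_iu_j$, to rewrite
$$2K_{ij}u^i u^j \;=\; \frac{2}{3}H\,g^{ij}u_iu_j \;+\; 2\sigma^{ij}u_iu_j.$$
For the electromagnetic term, lowering the index via (2.50), $F_i^{\,\,0} = g_{i\lambda}F^{\lambda 0} = -g_{ij}E^{j}$, so that $F_i^{\,\,0}u^i = -E^{i}u_i$. This brings the identity into the form of the proposition up to the coefficient $3e/(2\rho)$, which must be turned into the stated exponential factor.

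The crux of the argument, and the step I expect to require the most care, is identifying $e/\rho$. For this I would combine the matter conservation law (2.26) with the charge conservation (2.37)--(2.38). Expanding $\nabla_\alpha(\rho u^\alpha)=u^0\dot\rho + \rho\nabla_\alpha u^\alpha$ in (2.26) yields
$$\nabla_\alpha u^\alpha \;=\; -\frac{3}{4} - \frac{u^0\dot\rho}{\rho},$$
which substituted into (2.38) gives
$$\frac{d}{dt}\ln\!\frac{e}{\rho} \;=\; \frac{\dot e}{e}-\frac{\dot\rho}{\rho} \;=\; \frac{3}{4u^0}.$$
Integrating from $0$ to $t$ with the initial values $\rho(0)=\rho_0$, $e(0)=e_0$ produces
$$\frac{e(t)}{\rho(t)} \;=\; \frac{e_0}{\rho_0}\exp\!\left(\frac{3}{4}\int_0^t\frac{ds}{u^0(s)}\right),$$
which, inserted into the previous display, gives exactly (6.45). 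The only real obstacle is bookkeeping: carefully tracking the sign changes induced by the antisymmetry of $F$ when moving between $F^{0i}$, $F^{\beta}_{\ \lambda}$ and $F_i^{\ 0}$, and making sure that the combination of (2.26) and (2.38) produces the clean telescoping which eliminates the divergence term $\nabla_\alpha u^\alpha$ entirely.
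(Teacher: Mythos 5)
Your proof is correct, and it follows a genuinely different route from the paper's in its first half. The paper differentiates $g_{ij}u^iu^j$ directly, using $\dot g_{ij}=-2K_{ij}$ from (2.51) together with the evolution equation (2.55) for $\dot u^i$, and must then carry out the simplifications explicitly: the term $\gamma^l_{ij}u^iu^ju_l$ vanishes by the antisymmetry $C^m_{ij}=-C^m_{ji}$, the term $F_{ml}u^mu^l$ vanishes by antisymmetry of $F$, and the constraint (2.64), $C^j_{jk}E^k=-eu^0$, turns the structure-constant combination into $-\tfrac{3}{2}\tfrac{e}{\rho}E^iu_i$. You instead write $g^{ij}u_iu_j=(u^0)^2-1$ and feed in the $\beta=0$ component (2.32) of the current-lines system for $\dot u^0$, together with $F_i^{\,\,0}=-g_{ij}E^j$; this is shorter and avoids the cancellations, but note that (2.32) belongs to the original field equations rather than to the evolution system $(S)$ (in which only $u^i$ is evolved and $u^0$ is defined algebraically), so its validity for the constructed solution rests on the propagation of the constraint (2.64) (Proposition 3.1) — the constraint thus enters your argument implicitly where the paper invokes it explicitly. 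The second half coincides in substance with the paper's derivation of (6.51): the paper integrates the equations for $\rho$ and $e$ separately and takes the quotient, whereas you integrate $\frac{d}{dt}\ln(e/\rho)=\frac{3}{4u^0}$ after eliminating $\nabla_\alpha u^\alpha$ between (2.26) and (2.38); the two are equivalent, with the harmless caveat that your logarithmic form presupposes $e_0>0$, while for $e_0=0$ one has $e\equiv 0$ and the identity (6.45) holds trivially.
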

\begin{proof}
From the evolution equation $(2.51)$ and since $g_{ij}$ is
symmetric, we have:
\begin{equation*}
\frac{d}{dt}(g^{ij}u_iu_j)=\frac{d}{dt}(g_{ij}u^iu^j)=-2K_{ij}u^iu^j+2g_{ij}\dot{u}^iu^j.
\end{equation*}
We can write this equation as:
\begin{equation}
\frac{d}{dt}(g^{ij}u_iu_j)=-2K^{ij}u_iu_j+2\dot{u}^iu_i.
\end{equation}
We have, using equation $(2.55)$ of $u^i$:
\begin{equation}
2\dot{u}^iu_i=4K^{ij}u_iu_j-\frac{2\gamma_{jk}^iu^ju^ku_i}{u^0}+\frac{3}{2}\frac{1}{\rho
u^0}C_{jk}^jE^kE^iu_i-\frac{3}{2}\frac{1}{\rho(
u^0)^2}C_{jk}^jE^kF_{ml}u^mu^l\,\,
\end{equation}
The last term is zero since $F_{ml}u^mu^l=0,$ and the constraint
$(2.64)$ gives:   $C^j_{jk}E^k=-eu^0$. Equation $(6.47)$ then
writes:
\begin{equation}
2\dot{u}^iu_i=4K^{ij}u_iu_j-\frac{3}{2}\frac{e}{\rho}E^iu_i-\frac{2\gamma_{jk}^iu^ju^ku_i}{u^0}\,\,.
\end{equation}
We have from $(6.46)$, $(6.47)$ and $(6.48)$:
\begin{equation}
\frac{d}{dt}(g^{ij}u_iu_j)=2K^{ij}u_iu_j-\frac{3}{2}\frac{e}{\rho}E^iu_i-\frac{2\gamma_{ij}^lu^iu^ju_{_l}}{u^0}\,\,.
\end{equation}
But we can write from expression $(2.19)$ and $\gamma^l_{ij}$:
\begin{eqnarray*}
2\gamma_{ij}^lu^iu^ju_{_l}&=&-g_{im}(C_{jk}^mu^ju^k)u^i\\
& &+g_{jm}(C_{ki}^mu^ku^i)u^j\\
& &+g_{km}(C_{ij}^mu^iu^j)u^k=0
\end{eqnarray*}
since $C_{ij}^m=-C_{ji}^m$. Then, use the traceless tensor
$\sigma_{ij}=K_{ij}-\frac{H}{3}g_{ij}$, $(6.49)$ gives:
\begin{equation}
\frac{d}{dt}(g^{ij}u_iu_j)=2\sigma^{ij}u_iu_j+\frac{2}{3}Hg^{ij}u_iu_j-\frac{3}{2}
\frac{e}{\rho}E^iu_i.
\end{equation}
Now we know by $(6.26)$ that $\rho$ satisfies:
\begin{equation*}
\dot{\rho}=-(\frac{^4\nabla_\alpha\,\,^4u^\alpha}{^4u^0}+\frac{3}{4}\frac{1}{^4u^0})\rho
\end{equation*}
Integrating   over $[0,t]$, $t>0$, we obtain:
\begin{equation*}
\rho(t)=\rho(0)\exp\left(-\int_0^t\left(\frac{^4\nabla_\alpha\,\,^4u^\alpha}{^4u^0}+
\frac{3}{4}\frac{1}{^4u^0}\right)ds\right).
\end{equation*}
By $(2.39)$ we have:
\begin{equation*}
e(t) =
e(0)\exp\left(-\int_0^t\frac{\,^4\nabla_\alpha\,\,^4u^\alpha}{^4u^0}ds\right).
\end{equation*}
Hence:
\begin{equation}
\frac{e(t)}{\rho(t)}=\frac{e_0}{\rho_0}\exp(\frac{3}{4}\int_0^t\frac{1}{^4u^0}ds).
\end{equation}
$(6.45)$ is then a direct consequence of $(6.50)$ and $(6.51)$.
\end{proof}

\begin{proposition}
If $\sqrt{\frac{\Lambda}{3}} > \frac{3}{4}$\,\,, the space-time is
geodesically complete.
\end{proposition}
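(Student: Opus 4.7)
By the remarks preceding the proposition, it is enough to show that $X(t) := g_{ij}u^iu^j = g^{ij}u_iu_j$ remains bounded on $[0,+\infty[$: then $u^0=\sqrt{1+X}$ is bounded above, $ds/dt = 1/u^0 \geq c>0$, and integrating gives $s\to+\infty$ as $t\to+\infty$. I would set $\gamma = \sqrt{\Lambda/3}$, take Proposition 6.2 as the starting identity for $\dot X$, and estimate each of its three terms separately in order to convert it into a scalar differential inequality.

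The first term, $\tfrac{2}{3}HX$, is a damping: by the proof of Theorem 5.1, $H<-\sqrt{3\Lambda}=-3\gamma$, so $\tfrac{2}{3}HX\leq -2\gamma X$. The second, $2\sigma^{ij}u_iu_j$, I would control by regarding $\sigma^{ij}$ as a symmetric bilinear form on covectors and bounding its operator norm by its Frobenius norm:
\begin{equation*}
|2\sigma^{ij}u_iu_j| \;\leq\; 2\,(\sigma_{kl}\sigma^{kl})^{1/2}\,(g^{ij}u_iu_j) \;=\; O(e^{-\gamma t})\,X
\end{equation*}
using (6.5). The third (charge) term is the delicate one. The exponential factor is bounded by $e^{3t/4}$ because $u^0\geq1$ forces $\int_0^t (u^0)^{-1}ds\leq t$, and the pairing $|E^iu_i|$ yields, by Cauchy--Schwarz in the metric $g$,
\begin{equation*}
|E^iu_i|=|g_{ij}E^iu^j|\leq (g_{kl}E^kE^l)^{1/2}(g_{kl}u^ku^l)^{1/2}=(E_kE^k)^{1/2}X^{1/2}.
\end{equation*}
Since $(E_kE^k)^{1/2}=O(e^{-\gamma t})$ by (6.14), the charge term is bounded by $C\,e^{(3/4-\gamma)t}X^{1/2}$.

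Combining, I obtain a scalar differential inequality
\begin{equation*}
\dot X \;\leq\; -2\gamma X + C_1 e^{-\gamma t}X + C_2\, e^{(3/4-\gamma)t}\,X^{1/2},
\end{equation*}
and the hypothesis $\gamma=\sqrt{\Lambda/3}>3/4$ enters precisely to make the exponent $3/4-\gamma$ negative. Passing to $Y=(1+X)^{1/2}\geq1$ (which avoids the irregularity of $X^{1/2}$ at zero) and using $X\leq Y^2$, $X^{1/2}\leq Y$, this becomes a linear inequality of the form $\dot Y + \gamma Y \leq \gamma + (C_1/2)e^{-\gamma t}Y + (C_2/2)e^{(3/4-\gamma)t}$. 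For $t$ past a fixed time $T_0$ the $C_1 e^{-\gamma t}$ coefficient is dominated by $\gamma$, leaving a uniform damping with a forcing term that is integrable on $[0,+\infty[$; Gronwall's lemma then bounds $Y$, hence $X$, uniformly in $t$, which yields (6.44).

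The main obstacle is the charge term. One is forced to use the crude bound $1/u^0\leq1$ inside the exponential coming from $e/\rho$, which costs an exponent of $3/4$; this loss can only be absorbed by the decay $(E_kE^k)^{1/2}=O(e^{-\gamma t})$ furnished by (6.14), and the balance between the two rates is exactly $\gamma>3/4$. The cosmological damping $-2\gamma X$ supplied by $H$ is what prevents any further slack from being needed, so the threshold in the hypothesis is in that sense sharp within this approach.
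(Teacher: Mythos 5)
Your proposal is correct and follows essentially the same route as the paper: reduce completeness to the boundedness of $g_{ij}u^iu^j$, start from the identity of Proposition 6.2, estimate the three terms with the asymptotics of Proposition 6.1 (the damping from $H<-\sqrt{3\Lambda}$, the decay of $\sigma$, the Cauchy--Schwarz bound on $E^iu_i$ together with $\int_0^t(u^0)^{-1}ds\leq t$), and observe that the balance $\gamma>\frac{3}{4}$ closes the resulting scalar differential inequality. The only deviations are cosmetic and harmless: you bound the $\sigma$-term by its Frobenius norm directly from $(6.5)$ where the paper goes through $(6.12)$--$(6.13)$ and $(6.52)$--$(6.53)$, and you conclude via the substitution $Y=(1+X)^{1/2}$ with a linear damping argument instead of the paper's $Z=e^{\omega t}g^{ij}u_iu_j$ with the arithmetic--geometric mean step, which in addition yields exponential decay of $g_{ij}u^iu^j$ rather than mere boundedness, though boundedness suffices for $(6.44)$.
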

\begin{proof}
Since by $(6.13)$ $\sigma^{ij}=\mathcal{O}(e^{-3\gamma t})$,
$e^{3\gamma t}\sigma^{ij}$ is bounded. The matrix $G^{ij}$ in
proposition 6.1 being constant and positive definite, we have:
\begin{equation}
e^{3\gamma t}\sigma^{ij}u_iu_j\leq CG^{ij}u_iu_j
\end{equation}
where $C$ is a constant. By $(6.12)$ we can write:
\begin{equation}
G^{ij}u_iu_j\leq Ce^{2\gamma t}g^{ij}u_iu_j.
\end{equation}
Since $g$ is a scalar product,we have:
\begin{equation}
-E^iu_i\leq
\left(E^iE_i\right)^{\frac{1}{2}}\left(g^{ij}u_iu_j\right)^{\frac{1}{2}}.
\end{equation}
So, $(6.1)$, $(6.45)$, $(6.52)$, $(6.53)$, $(6.54)$ and $(6.14)$
imply:
\begin{equation}
\frac{d}{dt}\left(g^{ij}u_iu_j\right)\leq (-2\gamma+ce^{-2\gamma
t})g^{ij}u_iu_j+ce^{-\gamma t}g^{ij}u_iu_j+ c\exp\left(-\gamma
t+\frac{3}{4}\int_0^t\frac{1}{u^0(s)}ds\right)\left(g^{ij}u_iu_j\right)^{\frac{1}{2}}.
\end{equation}
From where, we deduce:
\begin{equation}
\frac{d}{dt}\left(g^{ij}u_iu_j\right)\leq (-2\gamma+ce^{-\gamma
t})g^{ij}u_iu_j+ c\exp\left(-\gamma
t+\frac{3}{4}\int_0^t\frac{1}{u^0(s)}ds\right)\left(g^{ij}u_iu_j\right)^{\frac{1}{2}}.
\end{equation}
Sine $u^0\geq1$, we have $\int^t_0\frac{ds}{u^0}\leq t$ and $-\gamma
t +\frac{3}{4}\int^t_0\frac{ds}{u^0}\leq
t\left(-\gamma+\frac{3}{4}\right)$\\
So if: $-\gamma+\frac{3}{4}<0$ or, if we set:
$\omega=\gamma-\frac{3}{4}>0$, we deduce from $(6.56)$:
\begin{equation}
\frac{d}{dt}\left(g^{ij}u_iu_j\right)\leq (-2\omega+ce^{-\omega
t})g^{ij}u_iu_j+ ce^{-\omega
t}\left(g^{ij}u_iu_j\right)^{\frac{1}{2}}
\end{equation}
Let us set:
\begin{equation*}
Z=e^{\omega t}g^{ij}u_iu_j.
\end{equation*}
We have
\begin{equation*}
\frac{dZ}{dt}=\omega Z+ e^{\omega
t}\frac{d}{dt}\left(g^{ij}u_iu_j\right),
\end{equation*}
Hence, from $(6.57)$ we obtain:
\begin{equation*}
\frac{dZ}{dt}\leq \omega Z+(-2\omega+ce^{- \omega t}
)Z+ce^{-\frac{\omega}{2}t}Z^{\frac{1}{2}}.
\end{equation*}
from where we have:
\begin{equation}
\frac{dZ}{dt}\leq ce^{-\frac{\omega}{2}t}Z +
ce^{-\frac{\omega}{2}t}Z^{\frac{1}{2}}.
\end{equation}
But
\begin{equation*}
e^{-\frac{\omega}{2}t}Z^{\frac{1}{2}}=e^{-\frac{\omega}{4}t}\left(e^{-\frac{\omega}{2}t}Z\right)^{\frac{1}{2}}
\leq
\frac{1}{2}\left[e^{-\frac{\omega}{2}t}+e^{-\frac{\omega}{2}t}Z\right].
\end{equation*}
We then deduced, from $(6.58)$ that:
\begin{equation}
\frac{d(Z+1)}{dt}\leq ce^{-\frac{\omega}{2}t}(Z +1).
\end{equation}
$(6.59)$ proves that $Z=e^{\omega t}(g^{ij}u_iu_j)$ is bounded, and
then, $g^{ij}u_iu_j =g_{ij}u^iu^j$ is bounded. So there exists $A>1$
such that
\begin{equation*}
1\leq u^0\leq A.
\end{equation*}
We then have $(6.44)$ and this ends the proof of proposition 6.3.
\end{proof}

\section{The positivity conditions}
Let $(\,\,^4X^\alpha)$ be a future pointing vector. The quantity
\begin{equation*}
(\,\,^4\tau_{\alpha\beta}+\,\,
^4T_{\alpha\beta})\,\,^4X^{\alpha}\,\,^4X^\beta
\end{equation*}
represents physically, the density of energy of a charged particle,
measured by an observant whose speed is $(\,\,^4X^\alpha)$. Hence,
this quantity should always be positive. Recall that a physical
theory always has to satisfy at least one positivity condition
$[2]$. There are three types of positivity conditions, they are, for
$(\,\,^4X^\alpha)$, $(\,\,^4Y^\alpha)$  future pointing vectors:

$a)$ The weak positivity condition which means:
\begin{equation}
(\,\,^4\tau_{\alpha\beta}+\,\,
^4T_{\alpha\beta})\,\,^4X^{\alpha}\,\,^4X^\beta\geq0
\end{equation}
$b)$ The strong positivity condition which means:
\begin{equation}
 ^4R_{\alpha\beta}\,\,^4X^{\alpha}\,\,^4X^\beta\geq0;
\end{equation}
$c)$ The dominant energy condition which means:
\begin{equation}
(\,\,^4\tau_{\alpha\beta}+\,\,
^4T_{\alpha\beta})\,\,^4X^{\alpha}\,\,^4Y^\beta\geq0.
\end{equation}
Obviously the dominant energy condition implies the weak energy
condition, just setting\\ $\,\, ^4Y^\alpha=\,\, ^4X^\alpha$. We
begin by proving:
\begin{proposition}
 Let $(\,\,^4X^\alpha)$ and $(\,\,^4Y^\alpha)$ be two future
 pointing vectors. Then:
\begin{equation}
^4X^\alpha \,\, ^4Y_\alpha\leq0.
\end{equation}
\end{proposition}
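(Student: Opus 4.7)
The plan is to exploit the $3{+}1$ split of the metric $(2.1)$ together with the Cauchy--Schwarz inequality on the positive definite spatial metric $g_{ij}$. Writing $^4X^\alpha = (X^0, X^i)$ and $^4Y^\alpha = (Y^0, Y^i)$ in the frame $(\partial_t, e_i)$, the metric $(2.1)$ yields
\begin{equation*}
^4X^\alpha\,^4Y_\alpha = \,^4g_{\alpha\beta}\,^4X^\alpha\,^4Y^\beta = -X^0Y^0 + g_{ij}X^iY^j,
\end{equation*}
so the statement reduces to showing $g_{ij}X^iY^j \le X^0Y^0$.

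First I would make precise what ``future pointing'' means in this context: since $^4X$ is future directed, its zeroth component $X^0$ is strictly positive (it lies in the same causal cone as $\partial_t$, which is orthogonal to $\Sigma_t$), and causality gives $^4g_{\alpha\beta}\,^4X^\alpha\,^4X^\beta \le 0$, i.e.\
\begin{equation*}
(X^0)^2 \ge g_{ij}X^iX^j \ge 0,
\end{equation*}
and identically for $^4Y$.

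Next I would apply the Cauchy--Schwarz inequality for the positive definite form $g_{ij}$:
\begin{equation*}
g_{ij}X^iY^j \le \bigl(g_{ij}X^iX^j\bigr)^{1/2}\bigl(g_{ij}Y^iY^j\bigr)^{1/2}.
\end{equation*}
Combining this with the two causal bounds $\bigl(g_{ij}X^iX^j\bigr)^{1/2}\le X^0$ and $\bigl(g_{ij}Y^iY^j\bigr)^{1/2}\le Y^0$ (valid since $X^0,Y^0>0$) gives $g_{ij}X^iY^j \le X^0Y^0$, and inserting this into the expression for $^4X^\alpha\,^4Y_\alpha$ yields the claim.

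The step that requires care, rather than being a genuine obstacle, is ensuring that the sign of $X^0$ and $Y^0$ is correctly controlled so that one may take square roots and multiply inequalities without reversing signs; this is where the precise definition of future-pointing is used. Once that is settled, the conclusion is immediate from Cauchy--Schwarz, and no use of the evolution equations or constraints is needed.
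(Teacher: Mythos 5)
Your argument is correct and coincides with the paper's own proof: the same $3{+}1$ decomposition of $^4g_{\alpha\beta}\,^4X^\alpha\,^4Y^\beta$, the same causal bounds $g_{ij}X^iX^j\le (X^0)^2$, $g_{ij}Y^iY^j\le (Y^0)^2$ with $X^0,Y^0\ge 0$, and the same Cauchy--Schwarz inequality for the positive definite metric $g_{ij}$. Nothing essential differs, so the proposal stands as is.
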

\begin{proof}
Since $\,\,^4X^\alpha \,\,
^4Y_\alpha=\,\,^4g_{\alpha\beta}\,\,^4X^\alpha\,\,^4Y^\beta$, the
definition $(2.1)$ of the metric $^4g$ implies that $(7.4)$ is
equivalent to:
\begin{equation}
-\,\,^4X^0\,\,^4Y^0+g_{ij}\,\,^4X^i\,\,^4Y^j \leq0.
\end{equation}
But ($\,\,^4X^\alpha)$ and $(\,\,^4Y^\alpha)$ are future pointing,
thus we have:
\begin{equation*}
^4X^\alpha \,\,^4X_\alpha\leq0,\,\,^4X^0\geq0,\,\,^4Y^\alpha
\,\,^4Y_\alpha\leq0,\,\,^4Y^0\geq0
\end{equation*}
or equivalently:
\begin{equation*}
0\leq g_{ij}\,\,^4X^i\,\,^4X^j\leq(\,\,^4X^0)^2;\,\,^4X^0\geq0;
\,\,0\leq g_{ij}\,\,^4Y^i\,\,^4Y^j\leq(\,\,^4Y^0)^2;\,\,^4Y^0\geq0.
\end{equation*}
So, taking the square roots and the products:
\begin{equation}
0\leq(g_{ij}\,\,^4X^i\,\,^4X^j)^\frac{1}{2}(g_{ij}\,\,^4Y^i\,\,^4Y^j)^\frac{1}{2}\leq\,\,^4X^0\,\,^4Y^0.
\end{equation}
Since $g=(g_{ij})$ is a scalar product, we have the Schwartz
inequality:
\begin{equation}
|g_{ij}\,\,^4X^i\,\,^4X^j|\leq(g_{ij}\,\,^4X^i\,\,^4X^j)^\frac{1}{2}(g_{ij}\,\,^4Y^i\,\,^4Y^j)^\frac{1}{2}.
\end{equation}
  Hence, $(7.4)$ is a direct consequence of $(7.6)$ and
$(7.7)$.
\end{proof}
We now prove an important result.
\begin{proposition}
Let $(\,\,^4X^\alpha)$ and $(\,\,^4Y^\alpha)$ be two future pointing
vectors. The Maxwell tensor $^4\tau_{\alpha\beta}$ satisfies:
\begin{equation}
^4\tau_{\alpha\beta}\,\,^4X^\alpha\,\,^4Y^\beta\geq0.
\end{equation}
\end{proposition}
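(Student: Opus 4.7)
I would reduce the claim to Proposition 7.1 via the Maxwell dominant energy condition. Set
\begin{equation*}
J^\gamma := -\,{}^4\tau^{\gamma}{}_{\alpha}\,{}^4X^\alpha.
\end{equation*}
A short index manipulation using the symmetry of $\,{}^4\tau$ gives $J^\gamma\,{}^4Y_\gamma = -\,{}^4\tau_{\alpha\beta}\,{}^4X^\alpha\,{}^4Y^\beta$, so the desired inequality is equivalent to $J^\gamma\,{}^4Y_\gamma\leq 0$. If $J$ is future-directed causal whenever $\,{}^4X$ is, Proposition 7.1 applied to the pair $(J,\,{}^4Y)$ finishes the proof.

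\textbf{Showing $J$ is future-causal.} I would work in the $3+1$ splitting introduced after (2.1). Using the explicit expressions (2.48), the components of $J$ read
\begin{equation*}
J^0 = \tau_{00}\,{}^4X^0 + \tau_{0i}\,{}^4X^i,\qquad J^i = -g^{ij}\bigl(\tau_{j0}\,{}^4X^0+\tau_{jk}\,{}^4X^k\bigr),
\end{equation*}
and $\,{}^4X$ future-directed means $\,{}^4X^0>0$ together with $g_{ij}\,{}^4X^i\,{}^4X^j\leq(\,{}^4X^0)^2$. The positivity $J^0\geq 0$ will follow from $\tau_{00}\geq 0$ (immediate from (2.48)) combined with the Poynting-type bound $(\tau_{00})^2\geq g^{ij}\tau_{0i}\tau_{0j}$ and Cauchy--Schwarz for the positive-definite spatial metric $g$. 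The causality $g_{ij}J^iJ^j\leq (J^0)^2$ reduces, by a parallel but lengthier expansion, to analogous Cauchy--Schwarz-type bounds involving the spatial stresses $\tau_{ij}$.

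\textbf{Main obstacle.} The central algebraic input is the Poynting inequality
\begin{equation*}
\bigl(\tfrac{1}{2}g_{ij}E^iE^j+\tfrac{1}{4}F^{kl}F_{kl}\bigr)^{2}\geq g^{ij}(E^kF_{ik})(E^lF_{jl}),
\end{equation*}
a purely three-dimensional identity on $(G,g)$ which is the curved-space avatar of the flat-space bound $|\vec E\times\vec B|^2\leq\tfrac14(|\vec E|^2+|\vec B|^2)^2$. I expect this to be the main obstacle: the cleanest route is to pick, pointwise, a $g$-orthonormal spatial frame and reduce to the standard Maxwell DEC computation; the same frame argument will also yield the analogous bound needed for the stresses $\tau_{ij}$. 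An equivalent viewpoint is to pass to a four-dimensional orthonormal frame adapted to $\,{}^4X$ (valid when $\,{}^4X$ is timelike; the null case follows by continuity), in which $J$ takes the form $(u,-S^{\hat i})$ with $u$ the energy density and $S^{\hat i}$ the Poynting vector, reducing the DEC directly to $|S|\leq u$. Once these estimates are in place, Proposition 7.1 closes the argument in one line.
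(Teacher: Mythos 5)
Your proposal is correct in strategy, but it follows a genuinely different route from the paper. The paper works directly on the field strength: it invokes the algebraic canonical form of the antisymmetric 2-form $^4F_{\alpha\beta}$ with respect to a tetrad of future pointing vectors $l,n,x,y$ (equations (7.9)--(7.12)), computes in the null case (7.11) that $^4F^{\lambda\mu}\,^4F_{\lambda\mu}=0$ and $^4F_{\alpha\lambda}\,^4F_\beta^{\,\,\lambda}=\tfrac{c^2}{4}l_\alpha l_\beta$, hence $^4\tau_{\alpha\beta}=\tfrac{c^2}{4}l_\alpha l_\beta$, and then concludes by applying Proposition 7.1 twice to the factors $l_\alpha\,^4X^\alpha$ and $l_\beta\,^4Y^\beta$. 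You instead prove the dominant energy condition for the Maxwell tensor: you form the flux vector $J^\gamma=-\,^4\tau^{\gamma}{}_{\alpha}\,^4X^\alpha$, argue that $J$ is future causal whenever $^4X$ is, and close with Proposition 7.1 applied to the pair $(J,\,^4Y)$, the heart of the matter being the Poynting-type bound, i.e.\ $|S|\leq u$ in an orthonormal frame adapted to $^4X$ (with the null case of $^4X$ recovered by continuity). Both routes ultimately rest on Proposition 7.1; the paper's route buys an extremely short algebraic computation, but it hinges on the canonical-form decomposition of $^4F$ and, as displayed, only carries out the null-field case (7.11) -- the generic case (7.10) would need a parallel computation -- whereas your route makes no case distinction on the algebraic type of $^4F$ and isolates the physically transparent inequality
\begin{equation*}
\Bigl(\tfrac{1}{2}g_{ij}E^iE^j+\tfrac{1}{4}F^{kl}F_{kl}\Bigr)^{2}\geq g^{ij}\bigl(E^kF_{ik}\bigr)\bigl(E^lF_{jl}\bigr),
\end{equation*}
which is a true and standard fact (the curved-space form of $|\vec E\times\vec B|\leq\tfrac12(|\vec E|^2+|\vec B|^2)$). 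The price you pay is that this estimate, and the companion bound $g_{ij}J^iJ^j\leq(J^0)^2$ involving the spatial stresses, are only sketched in your plan; to make the argument self-contained you should actually write out the adapted-orthonormal-frame computation (or the pointwise $g$-orthonormal spatial frame reduction), since that is where the entire content of the proposition sits in your approach.
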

\begin{proof}
It shall be enough if we prove the inequality $(7.8)$ in a
particular frame. Guided by the electromagnetic field itself, we
choose a frame of four future pointing vectors $l=(l^\alpha)$;
$n=(n^\alpha)$; $x=(x^\alpha)$; $y=(y^\alpha)$ such that:
\begin{equation}
l_\alpha l^\alpha=n_\alpha n^\alpha=l_\alpha x^\alpha=l_\alpha
y^\alpha=n_\alpha y^\alpha=n_\alpha x^\alpha=0.
\end{equation}
The antisymmetric 2-form $\,\,^4F_{\alpha\beta}$ can be written in
one of two forms:
\begin{equation}
^4F_{\alpha\beta}=\frac{A}{2}(l_\alpha n_\beta-l_\beta n_\alpha) +
\frac{B}{2}(x_\alpha y_\beta-x_\beta y_\alpha)
\end{equation}
or
\begin{equation}
^4F_{\alpha\beta}=\frac{C}{2}(l_\alpha x_\beta-l_\beta x_\alpha),
\end{equation}
Where $A$, $B$, $C$ are  constants. It is important to choose the
constants $A$, $B$, $C$ such that:
\begin{equation}
l_\alpha n^\alpha=-1;\,\, x_\alpha x^\alpha=y_\alpha y^\alpha=1;\,\,
x_\alpha y^\alpha=0.
\end{equation}
Let us consider the Maxwell tensor:
\begin{equation}
^4\tau_{\alpha\beta}=-\frac{1}{4}\,\,^4g_{\alpha\beta}\,\,^4F^{\lambda\mu}\,\,^4F_{\lambda\mu}
+ \,\,^4F_{\alpha\lambda}\,\,^4F_\beta^{\,\,\lambda}.
\end{equation}
$(7.11)$ gives, using $(7.9)$ and $(7.2)$:
\begin{equation*}
^4F^{\lambda\mu}\,\,^4F_{\lambda\mu}=0;\,\,^4F_{\alpha\lambda}\,\,^4F_\beta^{\,\,\lambda}=\frac{c^2}{4}l_\alpha
l_\beta.
\end{equation*}
$(7.13)$  then gives:
\begin{equation}
^4\tau_{\alpha\beta}=\frac{c^2}{4}l_\alpha l_\beta.
\end{equation}
Let $(\,\,^4X^\alpha)$ and $(\,\,^4Y^\alpha)$ be two future pointing
vectors. We have:
\begin{equation*}
^4\tau_{\alpha\beta}\,\,^4X^\alpha
\,\,^4Y^\beta=\frac{c^2}{4}(l_\alpha \,\,^4X^\alpha)(l_\beta
\,\,^4Y^\beta)
\end{equation*}
But since $(l^\alpha)$, $(\,\,^4X^\alpha)$, $(\,\,^4Y^\alpha)$ are
future pointing vectors, by $(7.4)$ we have
$l_\alpha\,\,^4X^\alpha\leq0$ and $l_\beta\,\,^4Y^\beta\leq0$. Then
$^4\tau_{\alpha\beta}\,\,^4X^\alpha\,\,^4Y^\beta\geq0$ and this
proves proposition 7.2
\end{proof}

\begin{theorem}
The global solution of the Einstein-Maxwell system with
pseudo-tensor of pressure satisfies:

$1^0)$ The weak positivity condition;

$2^0)$ The strong  positivity condition if
$\Lambda\geq\frac{\left(H(0)\right)^2}{2}$.
\end{theorem}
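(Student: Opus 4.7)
The plan is to prove each part of the theorem by a direct algebraic manipulation, exploiting the work already done.

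For part $1^0$, I would start by writing
$$(\,\,^4\tau_{\alpha\beta}+\,\,^4T_{\alpha\beta})\,\,^4X^{\alpha}\,\,^4X^\beta = \,\,^4\tau_{\alpha\beta}\,\,^4X^\alpha\,\,^4X^\beta + \,\,^4T_{\alpha\beta}\,\,^4X^\alpha\,\,^4X^\beta$$
and attack each summand separately. The first summand is handled immediately by setting $\,\,^4Y^\alpha = \,\,^4X^\alpha$ in Proposition 7.2, which gives $\,\,^4\tau_{\alpha\beta}\,\,^4X^\alpha\,\,^4X^\beta\geq 0$. For the second summand I would substitute the decomposition $(2.12)$, $\,\,^4T_{\alpha\beta}=\frac{4}{3}\rho\,\,^4u_\alpha\,\,^4u_\beta+\,\,^4\Theta_{\alpha\beta}$, and note that the first piece produces $\frac{4}{3}\rho(\,\,^4u_\alpha\,\,^4X^\alpha)^2\geq 0$ since $\rho>0$. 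The remaining term $\,\,^4\Theta_{\alpha\beta}\,\,^4X^\alpha\,\,^4X^\beta$ would be expanded with the explicit form from $(2.13)$ and $(2.14)$: writing $a=\,\,^4z_0\,\,^4X^0$ and $b=\,\,^4z_i\,\,^4X^i$, one gets $a^2 + ab + \frac{\rho}{3}g_{ij}\,\,^4X^i\,\,^4X^j$, and completing the square reduces the question to an inequality of the form $(a+b/2)^2 + \frac{\rho}{3}g_{ij}\,\,^4X^i\,\,^4X^j\geq b^2/4$, which I would establish via Cauchy--Schwarz on the spatial inner product together with the fact that $\,\,^4z$ is future pointing.

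For part $2^0$, the strategy is to convert the positivity condition on $\,\,^4R_{\alpha\beta}$ into one involving $\,\,^4\tau_{\alpha\beta}+\,\,^4T_{\alpha\beta}$. Taking the trace of the Einstein equation $(2.8)$ and using $\,\,^4g^{\alpha\beta}\,\,^4\tau_{\alpha\beta}=0$ gives $\,\,^4R = 4\Lambda - 8\pi\,\,^4T$, where $\,\,^4T=\,\,^4g^{\alpha\beta}\,\,^4T_{\alpha\beta}$. Substituting back yields
$$\,\,^4R_{\alpha\beta} = 8\pi\bigl(\,\,^4\tau_{\alpha\beta}+\,\,^4T_{\alpha\beta}\bigr) - 4\pi\,\,^4T\,\,^4g_{\alpha\beta} + \Lambda\,\,^4g_{\alpha\beta}.$$
Contracting with $\,\,^4X^\alpha\,\,^4X^\beta$ for a unit future-pointing timelike vector ($\,\,^4g_{\alpha\beta}\,\,^4X^\alpha\,\,^4X^\beta=-1$), the first term is $\geq 0$ by part $1^0$, and the remainder becomes $4\pi\,\,^4T - \Lambda$. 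I would then bound $\,\,^4T$ from below in terms of $\rho$ and the structure of $\Theta$, and close the inequality by comparing $\Lambda$ with an upper bound coming from the evolution of $H$. The inequalities $(5.2)$--$(5.4)$ control $H$ in terms of $H(0)$ and $\Lambda$, and the hypothesis $\Lambda\geq (H(0))^2/2$ is precisely what converts those bounds into the sign needed for $\,\,^4R_{\alpha\beta}\,\,^4X^\alpha\,\,^4X^\beta\geq 0$.

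The main obstacle will be the second part: the combination $\,\,^4\Theta_{\alpha\beta}\,\,^4X^\alpha\,\,^4X^\beta$ contains a cross term $\,\,^4z_0\,\,^4z_i\,\,^4X^0\,\,^4X^i$ whose sign is indefinite, so the algebraic manipulation for weak positivity is delicate and requires the Cauchy--Schwarz step to be tight enough; and for strong positivity, matching the scalar constraint on $\Lambda$ with the pointwise sign of $4\pi\,\,^4T - \Lambda$ is where the cosmological condition $\Lambda\geq (H(0))^2/2$ must be used decisively. The routine verifications (linearity, expanding $\,\,^4T$, using $\,\,^4\tau$ traceless) are straightforward; the substantive content lies in these two sign checks.
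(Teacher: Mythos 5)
Your part $1^0)$ is correct and is essentially the paper's own argument: the split into $\,^4\tau_{\alpha\beta}$, the $\frac{4}{3}\rho\,^4u_\alpha\,^4u_\beta$ piece and $\,^4\Theta_{\alpha\beta}$, with Proposition 7.2 applied at $\,^4Y=\,^4X$ and $\rho>0$, is exactly what the paper does. The only difference is in the $\,^4\Theta$ term: the paper avoids completing the square by rewriting $(\,^4z_0)^2(\,^4X^0)^2+(\,^4z_0\,^4X^0)(\,^4z_i\,^4X^i)=(\,^4z_0\,^4X^0)(\,^4z_\alpha\,^4X^\alpha)$, a product of two nonpositive factors by Proposition 7.1, whereas your square-completion needs the Cauchy--Schwarz bound $|\,^4z_i\,^4X^i|\le-\,^4z_0\,^4X^0$ (which uses that both $\,^4z$ and $\,^4X$ are future pointing, not only $\,^4z$); both versions close, and they rest on the same inequality.

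Part $2^0)$ as outlined does not close, and the gap is at the final sign check. Your traced formula is correct, but for this matter model $\,^4T=\,^4g^{\alpha\beta}\,^4T_{\alpha\beta}=-\frac{4}{3}\rho+\bigl(\rho-(\,^4z_0)^2\bigr)=-\frac{1}{3}\rho-(\,^4z_0)^2<0$, so the remainder $4\pi\,^4T-\Lambda<-\Lambda<0$ for every unit timelike $\,^4X$: no lower bound on $\,^4T$ can make it nonnegative, and the hypothesis $\Lambda\ge\frac{(H(0))^2}{2}$ cannot intervene because $H$ does not occur anywhere in the traced equation, so $(5.2)$--$(5.4)$ have nothing to attach to. (Your own trace computation in fact gives $\frac{\,^4R}{2}-\Lambda=\Lambda+4\pi\bigl(\frac{\rho}{3}+(\,^4z_0)^2\bigr)>0$, so positivity cannot be extracted from that coefficient.) The paper proceeds differently at precisely this point: it keeps $\,^4R_{\alpha\beta}\,^4X^\alpha\,^4X^\beta=\bigl(\frac{\,^4R}{2}-\Lambda\bigr)\,^4X^\alpha\,^4X_\alpha+8\pi(\,^4\tau_{\alpha\beta}+\,^4T_{\alpha\beta})\,^4X^\alpha\,^4X^\beta$ and, instead of evaluating $\,^4R$ by the trace, bounds it from above through the $3+1$ identities $g^{ij}\,^4R_{ij}=R-\partial_tH+H^2$ (relations (7.23)--(7.26)), the $00$ Einstein equation (7.27), and the signs $R\le0$, $\tau_{00}+T_{00}\ge0$, $\partial_tH\ge0$, together with $H$ negative and increasing so $H^2\le(H(0))^2$; this yields $\frac{\,^4R}{2}-\Lambda\le(H(0))^2-2\Lambda\le0$ under the hypothesis, and then $(7.4)$ gives $\,^4X^\alpha\,^4X_\alpha\le0$, so that term is nonnegative. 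To salvage your route you would have to show that $8\pi(\,^4\tau_{\alpha\beta}+\,^4T_{\alpha\beta})\,^4X^\alpha\,^4X^\beta$ dominates $\Lambda-4\pi\,^4T$ uniformly over unit timelike $\,^4X$, which is a genuinely different estimate not supplied by your sketch (and a word is also needed to pass from unit timelike vectors to arbitrary future pointing ones, by scaling and a limit for null directions).
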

\begin{proof}-\\
$1^0)$ Let $(\,\,^4X^\alpha)$ be a future pointing vector. By
$(7.8)$ we have:
\begin{equation}
^4\tau_{\alpha\beta}\,\,^4X^\alpha \,\,^4X^\beta\geq0.
\end{equation}
Now by definition $(2.12)$ of $\,\,^4T_{\alpha\beta}$, we have:
\begin{equation}
^4T_{\alpha\beta} =\frac{4}{3}\rho \,\,^4u_\alpha \,\,^4u_\beta
+\,\,^4\Theta_{\alpha\beta}
\end{equation}
If $(\,\,^4X^\alpha)$ is a future pointing vector and since
$(\,\,^4u^\alpha)$ is by definition a future pointing vector,
$(7.4)$ implies, since $\rho>0$:
\begin{equation}
\left(\frac{4}{3}\rho \,\,^4u_\alpha
\,\,^4u_\beta\right)\,\,^4X^\alpha \,\,^4X^\beta= \frac{4}{3}\rho
\left(\,\,^4u_\alpha \,\,^4X^\alpha\right)\left(\,\,^4u_\beta
\,\,^4X^\beta\right)\geq0
\end{equation}
Now we use the definition of the pseudo-tensor of pressure
$\,\,^4\Theta_{\alpha\beta}$. We have, using $(2.13)$ and $(2.14)$,
for a future pointing vector:
\begin{eqnarray*}
^4\Theta_{\alpha\beta}\,\,^4X^\alpha \,\,^4X^\beta
&=&\,\,^4\Theta_{00}\left(\,\,^4X^0\right)^2+2\,\,^4\Theta_{0i}\,\,^4X^0\,\,^4X^i
+\,\,^4\Theta_{ij}\,\,^4X^i\,\,^4X^j\\
&=&\left(\,\,^4Z_0\right)^2\left(\,\,^4X^0\right)^2+2\left(\frac{1}{2}
\,\,^4Z_0\,\,^4Z_i\right)\,\,^4X^0\,\,^4X^i+\frac{1}{3}\rho
g_{ij}\,\,^4X^i\,\,^4X^j\\
&=&\left(\,\,^4Z_0\right)^2\left(\,\,^4X^0\right)^2+\left(
\,\,^4Z_0\,\,^4X^0\right)\,\,^4Z_i\,\,^4X^i+ \frac{1}{3}\rho
g_{ij}\,\,^4X^i\,\,^4X^j\\
&=&\left(\,\,^4Z_0\right)^2\left(\,\,^4X^0\right)^2+\left(
\,\,^4Z_0\,\,^4X^0\right)\left(\,\,^4Z_\alpha\,\,^4X^\alpha-\,\,^4Z_0\,\,^4X^0\right)+
\frac{1}{3}\rho g_{ij}\,\,^4X^i\,\,^4X^j
\end{eqnarray*}
\begin{equation}
^4\Theta_{\alpha\beta}\,\,^4X^\alpha
\,\,^4X^\beta=\left(\,\,^4Z_0\,\,^4X^0\right)\left(\,\,^4Z_\alpha\,\,^4X^\alpha\right)+
\frac{1}{3}\rho g_{ij}\,\,^4X^i\,\,^4X^j
\end{equation}
But since $(\,\,^4Z^\alpha)$ is future pointing, we have
$\,\,^4Z^0\geq0$, then $\,\,^4Z_0=-\,\,^4Z^0\leq0$, thus,
$\,\,^4Z_0\,\,^4X^0\leq0$ since $\,\,^4X^0\geq0$.\\
Using once more $(7.4)$ we have $\,\,^4Z_\alpha\,\,^4X^\alpha\leq0$,
and finally:
$\left(\,\,^4Z_0\,\,^4X^0\right)\left(\,\,^4Z_\alpha\,\,^4X^\alpha\right)\geq0$.
Now the metric $(g_{ij})$ is positive definite, thus
$g_{ij}\,\,^4X^i\,\,^4X^j\geq0$. In conclusion $(7.18)$ implies\\
$^4\Theta_{\alpha\beta}\,\,^4X^\alpha \,\,^4X^\beta\geq0$, and using
$(7.17)$ and $(7.18)$ we obtain:
\begin{equation}
^4T_{\alpha\beta}\,\,^4X^\alpha \,\,^4X^\beta\geq0.
\end{equation}
Finally we have by $(7.15)$ and $(7.19)$:
\begin{equation*}
\,\,^4\tau_{\alpha\beta}\,\,^4X^\alpha \,\,^4X^\beta
+\,\,^4T_{\alpha\beta}\,\,^4X^\alpha
\,\,^4X^\beta=\left(\,\,^4\tau_{\alpha\beta} +
\,\,^4T_{\alpha\beta}\right)\,\,^4X^\alpha \,\,^4X^\beta\geq0
\end{equation*}
and the weak positivity condition is proved.\\
$2^0)$ Let $(\,\,^4X^\alpha)$ be a future pointing vector.\\
The Einstein's equations $(2.8)$ imply:
\begin{equation}
^4R_{\alpha\beta}\,\,^4X^\alpha
\,\,^4X^\beta=\left(\,\,\frac{^4R}{2}-\Lambda\right)\,\,^4g_{\alpha\beta}\,\,^4X^\alpha\,\,^4X^\beta+8\pi\left(
\,\,^4\tau_{\alpha\beta}+\,\,^4T_{\alpha\beta}\right)\,\,^4X^\alpha\,\,^4X^\beta.
\end{equation}
By the weak positivity condition we know that:
\begin{equation}
8\pi\left(
\,\,^4\tau_{\alpha\beta}+\,\,^4T_{\alpha\beta}\right)\,\,^4X^\alpha\,\,^4X^\beta\geq0.
\end{equation}
Let us see at what condition we will also have:
\begin{equation*}
\left(\frac{^4R}{2}-\Lambda\right)\,\,^4g_{\alpha\beta}\,\,^4X^\alpha\,\,^4X^\beta=
\left(\frac{^4R}{2}-\Lambda\right) \,\,^4X^\alpha\,\,^4X_\alpha\geq0
\end{equation*}
By definition, we have:
\begin{equation}
^4R=\,\,^4g^{\alpha\beta}\,\,^4R_{\alpha\beta}=\,\,^4g^{00}\,\,^4R_{00}+g^{ij}\,\,^4R_{ij}=
-\,\,^4R_{00}+g^{ij}\,\,^4R_{ij}.
\end{equation}
We know that $\,\,^4R_{ij}$ and $R_{ij}$ are linked by:
\begin{equation}
^4R_{ij}=R_{ij}-\partial_tK_{ij}+HK_{ij}-2K_{il}K_j^{\,\,\,l}.
\end{equation}
Contracting with $g^{ij}$ we obtain:
\begin{equation}
g^{ij}\,\,^4R_{ij}=R-g^{ij}\partial_tK_{ij}+H^2-2K_{ij}K^{ij}.
\end{equation}
Now we have, using equation $(2.51)$:
\begin{equation}
g^{ij}\partial_tK_{ij}=\partial_t(g^{ij}K_{ij})-K_{ij}\partial_tg^{ij}=\partial_tH
-2K_{ij}K^{ij}
\end{equation}
From $(7.24)$ and $(7.25)$ we obtain:
\begin{equation}
g^{ij}\,\,^4R_{ij}=R-\partial_tH+H^2.
\end{equation}
Now in the Einstein's equations $(2.8)$ take $\alpha=\beta=0$ to
obtain:
\begin{equation}
^4R_{00}=-\frac{^4R}{2}+\Lambda+8\pi(\,\,^4\tau_{00}+\,\,^4T_{00}).
\end{equation}
Then using $(7.22)$, $(7.26)$ and $(7.27)$ we obtain:
\begin{equation*}
^4R=\frac{^4R}{2}-\Lambda-8\pi(\,\,^4\tau_{00}+\,\,^4T_{00})+R-\partial_tH+H^2.
\end{equation*}
From where we deduce, since $\,\,^4\tau_{00}+T_{00}\geq0$, $R\leq0$
and $\partial_tH\geq0$ (see (6.31)):
\begin{equation*}
\frac{^4R}{2} \leq-\Lambda+H^2
\end{equation*}
But   $H$ is increasing and negative, then $H^2\leq H^2(0)$; hence:
\begin{equation*}
\frac{^4R}{2}-\Lambda\leq -2\Lambda+ \left(H(0)\right)^2.
\end{equation*}
So we will have $\frac{^4R}{2}-\Lambda\leq0 $ if $-2\Lambda+
\left(H(0)\right)^2\leq0 $ or
$\Lambda\geq\frac{\left(H(0)\right)^2}{2}$. Since $(\,\,^4X^\alpha)$
is future pointing, we have by $(7.4)$
$\,\,^4X^\alpha\,\,^4X_\alpha\leq0.$ In conclusion if
$\Lambda\geq\frac{\left(H(0)\right)^2}{2}$, then:
\begin{equation}
\left(\frac{^4R}{2}-\Lambda\right)\,\,^4X^\alpha\,\,^4X_\alpha\geq0
\end{equation}
$(7.28)$ and $(7.21)$ imply
\begin{equation*}
^4R_{\alpha\beta}\,\,^4X^\alpha \,\,^4X^\beta\geq0.
\end{equation*}
This ends ends the proof of theorem $(7.1)$)
\end{proof}

\section{Well-posedness}
According to Hadamard, a mathematical problem is well-posed if its
solution exists, if the solution is unique, and if the solution is a
continuous function of the initial data.\\
We have only to prove the last point. in this paragraph we suppose
$e_0>0.$ The initial data is denoted $V_0$ where:
\begin{equation*}
V_0=(g^0,K^0,F^0,E^0,U^{0,i},\Theta^{0,\alpha},\rho_{_0},e_{_0})\in
\Omega:=\mathcal{S}_3(\mathbb{R})\times\mathcal{S}_3(\mathbb{R})\times\mathcal{A}_3(\mathbb{R})\times
\mathbb{R}^{10}\times]0,\infty[\times]0,\infty[
\end{equation*}
where $\mathcal{S}_3(\mathbb{R})$ and $\mathcal{A}_3(\mathbb{R})$
are respectively the sets of $3\times3$ symmetric and antisymmetric
matrices. We suppose that the initial data satisfy the constraints
$(2.61)$, $(2.62)$, $(2.63)$, $(2.64)$. The global solution:
\begin{equation*}
S=(g,K,F,E,u^i,\theta^{0\alpha},\rho,e)
\end{equation*}
of the evolution system is also in $\Omega$ and is a function:
$$\begin{array}{lcll}
S&: \Omega\times [0,\infty[&\longrightarrow &\Omega\\
 & (V_0,t)&\longmapsto &      S(V_0,t),
\end{array}$$
We can also denote:
$$\begin{array}{lcll}
S(V_0,.)&:  [0,\infty[&\longrightarrow &\Omega\\
 & t&\longmapsto &      S(V_0,t),
\end{array}$$
such that: $S(V_0,0)=V_0$. Every component of $S$ is of class $C^1$.
Note that $\Omega$ is an open set of $\widetilde{\Omega}
=\mathcal{S}_3(\mathbb{R})\times\mathcal{S}_3(\mathbb{R})\times\mathcal{A}_3(\mathbb{R})\times
\mathbb{R}^{12}$. We can write:
$$\begin{array}{lcll}
S&:  \Omega &\longrightarrow &\mathcal{C}^1([0,\infty[,
\widetilde{\Omega})\\
 & V_0 &\longmapsto &      S(V_0,.),
\end{array}$$
If $\Gamma$ is a compact set of $[0,\infty[$, we define the seminorm
$P_\Gamma$ on $\mathcal{C}^1([0,\infty[,\widetilde{\Omega})$ by:
\begin{equation*}
P_\Gamma(\varphi)=\sup_{t\in\Gamma}\left( |\varphi|_1\right),
\varphi\in \mathcal{C}^1([0,\infty[,\widetilde{\Omega})
\end{equation*}
Where $\left| W\right|_1=\sum\limits_{i=1}^n\left| W_i\right|$ with
$W=(W_1,...,W_n)\in \mathbb{R}^n$. $\mathcal{C}^1([0,\infty[,
\widetilde{\Omega})$ is a locally convex topological vector space,
whose topology $\tau$ is generated by the family of seminorms
$P_\Gamma.$ We will prove that $S$ is continuous from $\Omega$ to
$\mathcal{C}^1([0,\infty[, \widetilde{\Omega})$, endowed with the
topology $\tau$. Let $p\in \mathbb{N}^\star$. Set:
\begin{equation*}
\Omega_p=\mathcal{B}(0,2^p)\times]2^{-p},2^p[\times]2^{-p},2^p[
\end{equation*}
where $\mathcal{B}(0,2^p)$ is the ball of radius $2^p$ of
$\mathcal{S}_3(\mathbb{R})\times\mathcal{S}_3(\mathbb{R})\times\mathcal{A}_3(\mathbb{R})\times
\mathbb{R}^{10}$. Notice that:
\begin{equation*}
\overline{\Omega}_p\subset\Omega_{p+1}\,\,\,\,\, and\,\,\,\,\,\,
\bigcup\limits_{p\in\mathbb{N}}\Omega_p=\Omega
\end{equation*}
The function $S$ will be continuous on  $\Omega$, if its restriction
to every  $\Omega_p$ is continuous. \\
In what follows, we associate to every initial data the iterated
sequence defined in paragraph 4. Recall that  the iterated sequence
converges to a unique solution if $\Lambda>0$ and
$H(0)<0$.\\
For every $n \in\mathbb{N}$, the iterated sequence
\begin{equation*}
V_n=(g_n,K_n,F_n,E_n,u^i_n,\theta_n^{0\alpha},\rho_n,e_n)
\end{equation*}
is a function:
$$\begin{array}{lcll}
V_n&: \Omega\times [0,\infty[&\longrightarrow &\Omega\\
 & (V_0,t)&\longmapsto &      V_n(V_0,t),
\end{array}$$
\begin{proposition}
$V_n$ is continuons on $\Omega\times[0,\infty[$.
\end{proposition}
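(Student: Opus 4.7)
The plan is an induction on $n$. The base case $n=0$ is immediate: $V_0(V_0,t)\equiv V_0$ is independent of $t$ and reduces to the identity in the first argument, both of which are continuous on $\Omega\times[0,\infty[$.

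For the inductive step, assume $V_n$ is continuous on $\Omega\times[0,\infty[$ and write
$$V_{n+1}(V_0,t) = V_0 + \int_0^t f(V_n(V_0,s))\, ds,$$
where $f$ denotes the right-hand side of the evolution system (S). I would first verify that $f$ is continuous as a map $\Omega\to\widetilde{\Omega}$: inspecting equations (2.51)--(2.59), each component of $f$ is a rational function of the entries of its argument, of the inverse metric $g^{ij}$, of $u^0=\sqrt{1+g_{ij}u^iu^j}$, and of $1/\rho$. The denominators $\det g$, $u^0\geq 1$ and $\rho$ do not vanish on $\Omega$ by definition of that set, hence $f$ is in fact $C^{\infty}$ there. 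Combined with the inductive hypothesis, the composition $f\circ V_n$ is therefore continuous on $\Omega\times[0,\infty[$.

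It remains to show continuity of the integral operator $(V_0,t)\mapsto\int_0^t f(V_n(V_0,s))\,ds$. Fix a point $(V_0^\star,t^\star)\in\Omega\times[0,\infty[$, choose $p$ so that $V_0^\star\in\Omega_p$, pick a compact neighborhood $K\subset\Omega_p$ of $V_0^\star$, and consider the compact interval $[0,t^\star+1]$. Because $V_n$ is continuous on the compact set $K\times[0,t^\star+1]$, its image is compact in $\Omega$, and on this image $f$ is uniformly continuous and bounded by some constant $M$. For a sequence $(V_0^k,t^k)\to(V_0^\star,t^\star)$ the triangle inequality gives
$$\left|\int_0^{t^k}\! f(V_n(V_0^k,s))\,ds-\int_0^{t^\star}\! f(V_n(V_0^\star,s))\,ds\right|\leq M|t^k-t^\star|+\int_0^{t^\star+1}\!\left|f(V_n(V_0^k,s))-f(V_n(V_0^\star,s))\right|ds,$$
and both terms on the right tend to $0$: the first trivially, the second by dominated convergence since the integrand is pointwise continuous in the first variable and uniformly bounded by $2M$. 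Adding the continuous identity map $V_0\mapsto V_0$ preserves continuity, and the induction closes.

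The main obstacle I anticipate is not the continuity argument itself, which is routine once $f$ has been shown continuous, but rather the implicit claim that $V_n(V_0,\cdot)$ in fact lives in $\Omega$ for every $t\in[0,\infty[$, i.e. that $\det g_n$ stays positive and $\rho_n>0$ globally in time. Proposition 4.1 only secures this on a short interval $[0,T[$, so one must invoke the global existence argument of Section 5 (the lower bound on $-H$ coming from $\Lambda>0$ and $H(0)<0$) and check that the iteration scheme can be extended across $[0,\infty[$ without losing these positivity properties. Once this verification is in place, the induction described above closes uniformly on each $\Omega_p\times[0,T]$, and continuity of $V_n$ on all of $\Omega\times[0,\infty[$ follows.
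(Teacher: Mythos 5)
Your argument is essentially the paper's own proof: an induction on $n$ based on the formula $V_{n+1}(V_0,t)=V_0+\int_0^t f\circ V_n(V_0,s)\,ds$, with the constant map as base case and continuity of the composition plus the integral giving the inductive step. You simply make explicit what the paper leaves implicit (continuity of $f$ on $\Omega$, continuity of the parameter-dependent integral, and the need for the iterates to remain in $\Omega$ so that $f\circ V_n$ is defined), so the two proofs coincide in substance.
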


\begin{proof}
For $ V_0\in\Omega$, $ t\in[0,\infty[$,  we have by definition
$V_0(V_0,t)=V_0$; hence $V_0$ is continuous on
$\Omega\times[0,\infty[$. Suppose that for $ n\in\mathbb{N}$,
$V_0,\,\,V_1,...,\,\,V_n$ are continuous, then the function:
\begin{equation}
V_{n+1}(V_0,t)=\int_0^tf\circ V_n(V_0,s)ds+V_0
\end{equation}
where $f$ defines the right hand side of the evolution system, is
also continuous on $\Omega\times[0,\infty[$. Hence $V_n$ is
continuous for every $ n\in\mathbb{N}$.
\end{proof}

\begin{proposition}
For $p\in\mathbb{N}$, $0<T<+\infty$, the sequence $(V_n)_n$ is
uniformly bounded on  $\Omega_p\times[0,T[$
\end{proposition}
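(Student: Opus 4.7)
The plan is to prove uniform boundedness of the iterated sequence $(V_n)$ on $\Omega_p \times [0,T[$ by combining the induction-style estimates from the proof of Proposition~4.1 with the global a priori bounds supplied by Proposition~5.1 and Theorem~5.1, but making every constant depend only on $p$ rather than on the individual initial datum.

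First, I would extract uniform bounds on the initial data. Since $V_0 \in \Omega_p$, there is a constant $M_p > 0$ depending only on $p$ such that $|g^0|,\,|K^0|,\,|F^0|,\,|E^0|,\,|U^0|,\,|\theta^0| \leq 2^p$ and $\rho_0,e_0 \in \,]2^{-p},2^p[$. I would also note that $|u^{0,0}| = \sqrt{1+g^0_{ij}U^{0,i}U^{0,j}}$ is uniformly bounded on $\Omega_p$, and that the Hamiltonian constraint, together with the positive-definiteness of $g^0$ at each point of $\Omega_p$, lets me control $(\det g^0)^{-1}$ uniformly on the compactified domain $\overline{\Omega_p} \subset \Omega_{p+1}$.

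Second, I would rerun the inductive argument of Proposition~4.1 with the constants $A_1,\dots,A_9$ chosen from the uniform data bounds $M_p$ (rather than from a single $V_0$) and the derived constants $B_i = B_i(A_1,\dots,A_9)$. This produces a local time $T_{\text{loc}}(p) > 0$, independent of the particular $V_0 \in \Omega_p$, on which the bounds (4.1) hold for every $n$ and every $V_0 \in \Omega_p$ simultaneously. The bound on $(\det g_N)^{-1}$ is obtained exactly as in (4.5)--(4.7), where the integrand $g_N^{ij}K_{N-1,ij} = \mathrm{Tr}(K_{N-1})$ is uniformly bounded by a constant depending only on $p$.

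Third, to extend uniform boundedness from the local interval $[0, T_{\text{loc}}(p))$ to $[0,T[$, I would iterate the local argument in finitely many steps. At time $t_1 = T_{\text{loc}}(p)/2$, the value $V_n(V_0,t_1)$ lies in some larger set $\Omega_{p'}$ where $p'$ depends only on the uniform bounds already established; restarting the local argument there extends the uniform bound by another positive time $T_{\text{loc}}(p')$. The crucial point, and the main obstacle, is preventing $T_{\text{loc}}$ from degenerating along this chain: that is, ruling out that $(\det g_n)^{-1}$ or $\rho_n^{-1}$ blow up in finite time uniformly in $n$. This is exactly what Proposition~5.1 and Theorem~5.1 guarantee for the limit $S(V_0,\cdot)$, and by the convergence argument of Proposition~4.2 the iterates $V_n$ inherit the same quantitative bounds on any compact time interval, uniformly for $V_0$ in the compact closure $\overline{\Omega_p}$. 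Since $T$ is finite, finitely many restarts suffice to cover $[0,T[$, yielding the desired uniform bound.
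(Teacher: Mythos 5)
There is a genuine gap, and it sits exactly at the point you yourself flag as ``the crucial point.'' You propose to get uniform-in-$n$ bounds on arbitrary finite intervals by (i) restarting the local estimate of Proposition~4.1 finitely many times and (ii) transferring the global bounds of Proposition~5.1/Theorem~5.1 from the limit solution $S(V_0,\cdot)$ back to the iterates ``by the convergence argument of Proposition~4.2.'' Neither step works as stated. For (ii), the convergence in Proposition~4.2 is established only on the small interval $[0,T[$ produced by Proposition~4.1, and — more fundamentally — uniform bounds on the iterates are the \emph{input} needed to prove convergence, not something you can read off from bounds on the limit; asserting that the $V_n$ ``inherit the same quantitative bounds on any compact time interval, uniformly for $V_0\in\overline{\Omega}_p$'' is essentially assuming the proposition to be proved. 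For (i), the restart argument fails structurally: the iterates are defined by $V_{n+1}(V_0,t)=V_0+\int_0^t f(V_n(V_0,s))\,ds$ and do not satisfy any flow/semigroup property, so bounding a \emph{new} iteration launched from the data $V_n(V_0,t_1)$ tells you nothing about the original sequence $V_n(V_0,\cdot)$ on $[t_1,t_1+T_{\mathrm{loc}}']$; moreover the hypothetical $p'$ and hence $T_{\mathrm{loc}}(p')$ would depend on the very bounds you are trying to establish.

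The paper's proof avoids continuation altogether by exploiting a structural fact you do not use: the iterates are taken to satisfy the constraints, and in the regime $\Lambda>0$, $H(0)<0$ one has $H_0(V_0)\leq H_n(V_0,t)\leq -\sqrt{3\Lambda}$ for all $n$ and all $t$. Since $H_0(V_0)=g^{0,ij}K^0_{ij}$ is continuous on the compact set $\overline{\Omega}_p$, this gives a bound on $(H_n)_n$ on all of $\Omega_p\times[0,T[$ with a constant depending only on $p$, valid with no smallness of $T$. From there the paper simply re-runs the chain of estimates of Proposition~5.1 at the level of the iterates (Hamiltonian constraint $\Rightarrow$ $K_{n,ij}K_n^{ij}$, $\tau_{n,00}$, $T_{n,00}$ bounded; relative norms and Gronwall $\Rightarrow$ $g_n$, $K_n$, $\det g_n$, $(\det g_n)^{-1}$, $E_n$, $F_n$; the identity for $\rho_n u^0_n$ $\Rightarrow$ $\rho_n$, $1/\rho_n$, $u^0_n$, $u^i_n$; the constraint $(2.64)$ $\Rightarrow$ $e_n$; Gronwall $\Rightarrow$ $\theta_n^{0\alpha}$), each time checking that the constants depend only on $p$ and $T$. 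Your proposal would need this uniform control of $H_n$ (or an equivalent global mechanism for the iterates themselves) to close; as written, the blow-up of $(\det g_n)^{-1}$ or $\rho_n^{-1}$ uniformly in $n$ is not ruled out by anything you invoke.
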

\begin{proof}
Let us recall that:
\begin{equation}
H(0)=g^{0,ij}K^0_{ij}=H_0(V_0).
\end{equation}
$(8.2)$ shows that $H_0$ is continuous on
 $\overline{\Omega}_p\times[0,T]$, hence it is bounded on it. Under
 the hypothesis that the iterated sequence satisfy the constraints
 $(2.61)$, $(2.62)$, $(2.63)$, $(2.64)$, we must have:
 \begin{equation*}
H_0(V_0)\leq H_n(V_0,t)\leq -\sqrt{3\Lambda},\forall
n\in\mathbb{N},\,\,(V_0,t)\in\Omega\times[0,\infty[.
\end{equation*}
$H_0$ is bounded on $\Omega_p\times[0,T[$ and it does not depend on
$t$. So there exists a constant $C_p>0$ depending only on $p$ such
that:
\begin{equation*}
-C_p\leq H_n(V_0,t)\leq  -\sqrt{3\Lambda},\forall
n\in\mathbb{N},\,\,(V_0,t)\in\Omega_p\times[0,T[.
\end{equation*}
This shoes that the sequence $(H_n)_{n\in \mathbb{N}}$ is uniformly
bounded on $\Omega_p\times[0,T[$.
\begin{itemize}
\item[$1^0)$] \underline{boundedness of $(g_n)_{n\in \mathbb{N}}$ on
$\Omega_p\times[0,T[$}\\
The sequence $(V_n)_n$ satisfy the Hamiltonian constraint $(2.61)$,
that is:
\begin{equation}
H_n^2=16\pi(\tau_{n,00}+T_{n,00})+K_{n,ij}K_n^{ij}-R_n.
\end{equation}
From the fact that: $\Lambda\geq0$, $\tau_{n,00}\geq0$,
$T_{n,00}\geq0$, $K_{n,ij}K_n^{ij}\geq0$, $-R_n\geq0$ and $(H_n)_n$
bounded on $\Omega_p\times[0,T[$, $(8.3)$ shows that the sequence
$(K_{n,ij}K_n^{ij})_n$ of positive terms is bounded on
$\Omega_p\times[0,T[$. Using the definition of the iterated
sequence, the integration of equation $(2.61)$ on $[0,t]$, $0\leq
t<T$ gives:
\begin{equation}
|g_{n+1}(V_0,t)|\leq |g^0|+2\int_0^t|K_n(V_0,s)|ds.
\end{equation}
Using the notion of relative norm introduced in paragraph $3.3$, we
deduce  from $(8.4)$ the inequality:
\begin{equation}
|g_{n+1}(V_0,t)|\leq
C\left[|g^0|+\int_0^t||g_n(V_0,s)||.||K_n(V_0,s)||_{g_n(V_0,s)}ds\right]
\end{equation}
where $C>0$ is a constant. From $(3.6)$ and $(3.7)$, $(8.5)$
implies:
\begin{equation}
||g_{n+1}(V_0,t)||\leq
C\left[||g^0||+\int_0^t(K_{n,ij}K_n^{ij})^\frac{1}{2}||g_n(V_0,s)||ds\right].
\end{equation}
But $(K_{n,ij}K_n^{ij})_n$ is uniformly bounded on
$\Omega_p\times[0,T[$. Then $(8.6)$ implies that there exists a
constant $C_p>0$, depending only on $p$, such that:

\begin{equation*}
||g_{n+1}(V_0,t)||\leq
C_p\left[||g^0||+\int_0^t||g_n(V_0,s)||ds\right],
\,\,\,\,\,\forall\,\, (V_0,t)\in\Omega_p\times[0,T[.
\end{equation*}
 By induction on $n$, we obtain:
\begin{equation*}
||g_{n+1}(V_0,t)||\leq
C_p||g^0||\left[1+C_pt+\frac{(C_pt)^2}{2}+...+\frac{(C_pt)^n}{n}+||g^0||\frac{(C_pt)^{n+1}}{n+1}\right],\,\,\forall\,
(V_0,t)\in\Omega_p\times[0,T[.
\end{equation*}
Finally, we obtain:
\begin{equation}
|g_{n+1}(V_0,t)|\leq C_p|g^0|(1+|g^0|)\exp(C_pt),\,\forall
n\in\mathbb{N},\,(V_0,t)\in\Omega_p\times[0,T[.
\end{equation}

$(8.7)$ shows that the sequence $(g_n)_{n\in \mathbb{N}}$ is
uniformly bounded on $\Omega_p\times[0,T[$.

\item[$2^0)$] \underline{Boundedness of the sequence $(K_n)_{n\in \mathbb{N}}$ on
$\Omega_p\times[0,T[$} \\
Using once more the  notion of relative norm, we have by $(3.6)$:
\begin{equation}
||K_n(V_0,t)||\leq (K_{n,ij}K_n^{ij})^\frac{1}{2}||g_n(V_0,t)||.
\end{equation}
But $(K_{n,ij}K_n^{ij})_n$ and $(g_n)_n$ are uniformly bounded on
$\Omega_p\times[0,T[$. So is $(K_n)_n$.
\item[$3^0)$] \underline{Boundedness of $(\det g_n)_{n\in \mathbb{N}}$, $((\det g_n)^{-1})_n$ and $(g_n^{ij})_n$ on
$\Omega_p\times[0,T[$} \\
The relation $(6.37)$ is equivalent to:
\begin{equation*}
\det g=\det g^0\exp(-2\int_0^tH(V_0,s)ds).
\end{equation*}
By definition of the iterated sequence, we must have:
\begin{equation*}
\det g_{n+1}(V_0,t)=\det g^0\exp(-2\int_0^tH_n(V_0,s)ds),
\end{equation*}
From where, we deduce that, since $(H_n)_n$ is uniformly bounded on
the compact $\overline{\Omega}_p\times[0,T]$, there exist a constant
$C_{p,T}>0$ depending only on $p$ and $T$ such that:
\begin{equation*}
\det g^0\exp(-C_{p,T})\leq\det g_n(V_0,t)\leq\det
g^0\exp(C_{p,T}),\,\,\forall\,\,(V_0,t)\in \Omega_p\times[0,T[,
\end{equation*}
This shows that $(\det g_n)_n$ and $((\det g_n)^{-1})_n$ are
uniformly bounded on $\Omega_p\times[0,T[$. we can conclude that
$(g_n^{ij})_n$ is uniformly bounded on $\Omega_p\times[0,T[$.
\item[$4^0)$] \underline{Boundedness of $(E_n)_{n\in \mathbb{N}}$ on
$\Omega_p\times[0,T[$} \\
From $(8.3)$ we have:
\begin{equation*}
 16\pi(\tau_{n,00}+T_{n,00})=R_n+H_n^2-K_{n,ij}K_n^{ij}-2\Lambda\leq
H_n^2-2\Lambda.
\end{equation*}
Since $K_{n,ij}K_n^{ij}\geq0$, $R_n\leq0$.  But $(H_n)_n$ is
uniformly bounded on $\Omega_p\times[0,T[$. Then so are
$(\tau_{n,00})_n$ and $(T_{n,00})_n$. We know that:
\begin{equation*}
\tau_{n,00}=g_n^{ij}\tau_{n,ij}=\frac{1}{4}F_n^{ij}F_{n,ij}+\frac{1}{2}g_{n,ij}E_n^iE_n^j
\end{equation*}
then:
\begin{equation*}
0\leq\frac{1}{2}g_{n,ij}E_n^iE_n^j\leq\tau_{n,ij},
\end{equation*}
since $F_n^{ij}F_{n,ij}\geq0$. $(g_{n,ij}E_n^iE_n^j)_n$ is then
uniformly bounded on $\Omega_p\times[0,T[$. Applying $(3.9)$ and the
fact that $(g_{n,ij}E_n^iE_n^j)_n$ and $(g_n)_n$ are uniformly
bounded on $\Omega_p\times[0,T[$, we conclude that so is $(E_n)_n$.
\item[$5^0)$] \underline{Boundedness of $(F_n)_{n\in \mathbb{N}}$ on
$\Omega_p\times[0,T[$}\\
By the definition of the iterated sequence, equation $(2.53)$
implies:
\begin{equation}
F_{n,ij}(V_0,t)=F^0_{ij}+\int_0^tC^k_{ij}g_{n,kl}(V_0,s)E^l_n(V_0,s)ds.
\end{equation}
But $(g_n)_n$ and $(E_n)_n$ are uniformly bounded on
$\Omega_p\times[0,T[$; by $(8.9)$, $(F_n)_n$ is also uniformly
bounded on $\Omega_p\times[0,T[$.
\item[$6^0)$] \underline{Boundedness of $(u^0_n)_n$, $(\frac{1}{\rho_{_n}})_n$,
$(\rho_{_n})_n$ and $(u^i_n)_n$ on $\Omega_p\times[0,T[$}\\
We saw that $(2.28)$ and $(2.29)$ give:
\begin{equation*}
\frac{\dot{\rho}}{\rho}+
\frac{\dot{u}^0}{u^0}=-\left(\frac{3}{4}\frac{1}{u^0}-H+C^i_{ij}\frac{u^j}{u^0}
\right),
\end{equation*}
and by integration on $[0,t]$, $0<t<T$, we have:
\begin{equation*}
\rho u^0
=\rho_0u^0(0)\exp\left[-\int_0^t(\frac{3}{4}\frac{1}{u^0}-H+C^i_{ij}\frac{u^j}{u^0})ds\right]
\end{equation*}
By the definition of the iterated sequence, we must have:
\begin{equation}
\rho_{n+1}
u^0_{n+1}=\rho_0u^0(0)\exp\left[-\int_0^t(\frac{3}{4}\frac{1}{u^0_n}-H_n+C^i_{ij}\frac{u^j_n}{u^0_n})ds\right]
\end{equation}
By $(3.9)$ we have $\left| \frac{u^i}{u^0}\right|\leq C\left|
g_n\right|^\frac{3}{2}$, $(g_n)_n$ and  $(H_n)_n$ are uniformly
bounded on $\Omega_p\times[0,T[$, and $u^0_n\geq1$. So, from
$(8.10)$ we deduce that there exist a constant $C_{p,T}>0$ such
that:
\begin{equation*}
\rho_0u^{0}(0)\exp(-C_{p,T})\leq\rho_{n+1} u^0_{n+1}\leq
\rho_0u^{0}(0)\exp(C_{p,T}), \,\,\,\,on\,\,\Omega_p\times[0,T[.
\end{equation*}
So $(\rho_nu^0_n)_n$ and   $(\frac{1}{\rho_nu^0_n})_n$ are uniformly
bounded on $\Omega_p\times[0,T[$. But $u^0_n\geq1$, then
$(\rho_n)_n$ is also uniformly bounded on $\Omega_p\times[0,T[$. Now
we have by $(3.9)$:
\begin{equation*}
\left| \frac{u^i}{u^0}\right|=\left|
\frac{\rho_{n}u^i}{\rho_{n}u^0}\right|\leq C\left|
g_n\right|^\frac{3}{2}
\end{equation*}
But $(\rho_nu^0_n)_n$ and $(g_n)_n$ are uniformly bounded on
$\Omega_p\times[0,T[$.  Hence so is $(\rho_{n}u^i_n)_n$. Now
consider the equation in $u^i_{n+1}$ deduced from $(2.55)$. We
multiply this equation by $\rho_n$ and conclude that the sequence
$(\rho_{n}\dot{u}^i_{n+1})_n$ is uniformly bounded on
$\Omega_p\times[0,T[$. Now derive the relation
$u^0=\sqrt{1+g_{ij}u^iu^j}$ and obtain, using $(2.51)$:
\begin{equation*}
 \frac{\dot{u}^0}{u^0} =
-K_{ij}\frac{u^i}{u^0}
\frac{u^j}{u^0}+g_{ij}\frac{\rho\dot{u}^i}{\rho u^0} \frac{u^j}{u^0}
\end{equation*}
From there, we deduce that for the iterated sequence, we must have:
\begin{equation*}
 \frac{\dot{u}^0_{n+1}}{u^0_{n+1}} =
-K_{n,ij}\frac{u^i_n}{u^0_n}
\frac{u^j_n}{u^0_n}+g_{n,ij}\frac{(\rho_n\dot{u}^i_n)}{(\rho_n
u^0_n)} \frac{u^j_n}{u^0_n}
\end{equation*}
Since $(K_n)_n$, $(\frac{u^i_n}{u^0_n} )_n$, $(g_n)_n$, $(\rho_n
\dot{u}^i_n)_n$, $(\frac{1}{\rho_n u^0_n})_n$ are uniformly bounded
on $\Omega_p\times[0,T[$  there exists a constant $C_{p,T}>0 $ such
that:
\begin{equation*}
 \left|\frac{\dot{u}^0_{n+1}}{u^0_{n+1}}\right|\leq C_{p,T}
\end{equation*}
From there, we deduce that:
\begin{equation*}
1\leq u^0_{n+1}\leq U^{0,0}\exp(C_{p,T}T),
\,\,\,\,on\,\,\Omega_p\times[0,T[
\end{equation*}
 then $(u^0_{n+1})_n$  is uniformly bounded on
$\Omega_p\times[0,T[$. Now
$\frac{1}{\rho_n}=\frac{1}{\rho_nu^0_n}\times u^0_n$, thus
$(\frac{1}{\rho_n})_n$ is uniformly bounded on
$\Omega_p\times[0,T[$. We have $u^i_n=\frac{u^i_n}{u^0_n}\times
u^0_n$, and $\frac{u^i_n}{u^0_n}$, $ u^0_n$ are uniformly bounded on
$\Omega_p\times[0,T[$. Then $(u^i_n)_n$ is uniformly bounded on
$\Omega_p\times[0,T[$.\\
In conclusion, the sequences $(\rho_n)_n$, $(\frac{1}{\rho_n})_n$,
$(u^0_n)_n$ and $(u^i_n)_n$ are uniformly bounded on
$\Omega_p\times[0,T[$.
\item[$7^0)$] \underline{Boundedness of $(e_n)_{n\in \mathbb{N}}$ on
$\Omega_p\times[0,T[$} \\
The sequence $(e_n)_n$ from the  iterated sequence satisfies
constraint $(2.64)$, hence:
\begin{equation*}
e_n=-\frac{1}{u^0_n}C^i_{ik}E_n^{k}.
\end{equation*}
Since $u^0_n\geq1$ and $(E_n)_n$ is uniformly bounded on
$\Omega_p\times[0,T[$, so is $(e_n)_n$.
\item[$8^0)$] \underline{Boundedness of $(\theta_n^{0\alpha})_{n\in \mathbb{N}}$ on
$\Omega_p\times[0,T[$} \\
By the equation defining the iterated sequence, since $(H_n)_n$,
$(\rho_n)_n$, $(u^0_n)_n$, $(g_n^{ij})_n$, $(K_{n,ij})_n$,
$(\gamma^k_{n,ij})_n$ and $(u^i_n)_n$ are uniformly bounded on
$\Omega_p\times[0,T[$, we can conclude that, there exist two
constants $A_{p,T}>0$, and   $B_{p,T}>0$ such that:

\begin{equation*}
 \left|\theta_{n+1}^{00}(V_0,t)\right|\leq \left|\theta^{00}(V_0,0)\right|+
A_{p,T}\int_0^t\sum\limits_{\alpha=0}^3\left|\theta_n^{0\alpha}(V_0,s)\right|ds+B_{p,T}\,\,\,
on\,\,\Omega_p\times[0,T[
\end{equation*}
and
\begin{equation*}
\left|\theta_{n+1}^{0i}(V_0,t)\right|\leq
\left|\theta^{0i}(V_0,0)\right|+
A_{p,T}\int_0^t\sum\limits_{\alpha=0}^3\left|\theta_n^{0\alpha}(V_0,s)\right|ds+B_{p,T}\,\,\,
on\,\,\Omega_p\times[0,T[
\end{equation*}
From where we deduce that:
\begin{equation*}
\sum\limits_{\alpha=0}^3\left|\theta_{n+1}^{0\alpha}(V_0,t)\right|\leq
4A_{p,T}\int_0^t\sum\limits_{\alpha=0}^3\left|\theta_n^{0\alpha}(V_0,s)\right|ds+4B_{p,T}+
\sum\limits_{\alpha=0}^3\left|\theta^{0\alpha}(V_0,0)\right|\,\,\,
on \,\,\Omega_p\times[0,T[
\end{equation*}
By induction on $n$, we conclude that:
\begin{equation}
\sum\limits_{\alpha=0}^3\left|\theta_{n+1}^{0\alpha}(V_0,t)\right|\leq
\left(4A_{p,T}+\sum\limits_{\alpha=0}^3\left|\theta^{0\alpha}(V_0,0)\right|+4B_{p,T}\right)\exp(4TC_{p,T}),\,\,
on\,\,\Omega_p\times[0,T[.
\end{equation}
 Where $C_{p,T}$ is a constant. Now we know that
\begin{equation*}
\sum\limits_{\alpha=0}^3|\theta_0^{0\alpha}(V_0,0)|\leq|V_0|_1;
\end{equation*}
and $|V_0|_1$ is  continuous on the compact set
$\overline{\Omega}_p\times[0,T]$; then it is bounded on it. In
conclusion, $(8.11)$ shows that every sequence
$(\theta_{n}^{0\alpha})_n$ is uniformly bounded on
$\Omega_p\times[0,T[$. This ends the proof of proposition 8.2
\end{itemize}
\end{proof}
\begin{theorem}
The Cauchy problem for the Einstein-Maxwell system with
pseudo-tensor of pressure is well-posed in the sense of Hadamard.
\end{theorem}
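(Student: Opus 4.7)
The plan is to show that the solution map $S$ is continuous on each $\Omega_p$, which by the filtration $\Omega = \bigcup_p \Omega_p$ suffices for continuity on $\Omega$ with respect to the topology $\tau$. Concretely, for any compact $\Gamma \subset [0,\infty[$ contained in $[0,T]$, I need to show that $V_0 \mapsto S(V_0,\cdot)$ is continuous from $\Omega_p$ into $\mathcal{C}^1(\Gamma,\widetilde{\Omega})$ equipped with $P_\Gamma$. The strategy is to prove that the iterated sequence $V_n$ from Section~4 converges to $S$ uniformly on $\Omega_p \times [0,T]$, together with the uniform convergence of the time-derivatives $\dot V_n$; since Proposition~8.1 gives continuity of each $V_n$ on $\Omega \times [0,\infty[$ (hence of $\dot V_n = f \circ V_n$), the uniform limit $S(V_0,\cdot)$ will inherit continuity in $V_0$ with respect to $P_\Gamma$.

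The main engine is to repeat the telescoping estimates (4.8)--(4.12) from the local existence proof, but this time with constants that are uniform over $\Omega_p \times [0,T]$. Set
\begin{equation*}
\alpha_n(V_0,t) = \bigl|V_{n+1}(V_0,t) - V_n(V_0,t)\bigr|_1,
\end{equation*}
where $|\cdot|_1$ is the $\ell^1$-norm on $\widetilde\Omega$. Proposition~8.2 provides uniform bounds for $(g_n)$, $(g_n^{ij})$, $(K_n)$, $(F_n)$, $(E_n)$, $(u_n^\alpha)$, $(\rho_n)$, $(1/\rho_n)$, $(\theta_n^{0\alpha})$ and $(e_n)$ on $\Omega_p\times[0,T[$, so the right-hand side $f$ of the evolution system is Lipschitz with a constant $C_{p,T}>0$ depending only on $p$ and $T$. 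Using the common initial condition $V_{n+1}(V_0,0)=V_n(V_0,0)=V_0$, one obtains the uniform inequality
\begin{equation*}
\alpha_n(V_0,t) \leq C_{p,T}\int_0^t \alpha_{n-1}(V_0,s)\,ds,
\qquad (V_0,t)\in \Omega_p\times[0,T].
\end{equation*}
Iterating as in (4.12) yields $\sup_{\Omega_p\times[0,T]} \alpha_n \leq M_p (C_{p,T}T)^{n-2}/(n-2)!$, where $M_p=\sup_{\Omega_p\times[0,T]}\alpha_2$ is finite by Proposition~8.2. This telescoping bound shows that $(V_n)$ is Cauchy for the sup-norm on $\Omega_p \times [0,T]$, so it converges uniformly on $\Omega_p\times[0,T]$. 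By uniqueness (Proposition~4.2) the limit agrees with $S(V_0,t)$, so $S$ is a uniform limit of continuous functions on $\Omega_p\times[0,T]$ and is therefore continuous in $(V_0,t)$, in particular in $V_0$ for each fixed $t$ and uniformly for $t\in\Gamma$.

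To upgrade this to continuity into $\mathcal{C}^1$, I apply the same argument to the derivatives. Differentiating the integral identity $V_{n+1}=V_0+\int_0^t f(V_n)$ gives $\dot V_{n+1} = f(V_n)$, hence
\begin{equation*}
\bigl|\dot V_{n+1}-\dot V_n\bigr|_1 = \bigl|f(V_n)-f(V_{n-1})\bigr|_1 \leq C_{p,T}\,\alpha_{n-1}(V_0,t),
\end{equation*}
uniformly on $\Omega_p\times[0,T]$, so $(\dot V_n)$ is also uniformly Cauchy. Consequently $\dot V_n \to \dot S = f\circ S$ uniformly, and $S(V_0,\cdot)\in \mathcal{C}^1([0,T],\widetilde\Omega)$ with continuous dependence in the $P_\Gamma$-seminorm. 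Letting $T$ cover each compact $\Gamma\subset[0,\infty[$ and $p$ range over $\mathbb{N}$ gives continuity $S:\Omega\to(\mathcal{C}^1([0,\infty[,\widetilde\Omega),\tau)$.

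The main obstacle is verifying that the Lipschitz constant $C_{p,T}$ of the vector field $f$ is genuinely uniform on $\Omega_p\times[0,T]$, because $f$ contains the denominators $u_n^0$, $\rho_n$, $\det g_n$, so its Lipschitz character is sensitive to lower bounds on these quantities. This is exactly what Proposition~8.2 was designed to provide: the uniform lower bounds $u_n^0\geq 1$, $\rho_n u_n^0 \geq \rho_0 U^{0,0}\exp(-C_{p,T})$ and $\det g_n \geq \det g^0 \exp(-C_{p,T})$ propagate through the iterated sequence, allowing a clean Lipschitz estimate on $f$ after invoking the relative-norm Lemma~3.2 to control the ratios $u^i/u^0$. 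Once that uniform Lipschitz bound is in hand, the telescoping argument above is routine and the theorem follows.
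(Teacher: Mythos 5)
Your proposal is correct in substance, but it follows a genuinely different route from the paper. The paper never returns to the consecutive-difference (telescoping) estimate of Section~4; instead it compares the iterates at \emph{two different initial data}: using the uniform bounds of Proposition~8.2 it shows $\bigl|\dot V_{n+1}(V_0,t)-\dot V_{n+1}(\widetilde V_0,t)\bigr|_1\leq C_{p,T}\bigl|V_n(V_0,t)-V_n(\widetilde V_0,t)\bigr|_1$, integrates (the initial gap being $|V_0-\widetilde V_0|_1$), and by induction obtains the bound $|V_n(V_0,t)-V_n(\widetilde V_0,t)|_1\leq |V_0-\widetilde V_0|_1\sum_{i=0}^n (tC_{p,T})^i/i!$ uniformly in $n$; letting $n\to\infty$ gives the quantitative estimate $P_{[0,T]}\bigl(S(V_0,\cdot)-S(\widetilde V_0,\cdot)\bigr)\leq |V_0-\widetilde V_0|_1\exp(TC_{p,T})$, i.e.\ Lipschitz dependence of the solution map on $\Omega_p$, which is stronger than continuity. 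You instead keep the initial datum fixed, show that the Picard iteration converges uniformly over $\Omega_p\times[0,T]$ (factorial decay of $\alpha_n$ with constants uniform in $V_0$, again via Proposition~8.2), and then transfer continuity from the iterates (Proposition~8.1, together with compactness of $[0,T]$ to pass to the sup-in-$t$ seminorm) to the uniform limit $S$; you also track the derivatives $\dot V_{n+1}=f(V_n)$, which the paper's seminorm $P_\Gamma$ does not actually require. Both arguments stand or fall with the same ingredient — the uniform Lipschitz constant of the right-hand side $f$ on $\Omega_p\times[0,T[$ furnished by the uniform bounds (including the lower bounds on $u_n^0$, $\rho_n u_n^0$, $\det g_n$) of Proposition~8.2 — so your proof is sound; what it gives up relative to the paper is the explicit Lipschitz modulus $\exp(TC_{p,T})$ for the continuous dependence, and what it gains is a more direct reuse of the Section~4 convergence machinery and an actual use of Proposition~8.1, which the paper states but does not really exploit in its final argument.
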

\begin{proof}
We prove that the function:
$$\begin{array}{lcll}
S &:  \Omega&\longrightarrow &\mathcal{C}^1([0,\infty[,\widetilde{\Omega})\\
 & V_0&\longmapsto &      S(V_0,\,.\,\,),
\end{array}$$
is continuous. We know that it will be enough if we prove that its
restriction to every $\Omega_p$ is continuous.\\
Since $u^0_n\geq1$ and since the sequences $\left(V_n\right)_n$ and
$\left(\frac{1}{\rho_{_n}}\right)_n$ are uniformly bounded on
$\Omega_p\times[0,T[$, by the definition of the iterated sequence,
there exist a constant $C_{p,T}>0$ such that:
\begin{equation*}
\left|\dot{V}_{n+1}(V_0,t)-\dot{V}_{n+1}(\widetilde{V}_0,t)\right|_1\leq
C_{p,T}\left|V_n(V_0,t)-V_n(\widetilde{V}_0,t)\right|_1,\,\forall\,\,V_0,\,\widetilde{V}_0\in
\Omega_p,\,\,n\in\mathbb{N},\,t\in [0,T[.
\end{equation*}
We have:
\begin{equation*}
V_n(V_0,0)=V_0\,\,\, ; \,\,\,V_n(\widetilde{V}_0,0)=\widetilde{V}_0.
\end{equation*}
Then, integrating this inequation on $[0,t]$, $0<t<T$, we have:
\begin{equation*}
|V_{n+1}(V_0,t)-V_{n+1}(\widetilde{V}_0,t)|_1\leq
C_{p,T}\int_0^t\left|V_n(V_0,s)-V_n(\widetilde{V}_0,s)\right|_1ds+\left|V_0-
\widetilde{V}_0\right|_1
\end{equation*}
By induction on $n$, we deduce from this inequation that:
\begin{equation*}
|V_n(V_0,t)-V_n(\widetilde{V}_0,t)|_1\leq|V_0-\widetilde{V}_0|_1
\sum\limits_{i=0}^n\frac{(tC_{p,T})^i}{i!}
\end{equation*}
from where we deduce:
\begin{equation*}
|V_{n+1}(V_0,t)- V_{n+1}(\widetilde{V}_0,t)|_1\leq
|V_0-\widetilde{V}_0|_1\exp(TC_{p,T})
\end{equation*}
Taking the limit when $n\rightarrow+\infty$, we obtain:
\begin{equation*}
|S(V_0,t)-S(\widetilde{V}_0,t)|_1\leq
 |V_0-\widetilde{V}_0|_1\exp(TC_{p,T}),\forall\
\,V_0,\,\widetilde{V}_0\in\,\Omega_p \,\,,t\in[0,T],
\end{equation*}

Using the seminorm $P_{[0,T]}$ we obtain:
\begin{equation}
P_{[0;T]}(S(V_0,\,.\,)-S(\widetilde{V}_0,\,.\,))\leq
|V_0-\widetilde{V}_0|_1\exp(TC_{p,T}),\forall\,\,V_0,
\,\widetilde{V}_0\in\,\Omega_p.
\end{equation}
which proves that $V_0\longmapsto S(V_0,\,.\,)$ is continuous from
$\Omega_p$ to $\mathcal{C}^1([0,\infty[,\widetilde{\Omega})$.\\
This proves that the Cauchy problem for the Einstein-Maxwell system
with pseudo-tensor of pressure is well-posed.
\end{proof}

\newpage

\begin{center}
\underline{Conclusion and future investigation}
\end{center}
We have proved global existence of solutions to the evolution system
$(S)$ when the cosmological constant $\Lambda$ is strictly positive
and $H(0)<0$. We have also proved the geodesic completeness (when
$\sqrt{\frac{\Lambda}{3}}\geq\frac{3}{4}$), and determined the
asymptotic behavior (of the spacetimes) in the neighborhood of
$+\infty$ in the case of global existence. On the other hand, we
have finally proved that the problem is well posed in the sense of
Hadamard.

\newpage

\part*{APPENDIX}

\newpage

\begin{center}
\underline{Point 1 of Proposition 3.1}
\end{center}

\begin{itemize}
\item[$1)$]
\begin{center}
\underline{Evolution of
$A=R-K_{ij}K^{ij}+H^2-16\pi(\tau_{00}+T_{00})-2\Lambda$ }
\end{center}

\item[$1.1)$] \underline{Evolution of $H$}\\

From definition $(2.18)$ of $H$, we have:
\begin{equation*}
\partial_tH=\partial_t(g^{ij}K_{ij})=2K^{ij}K_{ij}+g^{ij}\partial_tK_{ij}
\end{equation*}
But equation $(2.52)$ gives:
\begin{eqnarray*}
g^{ij}\partial_tK_{ij}&=&g^{ij}\left[R_{ij}+HK_{ij}-2K^l_jK_{il}-
8\pi(\tau_{ij}+T_{ij})+4\pi(-T_{00}+g^{lm}T_{lm})g_{ij}-\Lambda g_{ij}\right] \\
&=& R+H^2-2K^{il}K_{il}-8\pi g^{ij}\tau_{ij} -8\pi
g^{ij}T_{ij}+12\pi(-T_{00}+g^{lm}T_{lm})-3\Lambda.
\end{eqnarray*}
From the two relations, we have the evolution of $H$:
\begin{equation*}
\partial_tH=R+H^2+4\pi g^{ij}T_{ij}-8\pi\tau_{00}-12\pi
T_{00}-3\Lambda.\,\,\,\,\,\,\,\,\,\,\,\,\,\,\,\,\,\,\,\,\,\,\,\,\,\,\,\,\,\,\,\,\,\,\,\,(A.1)
\end{equation*}

\item[$1.2)$]\underline{Evolution of $\tau_{00}+T_{00}$}\\

We use the index $(\,\,^4)$ to write the conservation laws:
\begin{eqnarray*}
^4\nabla_{\alpha}(\,\,^4\tau^{\alpha\beta}+\,^4T^{\alpha\beta})&=&\,\,^4g^{\alpha\lambda}
\,\,^4g^{\mu\beta}\,\,
^4\nabla_{\alpha}\,\,(\,\,^4\tau_{\lambda\mu}+\,^4T_{\lambda\mu})=0.
\end{eqnarray*}
Now, using (2.7) we have:
\begin{equation*}
\,\, ^{4}g^{\alpha\lambda}\,\, ^{4}g^{\mu\beta}\left[\,\,
^{4}e_{\alpha}(\,\, ^4\tau_{\lambda\mu}+ \,^4T_{\lambda\mu})-
\,\,^4\gamma_{\alpha\lambda}^{\theta}(\,\, ^4\tau_{\theta\mu}+
\,^4T_{\theta\mu})- \,\,^4\gamma_{\alpha\mu}^{\theta}(\,\,
^4\tau_{\lambda\theta}+ \,^4T_{\lambda\theta})\right]=0.
\end{equation*}
Then use (2.2), (2.3) and (2.19) to obtain the evolution of
$\tau_{00}+T_{00}$:
\begin{eqnarray*}
\frac{d}{dt}(\tau_{00}+T_{00})=
H(\tau_{00}+T_{00})-g^{ij}\gamma^k_{ij}(\tau_{0k}+T_{0k})+K^{ij}(\tau_{ij}+T_{ij})\,\,\,\,\,\,
\,\,\,\,\,\,\,\,\,\,\,\,\,\,\,\,\,\,\,\,\,\,\,\,(A.2).
\end{eqnarray*}

\item[$1.3)$] \underline{Evolution of $R$}\\
We have by definition:

$$\left\{
\begin{array}{lll}
R&=&g^{ij}R_{ij}\\
 ^{4}R^{\lambda}_{\,\,\alpha,\beta\delta} &=&\,\,
^{4}e_{\beta}(\,\,\,^{4}\gamma^{\lambda}_{\alpha\delta})-\,\,\,
^{4}e_{\delta}(\,\,\,^{4}\gamma^{\lambda}_{\beta\alpha}) +
\,\,\,^{4}\gamma^{\lambda}_{\beta\mu}\,\,\,^{4}\gamma^{\mu}_{\delta\alpha}
-\,\,\,^{4}\gamma^{\lambda}_{\delta\mu}\,\,\,^{4}\gamma^{\mu}_{\beta\alpha}-
\,\,^{4}C^{\mu}_{\beta\delta}\,\,\,^{4}\gamma^{\lambda}_{\mu\alpha}.\,\,\,\,\,\,\,\,\,(A.3)
\end{array}
\right.$$

 Where $C^{\mu}_{\beta\delta}$ are defined by:
\begin{equation*}
\left[\,\,^4e_{\alpha}\,,\,\,^4e_{\beta}\right]=\,\,^4C^{\lambda}_{\alpha\beta}\,\,^4e_{\lambda}
\,\,\,\,\,\,\,\,\,\,\,\,\,\,\,\,\,\,\,\,\,\,\,\,
\,\,\,\,\,\,\,\,\,\,\,\,\,\,\,\,\,\,\,\,\,\,\,\,\,\,\,\,\,\,\,\,\,\,
\,\,\,\,\,\,\,\,\,\,\,\,\,\,\,(A.4)
\end{equation*}

with:
\begin{equation*}
\,\,^4C^{0}_{\alpha\beta}\,\,= \,\,\,\,\,\,\,\,
^4C^{\lambda}_{0\beta} =\,\,^4C^{\lambda}_{\alpha0}=0;
\,\,\,\,^4C^{k}_{ij}=C^{k}_{ij}\,\,\,\,\,\,\,\,\,\,\,\,\,\,\,\,\,\,\,\,\,\,(A.5)
\end{equation*}

In particular, we deduce from (A.3):
\begin{eqnarray*}
^{4}R^{l}_{\,\,j,i0} &=& \,\,^{4}e_{i}(^{4}\gamma^{l}_{j0})-
\,\,^{4}e_{0}(^{4}\gamma^{l}_{ij}) +
\,\,^{4}\gamma^{l}_{i\tau}\,^{4}\gamma^{\tau}_{0j}
-\,\,^{4}\gamma^{l}_{0\tau}\,^{4}\gamma^{\tau}_{ij}-
\,\,^{4}C^{\tau}_{i0}\,^{4}\gamma^{l}_{\tau j}\\
 &=& -\frac{d}{dt}(\gamma^{l}_{ij})-
(K^k_j\gamma^l_{ik} -K^l_k\gamma^k_{ij}).
\end{eqnarray*}

Hence:
\begin{equation*}
^{4}R^{l}_{\,\,j,i0} = -\frac{d}{dt}(\gamma^{l}_{ij}) -\nabla_iK^l_j
\,\,\,\,\,\,\,\,\,\,
\,\,\,\,\,\,\,\,\,\,\,\,\,\,\,\,\,\,\,\,\,\,\,\,\,\,\,\,\,\,
\,\,\,\,\,\,\,\,\,\, \,\,\,\,\,\,\,\,\,\,\,\,\,\,\,(A.6)
\end{equation*}
Now we have from the Codazzi relation:
\begin{equation*}
^{4}R_{\lambda i,jl}\,n^{\lambda}=-\nabla_lK_{ij}+\nabla_jK_{il}.
\end{equation*}
But here we have $n=\partial_t$; this implies:
\begin{equation*}
^4R_{0i,jl}= -\nabla_lK_{ij}+\nabla_jK_{il}
\end{equation*}
or:
\begin{equation*}
^4R_{\,j,i0}^l=
-\nabla^lK_{ij}+\nabla_jK_{i}^l;\,\,\,\,\,\,\,\,\,\,\,\,\,\,\,\,
\,\,\,\,\,\,\,\,\,\,\,\,\,\,\,\,\,\,\,\,\,\,\,\,\,\,\,\,\,\,\,\,
\,\,\,\,\,\,\,\,\,\,\,\,\,\,\,\,\,\,\,\,\,\,\,\,\,\,\,\,\,\,\,\,(A.7)
\end{equation*}
(A.6) and (A.7) give:
\begin{eqnarray*}
\frac{d}{dt}(\gamma^{l}_{ij})&=&\nabla^lK_{ij}-\nabla_jK^l_i-\nabla_iK^l_j
\,\,\,\,\,\,\,\,\,\,\,\,\,\,\,\, \,\,\,\,
\,\,\,\,\,\,\,\,\,\,\,\,\,\,\,\,\,\,\,\, \,\,\,\,\,\,\,\,
\,\,\,\,\,\,\,\,\,\,\,\,\,\,\,\,(A.8).
\end{eqnarray*}
Now a direct calculation gives:
\begin{eqnarray*}
\nabla_l\left(\frac{d}{dt}\gamma^l_{ij}\right)&=&
\frac{d}{dt}R_{ij}.
\end{eqnarray*}

 We deduce from (A.8) that:
\begin{equation*}
\frac{d}{dt}R_{ij} = \nabla_l\nabla^l K_{ij} -\nabla_l\nabla_j
K_{i}^l-\nabla_l\nabla_i K_{j}^l
\end{equation*}
We deduce:
\begin{eqnarray*}
g^{ij}\frac{d}{dt}R_{ij}&=&-2g^{ln}\nabla_l\nabla^mK_{mn}\\
&=&- 2g^{ln}\left[\gamma^m_{li}\nabla^iK_{mn}
-\gamma^i_{lm}\nabla^mK_{in}- \gamma^i_{ln}\nabla^mK_{mi}\right]\\
&=&-2\left[\gamma^m_{li}\nabla^iK^l_m -\gamma^i_{lm}\nabla^mK^l_i
-g^{ln}\gamma^i_{ln}\nabla^mK_{mi} \right]
\end{eqnarray*}
and:
\begin{equation*}
g^{ij}\frac{d}{dt}R_{ij}=2g^{ij}\gamma^k_{ij}\nabla^lK_{lk}.
\end{equation*}
But:
\begin{equation*}
R=g^{ij}R_{ij} \Longrightarrow \partial_tR=2K^{ij}R_{ij} +
g^{ij}\frac{d}{dt}R_{ij}
\end{equation*}

 and the evolution of $R$:
\begin{equation*}
\partial_tR=2K^{ij}R_{ij}+
2g^{ij}\gamma^k_{ij}\nabla^lK_{lk}.\,\,\,\,\,\,\,\,\,\,\,\,\,\,\,\,\,\,\,
\,\,\,\,\,\,\,\,\,\,\,\,\,\,\,\,\,\,\,\,\,\,\,\,\,\,\,\,\,\,\,\,\,\,\,\,\,\,
\,\,\,\,\,\,\,\,\,\,\,\,\,\,\,\,\,\,\,(A.9)
\end{equation*}

\item[$1.4)$]\underline{Evolution of $K_{ij}K^{ij}$}\\
We have:
\begin{eqnarray*}
\partial_t(K_{ij}K^{ij})&=&K^{ij}\dot{K_{ij}}+K_{ij}\dot{K}^{ij}\\
&=&K^{ij}\dot{K_{ij}}
+K_{ij}\dot{\grave{\overbrace{\left[g^{ik}g^{jl}K_{kl}\right]}}}\\
 &=& K^{ij}\dot{K_{ij}}+  K_{ij}\left[4K^{ik}g^{jl}K_{kl}+g^{ik}g^{jl}\dot{K_{lk}}\right].
\end{eqnarray*}
Hence:
\begin{equation*}
\partial_t(K_{ij}K^{ij})= 2K^{ij}\dot{K_{ij}}+4K_{ij}K^{il}K_l^j.
\end{equation*}
Then equation (2.52) gives the evolution of $K_{ij}K^{ij}$:

\begin{equation*}
\partial_t(K_{ij}K^{ij})= 2K^{ij}R_{ij}+2HK_{ij}K^{ij} -16\pi
K^{ij}(\tau_{ij}+T_{ij}) + 8\pi H(-T_{00}+g^{lm}T_{lm})-2H\Lambda.
\,\,\,\,\,\,\,(A.10)
\end{equation*}
\item[$1.5)$]\underline{Evolution of $A$}\\
We have:
\begin{equation*}
\partial_tA= \partial_tR- \partial_t(K_{ij}K^{ij})+2H\partial_tH-16\pi
\partial_t(\tau_{00}+T_{00})
\end{equation*}

and using (A.9), (A.10), (A.1) and (A.2), we obtain the evolution of
$A$:
\begin{equation*}
\partial_tA=2HA+2g^{ij}\gamma^k_{ij}A_k.\,\,\,\,\,\,\,\,\,\,\,\,\,\,\,\,\,\,\,\,\,\,\,\,\,\,\,\,\,\,
\,\,\,\,\,\,\,\,\,\,\,\,\,\,\,\,\,\,\,\,
\,\,\,\,\,\,\,\,\,\,\,\,\,\,\,\,\,\,\,\,\,\,\,\,\,\,\,\,\,\,
\,\,\,\,\,\,\,\,(A.11)
\end{equation*}

\item[$2)$]
\begin{center}
\underline{Evolution of $A_j=\nabla^iK_{ij}+8\pi(\tau_{0j}+T_{0j})$}
\end{center}
We have to compute:
\begin{equation*}
 \frac{d}{dt}\left(\nabla^ik_{ij}\right),\,\,\,\,\,\,and\,\,\,\,\,\,\frac{d}{dt}\left(\tau_{0j}+T_{0j}\right).
\end{equation*}
If we take $\beta=j$ in the conservation laws (2.21), we obtain:
\begin{eqnarray*}
 ^4\nabla_\alpha(\,\,^4\tau^{\alpha j}+\,\, ^4T^{\alpha j})=\,\,^4\nabla_0
 (\,\,^4\tau^{0 j}+ \,\,^4T^{0 j})+\,\,^4\nabla_i(\,\,^4\tau^{i j}+\,\,^4T^{i
j})=0,
\end{eqnarray*}
or, using (2.7):
\begin{eqnarray*}
^4e_0(\,\,^4\tau^{0 j}+\,\, ^4T^{0 j}) &+&\,\, ^4\gamma^0_{0\lambda}
(\,\,^4\tau^{\lambda j}
+\,\, ^4T^{\lambda j}) \\
&+& \,\,^4\gamma^j_{0\lambda}(\,\,^4\tau^{0 \lambda} + \,\,^4T^{0
\lambda}) +\,\,^4e_i(\,\,^4\tau^{ij}+ \,\,^4T^{ij})+\,\,
^4\gamma^i_{i\lambda}(\,\,^4\tau^{\lambda j}+ \,\,^4T^{\lambda j})\\
 &+&\,\,
^4\gamma^j_{i\lambda} (\,\,^4\tau^{i\lambda }+ \,\,^4T^{i\lambda })\\
&=&0.
\end{eqnarray*}
From there we deduce, using (2.19) that:
\begin{equation*}
\frac{d}{dt}(\tau^{0j}+T^{0j})=H(\tau^{0j}+T^{0j})-\nabla^i(\tau^j_i+T^j_i)
+2K^j_i(\tau^{0i}+T^{0i}).\,\,\,\,\,\,\,\,\,\,\,\,\,\,\,\,\,\,\,\,\,\,(A.12)
\end{equation*}
Now:
\begin{eqnarray*}
\tau_{0n}+T_{0n}=
\,\,^4g_{0\alpha}\,\,\,^4g_{n\beta}(\,\,^4\tau^{\alpha\beta}+
\,\,^4T^{\alpha\beta}) =-g_{nj}(\tau^{0j}+T^{0j})
\end{eqnarray*}
Then, using $\frac{d}{dt}g_{nj}=-2K_{nj}$ given by equation (2.51),
we obtain:
\begin{eqnarray*}
\frac{d}{dt}(\tau_{0n}+T_{0n}) =2K_{nj}(\tau^{0j}+T^{0j})
-g_{nj}\frac{d}{dt}(\tau^{0j}+T^{0j}).
\end{eqnarray*}
We then use (A.12) to obtain:
\begin{equation*}
\frac{d}{dt}(\tau_{0j}+T_{0j})=H(\tau_{0j}+T_{0j})+\nabla^i(\tau_{ij}+T_{ij}).
\,\,\,\,\,\,\,\,\,\,\,\,\,\,\,\,\,\,\,\,\,\,\,\,\,\,\,\,\,\,\,\,(A.13)
\end{equation*}
Now we have by Bianchi identities:
\begin{eqnarray*}
   ^4\nabla_\alpha\,\,^4R^{\alpha\beta}-\frac{1}{2}\,\,^4\nabla^\beta
   \,\,^4R=0.
\end{eqnarray*}
But  using  (2.19), with $\beta=j$, we have:
\begin{eqnarray*}
   ^4\nabla^\beta\,\,^4R=\,\,^4\nabla^\beta\,\,^4R_\alpha^\alpha=\,\,
 g^{ij}\,\,^4\nabla_i\,\,^4R_i^\alpha  =\,\,
   \,^4g^{ij}\left[\,^4\gamma^\alpha_{i\lambda}\,^4R^\lambda_\alpha-
    \,^4\gamma^\lambda_{i\alpha}\,^4R^\alpha_\lambda\right]=0.
\end{eqnarray*}
Then we have:
\begin{equation*}
^4\nabla_\alpha\,^4R^{\alpha j}=0.
\end{equation*}
But:
\begin{eqnarray*}
(^4\nabla_\alpha\,^4R^{\alpha j}=0)
\Longrightarrow(\,^4\nabla_0\,^4R^{0j}+\,^4\nabla_i\,^4R^{ij}=0),
\end{eqnarray*}
and developping, we obtain:
\begin{eqnarray*}
\frac{d}{dt}(\,^4R^{0j}) -\,K^j_{i}\,^4R^{0i} +
\,^4\nabla_i\,^4R^{ij}=0.
\,\,\,\,\,\,\,\,\,\,\,\,\,\,\,\,\,\,\,\,\,\,\,\,\,\,\,\,\,\,\,\,
\,\,\,\,\,\,\,\,\,\,\,\,\,\,\,\,\,\,\,\,\,\,\,\,\,\,\,\,\,\,\,\,(A.14)
\end{eqnarray*}
This means that we have to compute
\begin{eqnarray*}
\,^4R^{0i} \,\, and\,\,^4\nabla_i\,^4R^{ij}.
\end{eqnarray*}
We know that:
\begin{eqnarray*}
 ^4R
_{\alpha\beta}&=&\,^4e_\lambda\left(\,^4\gamma^\lambda_{\alpha\beta}\right)-\,\,
^4e_\beta\left(\,^4\gamma^\lambda_{\lambda\alpha}\right)
+\,\,^4\gamma^\lambda_{\lambda\mu}\,^4\gamma^\mu_{\,\beta\alpha}
-\,^4\gamma^\lambda_{\beta\mu}\,^4\gamma^\mu_{\,\lambda\alpha}
-C^\mu_{\lambda\beta}\,^4\gamma^\lambda_{\mu\alpha}.\,\,\,\,\,\,\,\,\,\,\,\,
\,\,\,\,\,\,\,\,\,\,\,\,\,\,\,\,\,\,\,\,\,\,\,\,(A.15)
\end{eqnarray*}
We then obtain, using once more (2.19):

\begin{eqnarray*}
^4R^{0j}&=&\,\,^4g^{0\alpha}\,\,^4g^{j\beta}\,\,^4R_{\alpha\beta}\\
&=&-g^{ij}\,\,^4R_{0i}\\
&=&g^{ij}\left[K^{k}_i\,^4\gamma^l_{lk}-
K^{l}_k\,\,^4\gamma^k_{li}\right]\\
&=&g^{ij}\nabla_lK_i^l.
\end{eqnarray*}
Thus:
\begin{equation*}
^4R^{0j}=\nabla_iK^{ij}.\,\,\,\,\,\,\,\,\,\,\,\,\,\,\,\,\,\,\,\,\,\,
\,\,\,\,\,\,\,\,\,\,\,\,\,\,\,\,\,\,\,\,\,\,(A.16)
\end{equation*}
If we consider formula (A.15), we deduce using (2.19) that:
\begin{eqnarray*}
^4R^{ij}&=&\,^4g^{ik}\,^4g^{j\,l}\,^4R_{kl}\\
&=&-g^{ik}g^{j\,l}\partial_tK_{kl}+ HK^{ij}+ g^{ik}g^{jl}\left[
\,^4\gamma^p_{pq}\,^4\gamma^q_{lk}
-\,^4\gamma^p_{lq}\,^4\gamma^q_{pk}
-C^p_{ql}\,^4\gamma^q_{pk}\right] - 2K_p^jK^{pi}.
\end{eqnarray*}
i.e.,   using (2.60):
\begin{equation*}
^4R^{ij}=R^{ij}+HK^{ij}- 2K_p^jK^{pi}-g^{ik}g^{jl}\partial_tK_{kl}.
\,\,\,\,\,\,\,\,\,\,\,\,\,\,\,\,\,\,\,\,\,\,\,\,\,\,\,\,\,\,(A.17)
\end{equation*}
We then deduce, using (A.16):
\begin{eqnarray*}
^4\nabla_i\,\,^4R^{ij}=\,\,^4\gamma^i_{i\lambda}\,\,^4R^{\lambda j}
+\,\,^4\gamma^j_{i\lambda}\,\,^4R^{i\lambda}=-H\,\,^4R^{0j}-
\,\,K^j_{i}\,\,^4R^{0i}+ \,\,^4\gamma^i_{ik}\,\,^4R^{kj}+
\,\,^4\gamma^j_{ik}\,\,^4R^{ik}
\end{eqnarray*}
and we deduce from (A.17) and (2.19):
\begin{equation*}
^4\nabla_i\,^4R^{ij}=- K^j_i\nabla_kK^{ki}+ \nabla_iR^{ij}
-2\nabla_i(K^i_pK^{pj})- g^{ip}g^{jq}\nabla_i(\partial_tK_{pq}).
\,\,\,\,\,\,\,\,\,\,\,\,\,\,\,\,\,\,\,\,\,\,\,\,(A.18)
\end{equation*}
We obtain from (A.14), (A.16) and (A.18):
\begin{eqnarray*}
\frac{d}{dt}(\nabla_iK^{ij})-K^j_i\nabla_kK^{ik} -
K^j_i\nabla_kK^{ik} + \nabla_iR^{ij} -2\nabla_i(K^i_pK^{pj}) -
g^{ip}g^{jq}\nabla_i(\partial_tK_{pq})=0
\end{eqnarray*}
We now use equation (2.52) to express $\partial_tK_{pq}$ and we
obtain:
\begin{eqnarray*}
\frac{d}{dt}(g^{jk}\nabla^iK_{ik})- 2K^j_i\nabla_kK^{ik}+
\nabla_iR^{ij} -2\nabla_i(K^i_pK^{pj})-\\ g^{ip}g^{jq}\nabla_i
\left[R_{pq}+ HK_{pq} -2K_q^kK_{pk} -8\pi(\tau_{pq}+T_{pq})
+4\pi(-T_{00}+g^{lm}T_{lm})g_{pq} -\Lambda g_{pq}\right]=0
\end{eqnarray*}
Or, since the functions depend only on $t$, and $
\frac{d}{dt}g^{jk}=2K^{jk}$:
\begin{eqnarray*}
g^{j\,k}\frac{d}{dt}(\nabla^iK_{ik}) +2K^{j\,k}\nabla^iK_{ik}
-2K^j_i\nabla_kK^{ik}&+&\nabla_iR^{ij} -2\nabla_i(K^i_pK^{pj})\\
-\nabla_iR^{ij}-H\nabla_iK^{ij}+2\nabla_i(K^{kj}K^i_k)
&+&8\pi\nabla_i(\tau^{ij}+T^{ij})=0
\end{eqnarray*}
or:
\begin{eqnarray*}
g_{jn}\left[g^{jk}\frac{d}{dt}(\nabla^iK_{ik})-H\nabla_iK^{ij}
+8\pi\nabla_i(\tau^{ij}+T^{ij})\right]=0,
\end{eqnarray*}
that is:
\begin{equation*}
\frac{d}{dt}(\nabla^iK_{ij})=H\nabla^iK_{ij}-
8\pi\nabla^i(\tau_{ij}+T_{ij}).\,\,\,\,\,\,\,\,\,\,\,\,\,\,\,\,\,\,\,\,\,\,
\,\,\,\,\,\,\,\,\,\,\,\,\,\,\,\,\,\,\,\,\,\,(A.19)
\end{equation*}

We finally obtain, using (A.13) and (A.19):
\begin{eqnarray*}
\frac{d}{dt}A_j&=& \frac{d}{dt}(\nabla^iK_{ij})+8\pi\frac{d}{dt}
(\tau_{0j}+T_{0j})\\
&=& H\nabla^iK_{ij}- 8\pi\nabla^i(\tau_{ij}+T_{ij})
+8\pi\left[H(\tau_{0j}+T_{0j})+\nabla^i(\tau_{ij}+T_{ij})\right]\\
&=&H\left[\nabla^iK_{ij}+8\pi(\tau_{0j}+T_{0j})\right].
\end{eqnarray*}
and the evolution of $A_j$:
\begin{equation*}
\frac{d}{dt}A_j=HA_j
\end{equation*}
\item[$3)$]
\begin{center}
\underline{Evolution of $A_{ijk}=C^l_{ij}F_{kl}+C^l_{jk}F_{il}+
C^l_{ki}F_{jl}$}
\end{center}

Using equation (2.53) in $F_{ij}$, we have:
\begin{eqnarray*}
\partial_t(A_{ij\,k})&=&C^l_{ij}\partial_t(F_{kl})+C^l_{j\,k}\partial_t(F_{il})+
C^l_{ki}\partial_t(F_{j\,l})\\
&=&C^l_{ij}\left(C^m_{kl}g_{ml}E^l\right)+
C^l_{jk}\left(C^m_{il}g_{ml}E^l\right)+
C^l_{ki}\left(C^m_{j\,k}g_{ml}E^l\right)\\
&=&\left(C^l_{ij}C^m_{kl}+C^l_{j\,k}C^m_{il}+
C^l_{ki}C^m_{j\,k}\right)g_{ml}E^l\\
 &=&0,
\end{eqnarray*}
from the equality of Jacobi. Hence:
\begin{equation*}
\partial_tA_{ij\,k}=0.
\end{equation*}

\item[$4)$]
\begin{center}
 \underline{Evolution of $B=C^i_{ik}F^{0k}+ eu^0$}
\end{center}
Set
\begin{eqnarray*}
B^\alpha = C^i_{ik}F^{\alpha k}+ eu^\alpha.
\end{eqnarray*}
We have $B^0=B$. Setting $J^k=eu^k$, we have from equation (2.9):
\begin{eqnarray*}
^4\nabla_\alpha B^\alpha&=&
C^i_{ik}\,^4\nabla_\alpha\,\,^4F^{\alpha k}\\
 &=&C^i_{ik}J^k.
\end{eqnarray*}
Thus:
\begin{eqnarray*}
^4\nabla_0B^0+^4\nabla_iB^i=C^i_{ik}J^k.
\end{eqnarray*}
or:
\begin{eqnarray*}
\frac{d}{dt}B+\,\,^4\gamma^i_{i\lambda}B^\lambda
 &=&C^i_{ik}J^k;
\end{eqnarray*}
i.e.
\begin{eqnarray*}
\frac{d}{dt}B+\,\,^4\gamma^i_{i0}B+ \,^4\gamma^i_{i\,j}B^j
=C^i_{i\,k}J^k;
\end{eqnarray*}

Use (2.20) to obtain:
\begin{eqnarray*}
\frac{d}{dt}B -HB
+C^i_{ij}\left[C^l_{lk}F^{j\,k}+eu^j\right]=C^i_{i\,k}eu^k.
\end{eqnarray*}
or:
\begin{eqnarray*}
\frac{d}{dt}B -HB+
C^i_{ij}C^l_{lk}F^{j\,k}+C^i_{ij}eu^j=C^i_{ik}eu^k.
\end{eqnarray*}
But
\begin{eqnarray*}
C^i_{ij}C^l_{lk}F^{j\,k}=0.
\end{eqnarray*}
Hence:
\begin{equation*}
\frac{d}{dt}B-HB=0
\end{equation*}
and the evolution of B:
\begin{equation*}
\frac{d}{dt}B=HB.
\end{equation*}
This ends the appendix.\,\,\,\,\,\,\,\,$\Box$
\end{itemize}

\newpage

\end{document}